\newcommand{\avec}{{\bf{a}}}
\newcommand{\bvec}{{\bf{b}}}
\newcommand{\dvec}{{\bf{d}}}
\newcommand{\evec}{{\bf{e}}}
\newcommand{\yvec}{{\bf{y}}}
\newcommand{\uvec}{{\bf{u}}}
\newcommand{\wvec}{{\bf{w}}}
\newcommand{\xvec}{{\bf{x}}}
\newcommand{\zvec}{{\bf{z}}}
\newcommand{\rvec}{{\bf{r}}}
\newcommand{\vvec}{{\bf{v}}}
\newcommand{\gvec}{{\bf{g}}}
\newcommand{\onevec}{{\bf{1}}}
\newcommand{\zerovec}{{\bf{0}}}
\newcommand{\Amat}{{\bf{A}}}
\newcommand{\Bmat}{{\bf{B}}}
\newcommand{\Dmat}{{\bf{D}}}
\newcommand{\Emat}{{\bf{E}}}
\newcommand{\Hmat}{{\bf{H}}}
\newcommand{\Imat}{{\bf{I}}}
\newcommand{\Lmat}{{\bf{L}}}
\newcommand{\Kmat}{{\bf{K}}}
\newcommand{\Pmat}{{\bf{P}}}
\newcommand{\Qmat}{{\bf{Q}}}
\newcommand{\Smat}{{\bf{S}}}
\newcommand{\Rmat}{{\bf{R}}}
\newcommand{\Umat}{{\bf{U}}}
\newcommand{\Vmat}{{\bf{V}}}
\newcommand{\Wmat}{{\bf{W}}}
\newcommand{\Xmat}{{\bf{X}}}
\newcommand{\Ymat}{{\bf{Y}}}
\newcommand{\define}{\stackrel{\triangle}{=}}
\newcommand{\be}{\begin{equation}}
\newcommand{\ee}{\end{equation}}
\newcommand{\beqna}{\begin{eqnarray}}
\newcommand{\eeqna}{\end{eqnarray}}
\acrodef{admm}[ADMM]{alternating
direction method of multiplier}
\acrodef{ac}[AC]{alternating current}
\acrodef{arma}[ARMA]{auto-regressive moving average}
\acrodef{bcrb}[bCRB]{biased CRB}
\acrodef{bmse}[BMSE]{Bayesian MSE}
\acrodef{cfar}[CFAR]{constant false alarm rate}
\acrodef{crb}[CRB]{ Cram$\acute{\text{e}}$r-Rao  bound\acroextra{-}}
\acrodef{cs}[cs]{compressed sensing}
\acrodef{dau}[DAU]{deep algorithm unrolling}
\acrodef{dc}[DC]{direct current}
\acrodef{dnn}[DNN]{deep neural network}
\acrodef{dsp}[DSP]{digital signal processing}
\acrodef{dl}[DL]{Deep Learning}
\acrodef{ems}[EMS]{energy management system}
\acrodef{ekf}[EKF]{extended Kalman filter}
\acrodef{evd}[EVD]{eigenvalue decomposition}
\acrodef{fim}[FIM]{Fisher information matrix}
\acrodef{gans}[GANs]{generative
adversarial networks}
\acrodef{gso}[GSO]{Graph shift operator}
\acrodef{gsp}[GSP]{graph signal processing}
\acrodef{gnn}[GNNs]{graph neural networks}
\acrodef{glrt}[GLRT]{generalized likelihood ratio test}
\acrodef{gso}[GSO]{graph shift operator}
\acrodef{gft}[GFT]{graph Fourier transform}
\acrodef{gfr}[GFR]{graph filter regularized}
\acrodef{gmrf}[GMRF]{Gaussian Markov random field}
\acrodef{gcs}[GCS]{graph convolutional skip}
\acrodef{gls}[GLS]{generalized least squares}
\acrodef{gcn}[GCNs]{graph convolutional networks}
\acrodef{gwkf}[GW-KF]{graph weights Kalman filter}
\acrodef{gf}[GF]{Graph-Filtered}
\acrodef{gd}[GD]{gradient descent}
\acrodef{gtkf}[GT-KF]{graph topology Kalman filter}
\acrodef{hpf}[graph HPF]{high-pass graph filter}
\acrodef{iid}[i.i.d.]{independent and identically distributed}
\acrodef{ista}[ISTA]{iterative soft thresholding algorithm}
\acrodef{kkt}[KKT]{Karush-Kuhn-Tucker}
\acrodef{kf}[KF]{Kalman Filter}
\acrodef{lrt}[LRT]{likelihood ratio test}
\acrodef{lista}[LISTA]{learned \ac{ista}}
\acrodef{lmmse}[LMMSE]{linear minimum mean-squared error}
\acrodef{lpf}[graph LPF]{low-pass graph filter}
\acrodef{map}[MAP]{maximum a-posteriori probability}
\acrodef{ml}[ML]{maximum likelihood}
\acrodef{mse}[MSE]{mean-squared-error}
\acrodef{mmse}[MMSE]{minimum MSE}
\acrodef{nns}[NNs]{neural networks}
\acrodef{pdf}[PDF]{probability density function}
\acrodef{pf}[PF]{power flow}
\acrodef{pfa}[PFA]{probability of false alarm}
\acrodef{psd}[PSD]{positive semi-definite}
\acrodef{psse}[PSSE]{power system state estimation}
\acrodef{pmu}[PMUs]{phasor measurement units}
\acrodef{pc}[PC]{point cloud}
\acrodef{pca}[PCA]{principal component analysis}
\acrodef{pgd}[PGD]{projected gradient descent}
\acrodef{roc}[ROC]{Receiver Operating Characteristic}
\acrodef{snr}[SNR]{signal-to-noise ratio}
\acrodef{svd}[SVD]{singular value decomposition}
\acrodef{ssm}[SSM]{state-space model}
\acrodef{tv}[TV]{total variation}
\acrodef{wcmse}[WC-MSE]{worst-case MSE}
\acrodef{wcbmse}[WC-BMSE]{worst-case BMSE}
\acrodef{wls}[WLS]{weighted least squares}
\acrodef{wsn}[WSN]{wireless sensor network}
\acrodef{wrt}[w.r.t.]{with respect to}
\newcommand{\Mmat}{\mathbf{M}}
\patchcmd{\CatchFBT@Fin@l}{\endlinechar\m@ne}{}%
  {}{\typeout{CFBT patch failed!}}%
\newtheorem{theorem}{Theorem}
\newtheorem{lemma}{Lemma}
\newtheorem{proposition}{Proposition}
\newtheorem{claim}{Claim}
 \newtheorem{definition}{Definition}
\newcommand{\marksnippet}[1]{%
  \begingroup
    \edef\snip@col{\if@twocolumn \if@firstcolumn left\else right\fi \else single\fi}%
    \zref@setcurrent{col}{\snip@col}%
    \zlabel{snip:#1}%
  \endgroup
}
\title {Efficient Sampling Allocation Strategies for General Graph-Filter-Based Signal Recovery}
\author{Lital Dabush \IEEEmembership{Student Member, IEEE} and
Tirza Routtenberg, \IEEEmembership{Senior Member, IEEE}
\thanks{{\footnotesize{Lital Dabush and Tirza Routtenberg are with the School of Electrical and Computer Engineering, Ben-Gurion University of the Negev, Beer-Sheva 84105, Israel, e-mail: litaldab@post.bgu.ac.il, tirzar@bgu.ac.il.}}}
\thanks{This research was supported by the ISRAEL SCIENCE FOUNDATION (Grant No. 1148/22) and by the Israel Ministry of National Infrastructure and Energy.  L. Dabush is a fellow of the AdR Women Doctoral Program.
 }
 \vspace{-0.5cm}
}
\begin{document}	
	 \maketitle

\begin{abstract}
Sensor placement plays a crucial role in graph signal recovery in underdetermined systems. 
In this paper, we present the graph-filtered regularized maximum likelihood (GFR-ML) estimator of graph signals, which integrates general graph filtering with regularization to enhance signal recovery performance under a limited number of sensors.  
Then, we investigate task-based sampling allocation aimed at minimizing the mean squared error (MSE) of the GFR-ML estimator by wisely choosing sensor placement.  Since this MSE depends on the unknown graph signals to be estimated, we propose four cost functions for the optimization of the sampling allocation: the biased Cram$\acute{\text{e}}$r-Rao bound (bCRB), the worst-case MSE (WC-MSE), the Bayesian MSE (BMSE), and the worst-case BMSE (WC-BMSE), where the last two assume a Gaussian 
prior. We investigate the properties of these cost functions and develop two algorithms for their practical implementation: 1) the straightforward greedy algorithm;
and 2) the alternating projection gradient descent (PGD) algorithm that reduces the computational complexity. 
Simulation results on synthetic and real-world datasets of the IEEE 118-bus power system and the Minnesota road network demonstrate that, in the tested scenarios,  the proposed sampling allocation methods reduce the MSE by up to $50\%$ compared to the common sampling methods A-design, E-design, and LR-design. 
Thus, the proposed methods improve the estimation performance and reduce the required number of measurements in graph signal processing (GSP)-based signal recovery in the case of underdetermined systems.
\end{abstract}

\begin{IEEEkeywords}
Regularized estimation\textcolor{black}{,} graph signal processing (GSP)\textcolor{black}{,} graph filters\textcolor{black}{,} network observability\textcolor{black}{,}
 sensor allocation
\end{IEEEkeywords}

\vspace{-0.25cm}
\section{Introduction}
Modern systems, \textcolor{black}{such as power or transportation networks,} that are naturally modeled as graph signals \cite{Newman_2010}, often generate \textcolor{black}{high-dimensional} signals supported on irregular structures. 
Recovering these signals from partial, corrupted, or noisy observations \textcolor{black}{is a fundamental task in \ac{gsp}}, with applications in  sensor network alignment \cite{dabush2023state,danilo_sampling},  time synchronization in distributed networks \cite{singer2011three,4177758}, and power system monitoring
\cite{drayer2020detection,dabush2023state,dabush2023verifying}. 
The performance of graph signal recovery depends heavily on which nodes are sampled  \cite{zhao2014identification}. 
{\textcolor{black}{As datasets grow in size and complexity,}} efficient sampling strategies {\textcolor{black}{that select informative subsets of nodes become}} crucial for reducing sensing, computation, and storage overhead \cite{Shuman_Ortega_2013,Sandryhaila2014}. 
{\textcolor{black}{In this context,  efficient \ac{gsp}-based task-driven sampling strategies under general filtering and regularization models are essential for scalable and accurate graph signal recovery.}}

\vspace{-0.25cm}
\subsection{\textcolor{black}{Related Works}}\label{sub_related_works}
The sampling and recovery of graph signals has gained growing interest in recent years \cite{anis2016efficient,Wang_2015,chen2015discrete,Wang_2016,Marques_2016}. 
Sampling methods in \ac{gsp} can be generally divided into random and deterministic approaches \cite{Tanaka2020Gene}.
In random sampling, nodes (vertices) are selected randomly according to a 
probability distribution, often designed to prioritize ``important”/``central" nodes \cite{Dinesh2022,perraudin2018global,pc_sampling,Random_sampling_bandlimited_signals}. 
 These methods are computationally efficient and easily implemented in a distributed manner, but often require many samples in order to achieve reconstruction quality comparable to that of deterministic sampling methods, even for bandlimited signals \cite{Tanaka2020Gene}.
{\textcolor{black}{Deterministic sampling methods, on the other hand, apply task-dependent criteria to select optimal sampling locations,  and typically yield better performance.}} As signal recovery is usually an ill-posed inverse problem, regularized or constrained optimization are commonly used to incorporate additional information and improve performance 
\cite{Tanaka2020Gene,
dabush2023state,Shuman_Chebyshev_Frossard_2011,ortega2022introduction,Wang_2015,Chen_2015_distributed,Wang_2016,Marques_2016,Segarra_2016}.

Most existing sampling methods for graph signals rely on specific structural assumptions, such as bandlimitedness or smoothness, to enable recovery. For the common assumed case of bandlimitedness of the graph signals, perfect recovery {\textcolor{black}{can be guaranteed under suitable}} sampling design regimes \cite{anis2016efficient,chen2015discrete,danilo_sampling}. Various sampling policies have been  developed  for this case, including A-optimality  (A-design)  \cite{anis2016efficient}, {\textcolor{black}{ E-optimality (E-design)}}~\cite{chen2015discrete},
fast distributed algorithms \cite{Wang_2015,Chen_2015_distributed}, local weighted measurements \cite{Wang_2016}, local aggregation \cite{Marques_2016}, and percolation from seeding nodes \cite{Segarra_2016}.
However, these methods are {\textcolor{black}{based on a strictly bandlimitedness assumption,}} which restricts their applicability in practical applications.
Smoothness priors offer a more flexible alternative {\textcolor{black}{ by assuming that signals vary only gradually between adjacent nodes. This has led to recovery methods based on }}Laplacian regularization and associated sampling strategies\cite{Shuman_Ortega_2013,Tanaka2020Gene,dabush2023state,Shuman_Chebyshev_Frossard_2011,ortega2022introduction}.
For example, a Laplacian-regularized sampling allocation (LR-design) 
was proposed in \cite{Bai2020} to minimize the largest eigenvalue of the estimator's gain matrix. 
Nonetheless, smoothness still imposes restrictive conditions that may not capture signal variability in practice~ \cite{chen2015discrete,Isufi2022sampling}.
{\textcolor{black}{Both bandlimitedness and smoothness can be }}
considered as special cases of graph filtering models for the regularization. 

\marksnippet{related_works1}
 {\textcolor{black}{Recent works consider more advanced settings, including randomized local aggregations for improved robustness to noise without requiring signal support knowledge \cite{sampling_randomized_aggregations_2019}, successive aggregation strategies based on the graph fractional Fourier transform \cite{WEI2023103970}, and multi-dimensional generalized sampling schemes that exploit subspace and smoothness priors \cite{WEI2024109601}. Sampling models in generalized shift-invariant spaces have also been developed for analog sparse signal domains \cite{Compressed_Sampling_FrFT_2021}. The generalized sampling framework in \cite{Tanaka_Eldar_2020} unifies subspace and smoothness priors without requiring strict bandlimitedness. }}
 Dictionary learning of sparse graph signals was proposed in~\cite{Isufi2022sampling}, but this method requires large training datasets. Other sampling techniques assume general measurement models and \ac{crb}-based sampling
~\cite{4663892,6981988,9967316}. Nevertheless, these methods do not incorporate \ac{gsp}  tools, such as graph filters or common Laplacian-based regularization. {\textcolor{black}{The recent developments motivate the need for flexible sampling methods that can handle various generalized filtering models and graph regularization within a unified framework for sampling design.}}

\vspace{-0.25cm}
\subsection{\textcolor{black}{Contributions}}\label{sub_contributions}
\vspace{-0.05cm}
In this paper, we develop a general framework for task-based sampling for the estimation of graph signals   {\textcolor{black}{that incorporates generalized graph filters in both the observation and the regularization models.}} First, we introduce the \ac{gfr}-\ac{ml} estimator, which integrates general graph filtering as a regularization for graph signal recovery. 
In order to optimize the sampling scheme, the  \ac{mse} of the \ac{gfr}-\ac{ml} estimator is a natural choice for the objective function. However, we show that this \ac{mse} is a function of the unknown parameters,  making direct optimization impractical. Thus, we develop sampling allocation strategies based on four cost functions designed to optimize estimation performance: 
(i) the non-Bayesian \ac{bcrb} (with the bias of the \ac{gfr}-\ac{ml} estimator); (ii) the \ac{wcmse}; (iii) the \ac{bmse}; (iv) and the \ac{wcbmse}.
To minimize these cost functions, we propose two practical algorithms: (i) a greedy {\textcolor{black}{heuristic}} algorithm  {\textcolor{black}{based on marginal gain}};
and (ii) an alternating \ac{pgd} algorithm based on the convex relaxation of the sampling allocation problem.
For the alternating \ac{pgd} algorithm, we derive the gradient expressions of all cost functions for general graph filters {\textcolor{black}{enabling computationally efficient and scalable implementation for sampling set selection}}. 
Simulation results on synthetic and real-world datasets, including the IEEE 118-bus power system \cite{iEEEdata} and the Minnesota road network \cite{perraudin2014gspbox}, demonstrate that the proposed sampling allocation methods outperform the A-design \cite{anis2016efficient}, E-design \cite{chen2015discrete}, and LR-design \cite{Bai2020} sampling strategies in sense of \ac{mse} of the resulting graph signal recovery. 

\vspace{-0.2cm}
\subsection{\textcolor{black}{Organization and Notations}}
\label{org_and_not_subsection}
\vspace{-0.05cm}
The remainder of the paper is organized as follows. In Section \ref{GSP_sec}, we provide background on \ac{gsp}. Section \ref{sec;model_est_prob} describes the model, the estimation approach, and the considered sampling allocation problem. In Section \ref{Sec;proposed_approach}, we propose and analyze different cost functions for the sampling approach: the \ac{bcrb}, \ac{wcmse}, \ac{bmse}, and \ac{wcbmse} 
approaches. In Section \ref{solvers_subsection}, we present a greedy algorithm and an alternating \ac{pgd} algorithm for efficient sensor location selection. 
Section \ref{sec;sim} presents our simulation study. Finally, the conclusions are provided in Section \ref{Conclusions}.

In the rest of this paper, vectors and matrices are denoted by boldface lowercase and uppercase letters, respectively. 
The notations $(\cdot)^T$, $(\cdot)^{-1}$, $(\cdot)^{\dagger}$, 
and ${\text{tr}}(\cdot)$ denote the transpose, inverse, Moore-Penrose pseudo-inverse, and trace operators,
respectively. 
The $m$th element of the vector $\avec$  and the $(m, q)$th element of the matrix $\Amat$ are denoted by $a_m$ and $A_{m,q}$, respectively. The parameters $\lambda_i$, $\lambda_{\text{min}}(\Amat)$ and $\lambda_{\text{max}}(\Amat)$ denote the $i$th, the minimum and the maximum eigenvalues of the matrix $\Amat$, respectively. 
  The gradient of a scalar function $g(\avec) \in \mathbb{R}$ \ac{wrt} the vector $\avec \in \mathbb{R}^{M \times 1}$ is denoted by $\nabla_{\avec} g(\avec) \in \mathbb{R}^{M \times 1}$. The Jacobian of a vector function $\gvec(\avec) \in \mathbb{R}^{K \times 1}$ \ac{wrt} $\avec$ is denoted by $\nabla_{\avec} \gvec(\avec) \in \mathbb{R}^{K \times M}$, where each entry is defined as $\big[\nabla_{\avec} \gvec(\avec)\big]_{m,k}=\frac{\partial \mathrm{g}_{m}(\avec)}{\partial a_{k}}$.
  $\Imat$, and
 $\onevec$ and $\zerovec$ denote the identity matrix, and
 vectors of ones and zeros, respectively, with appropriate dimensions, and  
 $||\cdot||$ denotes the Euclidean $l_2$-norm of a vector. 
For a vector $\avec$, $\mathcal{P}_{\mathcal{C}}(\avec)$ denotes the  projection of $\avec$ onto the set $\mathcal{C}$, and ${\text{diag}}(\avec)$ is a diagonal matrix whose $(m,m)$th entry is $a_m$.

\section{Background: Graph Signal Processing (GSP)}
\label{GSP_sec}
Let ${\mathcal{G}}({\mathcal{V}},\xi)$ be a general undirected  weighted graph,
where ${\mathcal{V}}=\{1,\ldots,N\}$ and ${\xi}$ are the sets of nodes and edges, respectively.
The matrix $\Wmat\in{\mathbb{R}}^{N\times N}$ is the weighted adjacency matrix of the graph $\mathcal{G}({\mathcal{V}},\xi)$, where $W_{k,n}\geq 0$ denotes the weight of the edge between node $k$ and node $n$, and 
$W_{k,n}= 0$ if no  edge exists between $k$ and $n$.
\marksnippet{L_def1}
The \textcolor{black}{$(k,l)$th element of the} graph Laplacian matrix \textcolor{black}{is defined as 
\begin{equation}
\label{Def_lap_eq}
L_{k,l} = \begin{cases}
	\displaystyle \sum_{n=1}^N  W_{k,n},& k = l \\
  \displaystyle  -W_{k,l},& {\text{otherwise}} 
\end{cases},~~~k,l=1,\ldots,N.
\end{equation}}
\textcolor{black}{The Laplacian matrix}
is a real positive semi-definite matrix with the  \ac{evd} defined as 
\be
\Lmat=\Vmat{{\text{diag}}}(\pmb{\lambda}) \Vmat^{-1},   \label{SVD_new_eq}
\ee
where the columns of $\Vmat$ are the eigenvectors of $\Lmat$, $\Vmat^{T}=\Vmat^{-1}$, and $\pmb{\lambda} \in {\mathbb{R}}^{N}$ is a vector of the ordered eigenvalues of $\Lmat$ in  decreasing order.
We assume that ${\mathcal{G}}({\mathcal{V}},\xi)$ is a connected graph, and thus, $\lambda_2\neq 0$ \cite{Newman_2010}.
By analogy to the  frequency of signals in \ac{dsp},
 the Laplacian eigenvalues,
$\lambda_1,\ldots,\lambda_N$, can be interpreted as the graph frequencies.
Together with the eigenvectors in
$\Vmat$, they define the spectrum of the graph 
\cite{Shuman_Ortega_2013}.

A graph signal is a function that resides on a graph, assigning a scalar value to each node.
The \ac{gft} of a graph signal $\avec\in{\mathbb{R}}^N$  \ac{wrt} the graph ${\mathcal{G}}({\mathcal{V}},\xi)$ is defined as  \cite{Shuman_Ortega_2013,Sandryhaila2014}
 \begin{equation}
\label{GFT}
\tilde{\avec}\triangleq \Vmat^{-1}\avec.  
 \end{equation}
  Similarly,  the inverse \ac{gft} is obtained by a left multiplication of $\tilde{\avec}$ by $\Vmat$.
The \ac{tv} of a graph signal $\avec$ 
satisfies
\begin{equation}
\label{eq:Dirichlet energy}
{\avec^T}\Lmat\avec 
=%
\frac{1}{2} \mathop \sum _{k=1}^N\sum_{n=1}^N W_{k,n}\big(a_k - a_n\big)^2
=\sum_{n=1}^N\lambda_n\tilde{a}_n^2,
\end{equation} where  the first equality is obtained by substituting  \eqref{Def_lap_eq}, and the second equality is obtained 
by substituting \eqref{SVD_new_eq} and \eqref{GFT}.
 
The \ac{tv} from \eqref{eq:Dirichlet energy}  
is a smoothness measure, which is used in graphs to quantify changes \ac{wrt} the variability that is encoded by the weights of the graph \cite{Shuman_Ortega_2013,Chen_Kovacevic_2015}.
A graph signal, $\avec$, is smooth if $\avec^T\Lmat\avec\leq\varepsilon$, 
where $\varepsilon$ is small in terms of the specific application  \cite{Shuman_Ortega_2013}.
Thus, the smoothness assumption implies that neighboring nodes have similar values, 
and the graph signal spectrum is in the small eigenvalues region (see \eqref{eq:Dirichlet energy}).

Linear and shift-invariant graph filters play essential roles in \ac{gsp}.
These filters generalize linear time-invariant filters 
used 
in \ac{dsp}, and enable processing over graphs \cite{8347162,Shuman_Ortega_2013}.
A Laplacian-based graph filter can be defined in the graph frequency domain as  a function $h(\cdot)$ that  allows an \ac{evd}  \cite{8347162}:
\marksnippet{eig_filter1}
\begin{equation}\label{graph_filter} h(\Lmat)\hspace{-0.05cm}\define\hspace{-0.05cm}\Vmat{{\text{diag}}}(h(\pmb{\lambda}))\Vmat^{-1},~h(\pmb{\lambda})=[h(\lambda_1),\dots,h(\lambda_{N})]^{\textcolor{black}{T}}, \hspace{-0.05cm}\end{equation}
  where   $h(\lambda_n)$ is the graph  filter  frequency response  at the graph frequency $\lambda_n$, $n=1,\ldots,N$.
 The graph filter frequency response should be identical for all equal eigenvalues (see, e.g. \cite{ortega2022introduction}, Chapter 3).
A graph filter applied on a graph signal is a linear operator   that satisfies the following:
\begin{equation}
\label{a_out_a_in}
   \avec^{(\rm{out})}=h(\Lmat)\avec^{(\rm{in})},
\end{equation}
where $ \avec^{(\rm{out})}$ and $\avec^{(\rm{in})}$ are the output and input graph signals. 
Following \ac{dsp} conventions, \acp{lpf} are filters that do not significantly affect the frequency content of low-frequency signals but attenuate the magnitude of high-frequency signals. Analogously, \acp{hpf} pass high-frequency signals while attenuating low frequencies \cite{Sandryhaila2014}.

\section{Model, estimators, and problem formulation}\label{sec;model_est_prob}
In this section, we describe the sampling allocation problem in \ac{gsp}-based models with the goal of enhancing estimation performance.
First, 
we introduce the measurement model in Subsection \ref{subsec;model}. Then,
we derive the \ac{gfr}-\ac{ml} estimator associated with this model in Subsection \ref{subsec;estimator}.
Finally, we formulate the sampling allocation problem associated with the estimation performance in Subsection \ref{subsec;problemF}. 

\vspace{-0.2cm}
\subsection{GSP Measurement Model}\label{subsec;model}
\vspace{-0.05cm}
We consider a linear graph filtering  model:
\begin{equation}
\yvec=h_{\text{M}}(\Lmat){\xvec}+\evec,
\label{original_model}
\end{equation}
where $\yvec\in\mathbb{R}^N$ and $\xvec\in\mathbb{R}^N$ are the output and input graph signals, respectively. The vector $\evec\in {\mathbb{R}}^N$  
represents a zero-mean Gaussian  noise signal 
with covariance matrix  $\Rmat$, 
 i.e. $\evec \sim \mathcal{N}(\zerovec, \Rmat)$.
 \marksnippet{define_h_m1}
The graph filter, $h_{\text{M}}(\Lmat)\in {\mathbb{R}}^{N\times N}$, \textcolor{black}{where the subscript $M$ indicates that this graph filter models the measurement process,} is assumed to be known along with the graph topology represented by $\Lmat$. 
The input graph signal, $\xvec$, is unknown and needs to be estimated.
The model in \eqref{original_model} is well-established in \ac{gsp} and has been utilized  in various applications \cite{water_gsp,1583238, Sandryhaila2014,dong2020graph,Tanaka2020Gene,Vassilis_2016}.  
This model effectively captures how signals propagate over network structures, leveraging graph filters to represent signal behavior dictated by underlying graph topologies \cite{Shuman_Ortega_2013}.

In practice, resource constraints such as budget, energy, and maintenance often limit the number of sensors deployed in large networks \cite{water_gsp,Tanaka2020Gene,zhao2014identification};  this is since sensor deployment incurs significant costs, including initial investment and ongoing expenses for data transmission and power supply. 
To model this situation, let $\mathcal{S} \subseteq \mathcal{V}$ denote the subset of nodes selected for sensor deployment (sampling set), where $\mathcal{V}$ is the set of all nodes with $|\mathcal{V}| = N$. 
The measurement model from \eqref{original_model} under this setting of partial observations is 
\be
\label{reduced_model_new}
    \yvec_{\mathcal{S}} =  \left[h_{\text{M}}(\Lmat)\right]_{\mathcal{S},\mathcal{V}} \xvec +  \evec_{\mathcal{S}},
\ee
where $\yvec_{\mathcal{S}} \in \mathbb{R}^{|\mathcal{S}|}$ represents the observed measurements at the sampled nodes, and $\evec_{\mathcal{S}} \in \mathbb{R}^{|\mathcal{S}|}$ is the corresponding noise vector. The notation $[h_{\text{M}}(\Lmat)]_{\mathcal{S},\mathcal{V}}$ denotes the sub-matrix of $h_{\text{M}}(\Lmat)$  containing the rows indexed by $\mathcal{S}$ that  correspond to the sampled nodes.
An alternative formulation uses a sampling indicator vector $\dvec\in \{0,1\}^N$, where  $d_n = 1$ if the $n$th node is sampled and $d_n = 0$ otherwise. Then, the partial measurement model from \eqref{reduced_model_new}
can be written using $\Dmat = {\text{diag}}(\dvec)$ as
\begin{equation}
    \Dmat \yvec = \Dmat h_{\text{M}}(\Lmat) \xvec + \Dmat \evec.
    \label{reduced_model_diag}
\end{equation}

For the system in \eqref{reduced_model_new} (or, equivalently,  \eqref{reduced_model_diag}) to 
yield a unique solution for $\xvec$ in the least squares sense even in the noiseless scenario, 
\marksnippet{column_rank_Cond1}
the matrix $\left[h_{\text{M}}(\Lmat)\right]_{\mathcal{S},\mathcal{V}}$ ($\Dmat h_{\text{M}}(\Lmat)$) must be full \textcolor{black}{column} rank. 
 Otherwise, the system is underdetermined, and additional information is needed to obtain a unique solution for $\xvec$. 
As mentioned above, a common approach in \ac{gsp} to address this issue is to assume that the graph signals are smooth or have low \ac{tv}. 
To extend  this assumption to other types of graph signals beyond low-pass or smooth signals,
\marksnippet{definition_hr1}
we introduce a constraint \ac{wrt} a general  semi-definite graph filter 
$h_{\text{R}}^{+}(\Lmat)\in \mathbb{R}^{N \times N}$,  \textcolor{black}{ where the subscript $R$ indicates its role in regularization, and the superscript  `+' denotes that the filter is required to be a positive semi-definite matrix.} 
That is,  
we assume  
 \begin{equation}
\label{smothness_full_theta}
{\mathcal{E}}_{\Lmat}(\xvec)=
{(\xvec-\xvec_0)^T}h_{\text{R}}^{+}(\Lmat)(\xvec-\xvec_0)\leq\varepsilon,
 \end{equation}
where $\varepsilon$ is a tolerance parameter, and
$\xvec_0\in \mathbb{R}^N$ is a reference signal. 

Unlike prior formulations that assumed $\xvec_0 = \zerovec$ (see, e.g.~\cite{Random_sampling_bandlimited_signals} and p.~132 of~\cite{ortega2022introduction}), our framework allows a general reference point $\xvec_0$. This enables the regularization to capture prior information centered at an arbitrary location, rather than implicitly constraining $\xvec$ to lie near the kernel of $h_{\text{R}}^{+}(\Lmat)$. This flexibility is especially important when $h_{\text{R}}^{+}(\Lmat)$ is full rank, since in that case the kernel includes only the zero vector. Additionally, while previous works focused on \acp{hpf} as regularizers, our approach accommodates arbitrary graph filters for $h_{\text{R}}^{+}(\Lmat)$ for regularization, broadening the applicability of the regularization framework.

\marksnippet{r2_notations_generalize_graph_filter_regularization1}
The assumption embodied in~\eqref{smothness_full_theta} encompasses common special cases: 
\color{black}
\subsubsection{Special Case 1 - Smooth Graph Signal Estimation}\label{sc_smooth}
\color{black}
By  defining 
$h_{\text{R}}^{+}(\Lmat)=\Lmat$ \textcolor{black}{ and setting $\xvec_0$ to lie in the kernel of $h_{\text{R}}^{+}(\Lmat)$, the regularization in \eqref{smothness_full_theta} reduces to the classical smoothness condition $\xvec^T\Lmat\xvec\leq\varepsilon$.} This smoothness assumption is widely used in various applications (see, e.g.,  \cite{ortega2022introduction,Tanaka2020Gene,Shuman_Chebyshev_Frossard_2011,dabush2023state}), and is especially beneficial for denoising and reconstruction tasks \cite{ortega2022introduction,Shuman_Ortega_2013,Random_sampling_bandlimited_signals,Shuman_Chebyshev_Frossard_2011}. 
In semi-supervised learning, for example, $\xvec$ represents label scores, $\Lmat$ encodes data similarity, and the smooth regularization enables smooth propagation of labels across the graph, thereby improving classification performance~\cite{Shuman_Ortega_2013,8347162}.

 \marksnippet{special-cases-bandlimited-setting1}
\subsubsection{\textcolor{black}{Special Case 2 - Strictly Graph-Bandlimited Signal Estimation}}\label{special_case_bandlited}
 Bandlimited graph signals confined to a subset of graph frequencies $\mathcal{R}\subset\mathcal{V}$, can be modeled 
\color{black}
 by specifying the regularizer $h_{\text{R}}^{+}(\Lmat)$, prior mean $\xvec_0$, and tolerance parameter $\varepsilon$, as follows:
\begin{equation}
\label{special_case}
h_{\text{R}}^{+}(\lambda_i)=0, \forall i\in\mathcal{R},~~h_{\text{R}}^{+}(\Lmat)\xvec_0=\zerovec,~~\varepsilon=0.
\end{equation}
Under these settings, 
the constraint in \eqref{smothness_full_theta} eliminates the estimator's graph frequencies outside $\mathcal{R}$, i.e.  $\hat{\tilde{\xvec}}_{\mathcal{V}\setminus\mathcal{R}}=\zerovec$. 
\color{black}
{\textcolor{black}{In sections \ref{ss;bandlimited_estimation} and \ref{appendix_relation_bandlimited} of the supplementary material attached to this paper, it is shown that the parameter settings in \eqref{special_case}  leads to the widely-use approaches for sampling and recovery of bandlimited graph signals in \ac{gsp}; In particular, for $h_{\text{M}(\Lmat)}=\Imat$ the estimation and sampling coincide with those in }} \cite{Sandryhaila2014,smoothVSbandlimited,reg_formulation}. 


These priors capture domain knowledge in diverse applications. For example, 
in image processing, $\xvec$ may denote pixel intensities, while and $\Lmat$ is defined by the pixel adjacency. 
In power systems, where $\xvec$ represents the states (voltages) and $\Lmat$ corresponds to the admittance matrix, the prior  in
\eqref{smothness_full_theta}  imposes smoothness  
with $\xvec_0$ as a reference voltage and $h_{\text{R}}^{+}(\Lmat)$ is any \ac{hpf}  \cite{drayer2020detection,dabush2023state}. 
 Similarly, in sensor networks,  small variations \ac{wrt} the graph can be exploited for fault detection by applying a \ac{hpf} to the 
 graph signals \cite{Sandryhaila2014}.

 Another interpretation of the prior in \eqref{smothness_full_theta} is from a Bayesian estimation perspective, where the input graph signal, $\xvec$,  is random and has a 
 Gaussian distribution \cite{Dong_Vandergheynst_2016,ramezani2019graph}:
\begin{equation} \label{x_distribution}
		\xvec \sim \mathcal{N}(\xvec_0,
  \frac{1}{\mu}(h_{\text{R}}^{+}(\Lmat))^{\dagger}),
\end{equation}
where $\mu$ scales the variance of the signal,  
\color{black}
and $\xvec_0$ is not composed from vectors in the  kernel of $h_{\text{R}}^+(\Lmat)$. 
\color{black}
The log-likelihood of this prior (up to a constant) is $\mu{(\xvec-\xvec_0)^T}h_{\text{R}}^{+}(\Lmat)(\xvec-\xvec_0)$, which can be seen as a regularization.

\subsection{GFR-ML Estimator}
\label{subsec;estimator}
To incorporate the prior in \eqref{smothness_full_theta} into the estimation process, we formulate an optimization problem that combines the likelihood function of the measurement vector from \eqref{reduced_model_diag} with the constraint from \eqref{smothness_full_theta}.
The resulting estimator is obtained by solving the following optimization problem:
\begin{equation}
    \hat{\xvec} = \arg \min_{\xvec\in{\mathbb{R}}^N}  \left\| \Dmat (\yvec - h_{\text{M}}(\Lmat) \xvec) \right\|_{\Rmat^{-1}}^2
    + \mu\left\|\xvec- \xvec_0\right\|_{h_{\text{R}}^{+}(\Lmat)}^2,
    \label{eq:optimization_problem}
\end{equation}
where the operator $\left\| \zvec \right\|_{\Amat}^2 \define \zvec^T \Amat \zvec$ denotes the quadratic form of the matrix $\Amat$, and $\mu > 0$ is a regularization parameter that balances the data fidelity term with the prior. \textcolor{black}{ The regularization parameter $\mu$ can be explicitly related to the constraint threshold $\varepsilon$ in~\eqref{smothness_full_theta} via the \ac{kkt} conditions. Specifically, $\mu$ and $\varepsilon$ are inversely related: a larger $\mu$ corresponds to a smaller $\varepsilon$, with $\varepsilon = 0$ implying $\mu \to \infty$. 
}

Solving the optimization problem in \eqref{eq:optimization_problem} leads to a regularized \ac{ml} estimator, named here the \ac{gfr}-\ac{ml} estimator, for the recovery of $\xvec$. Since the objective in \eqref{eq:optimization_problem} is a convex function of $\xvec$, by equating the derivative of \eqref{eq:optimization_problem} \ac{wrt} $\xvec$ to zero to find the extremum (see, e.g.  p. 17 in \cite{vanwieringen2020lecture}), we obtain
\beqna
\label{estimator}
\hat{\xvec}=\Kmat^{-1}(\dvec)({h}_{\text{M}}(\Lmat)\Dmat\Rmat^{-1}\Dmat\yvec+\mu{h}_{\text{R}}^{+}(\Lmat)\xvec_0),
\eeqna
where 
\be
\label{estimator_matrix}
\Kmat(\dvec)
\define {h}_{\text{M}}(\Lmat)\Dmat\Rmat^{-1}\Dmat{h}_{\text{M}}(\Lmat)+\mu{h}_{\text{R}}^{+}(\Lmat).
\ee 
It should also be noted that for the Bayesian perspective described in \eqref{x_distribution}, the estimator in \eqref{estimator}-\eqref{estimator_matrix} is also the \ac{mmse} estimator.

The inclusion of $\mu h_{\text{R}}^{+}(\Lmat)$ 
allows $\Kmat(\dvec)$ to be positive definite and invertible, even in  underdetermined systems, where $|\mathcal{S}| < N$
 and  ${h}_{\text{M}}(\Lmat)\Dmat\Rmat^{-1}\Dmat{h}_{\text{M}}(\Lmat)$ is a rank-deficient matrix. This regularization leads to the possibility of a unique solution and incorporates prior knowledge into the estimation process. The \ac{gfr}-\ac{ml} estimator uses graph filters, given by ${h}_{\text{M}}$ and ${h}_{\text{R}}^{+}$, in both the measurement model and the regularization term. Unlike estimators that rely solely on Laplacian-based regularization, it employs the flexibility of general graph filters via $h_{\text{R}}^{+}(\Lmat)$, enabling adaptation to specific structural and spectral properties, such as bandlimitedness and task-specific frequency weighting.

Furthermore, as both the measurement model $h_{\text{M}}(\Lmat)$ and the regularization term $h_{\text{R}}^{+}(\Lmat)$ are graph filters, these filters can be efficiently approximated or implemented as finite impulse response (FIR) graph filters. Specifically, when $h_{\text{M}}(\Lmat)$ and $h_{\text{R}}^{+}(\Lmat)$ are modeled as polynomial graph filters \cite{ortega2022introduction}, i.e. $h_{\text{M}}(\Lmat) = \sum_{k=0}^{K_{\text{M}}} a_k^{(\text{M})} \Lmat^k$ and $h_{\text{R}}^{+}(\Lmat) = \sum_{k=0}^{K_{\text{R}}} a_k^{(\text{R})} \Lmat^k$, \textcolor{black}{where $\{a_k^{(\text{M})}\}_{k=0}^{K_{\text{M}}}$ and  $\{a_k^{(\text{R})}\}_{k=0}^{K_{\text{R}}}$ are the coefficients of $h_{\text{M}}(\Lmat)$ and $h_{\text{R}}^{+}(\Lmat)$,  respectively,} the matrix-vector multiplications required for the estimator involve only localized computations, where each node processes information from its immediate neighbors. In order to increase the efficiency, Chebyshev polynomial approximations \cite{Shuman_Chebyshev_Frossard_2011} can be employed to reduce computational complexity.  These properties facilitate scalable, distributed implementations, making the \ac{gfr}-\ac{ml} estimator particularly suitable for large-scale networks.


\subsection{Problem Formulation - Sampling Allocation}\label{subsec;problemF}
The sensor locations selected in the sampling step 
have a significant impact on the estimation performance in various applications (see, e.g. \cite{zhao2014identification,wsn_data}).  
We assume a constrained amount of sensing resources, $\sum_n d_n=q$, e.g. due to limited energy and communication budget. This requirement can be rewritten as the constraint $\|\dvec\|^2=q$. Thus, the sampling task can be written as
\beqna\label{sampling_statment}
\dvec^{opt}=\arg\min_{\dvec\in\{0,1\}^N:\|\dvec\|^2=q}C(\dvec),
\eeqna
where $C(\dvec)$ denotes a general cost function that is associated with the estimation performance.
Alternatively, if the number of sensors to be deployed, i.e. $q$, is unknown,  we can reformulate the sensor location selection problem to minimize the number of nonzero entries in 
$\dvec$ rather than fixing $q$ as in \eqref{sampling_statment}. This approach yields the number of deployed  sensors as a byproduct, as follows:
\begin{align}\label{eq_l0}
    \dvec^{opt} &= \arg\min_{\dvec\in\{0,1\}^N} \|\dvec\|^2  \\
    \text{s.t.} & \quad C(\dvec)\leq \varepsilon.\nonumber 
\end{align}

As our goal is to minimize the \ac{mse} of the \ac{gfr}-\ac{ml} estimator, the natural choice for 
$C(\dvec)$  is the \ac{mse} of this non-Bayesian estimator, $\hat{\xvec}$, given by
\beqna
\label{MSE1}
{\text{MSE}}(\hat{\xvec})={\rm{E}}[(\hat{\xvec}-\xvec)^T(\hat{\xvec}-\xvec);\xvec],
\eeqna
where 
${\rm{E}}[\cdot;\xvec]$ denotes the expectation  parametrized by  the deterministic parameter vector, $\xvec$. 
The estimation error vector of the \ac{gfr}-\ac{ml} estimator, $\hat{\xvec}-\xvec$, is obtained by  substituting \eqref{reduced_model_diag}, \eqref{estimator},  and \eqref{estimator_matrix}, and using the identity \[\Imat-(\Amat+\Bmat)^{-1}\Amat=(\Amat+\Bmat)^{-1}\Bmat \] with $\Amat={h}_{\text{M}}(\Lmat)\Dmat\Rmat^{-1}\Dmat{h}_{\text{M}}(\Lmat)$  and $\Bmat=\mu{h}_{\text{R}}^{+}(\Lmat)$, which results in 
\beqna
\label{error_vec}
    \hat{\xvec}-\xvec=\Kmat^{-1}(\dvec)
({h}_{\text{M}}(\Lmat)\Dmat\Rmat^{-1}\Dmat\evec+{h}_{\text{R}}^{+}(\Lmat)(\xvec_0-\xvec)).
\eeqna
By substituting \eqref{error_vec} in \eqref{MSE1} and using the fact that 
the covariance of $\Dmat\evec$ is $\Dmat \Rmat \Dmat$, we obtain that the 
 \ac{mse} of the \ac{gfr}-\ac{ml} estimator is 
\beqna\label{mse_general}
\mathrm{E}[(\hat{\xvec}-\xvec)^T(\hat{\xvec}-\xvec);\xvec]=\mu^2\left\|\Kmat^{-1}(\dvec)h_{\text{R}}^{+}(\Lmat)(\xvec- \xvec_0)\right\|^2\hspace{0.05cm}\nonumber\\+{\text{tr}}\Big(\Kmat^{-1}(\dvec)
{h}_{\text{M}}(\Lmat)\Dmat\Rmat^{-1}\Dmat{h}_{\text{M}}(\Lmat)\Kmat^{-1}(\dvec)\Big).
\eeqna
\marksnippet{r1c4-dependecy-on-x1}
{
{\textcolor{black}{In the general case,}} the \ac{mse} of the \ac{gfr}-\ac{ml} estimator in \eqref{mse_general} is a function of the unknown input graph signal, $\xvec$.
This dependency arises from the absence of an assumed prior distribution for \( \xvec \), which precludes averaging the \ac{mse} over a known distribution. As a result, the bias term in the \ac{mse} expression (first term on the r.h.s. of \eqref{mse_general}) remains signal-dependent.
 Since sensor placement is typically determined at the deployment stage and cannot adapt to each realization of \( \xvec \), optimizing based on such a signal-dependent \ac{mse} is impractical {\textcolor{black}{in the general case}}, and thus, it cannot be used directly as the cost function \( C(\dvec) \). Hence,  alternative cost functions must be considered.

 \color{black}
 Nevertheless, there exist special cases where the dependency on \( \xvec \) vanishes.
 For example, in the case of strictly bandlimited graph signals discussed in Subsection~\ref{special_case_bandlited}, where \( \tilde{\xvec}_{V \setminus R} = \zerovec \) and \( h_{\text{R}}^{+}(\Lmat)(\xvec - \xvec_0) = \zerovec \), the \ac{mse} in \eqref{mse_general} becomes independent of \( \xvec \), as shown in the supplementary material.
  However, while enforcing such hard constraints enables tractable optimization, it limits robustness to model mismatch. Thus, even in these settings, cost functions other than the \ac{mse}  may offer improved generalization.
\color{black}
Therefore, {\color{black}to enable a general and robust sampling framework that accommodates arbitrary regularization (including biased estimators) and is resilient to model mismatch}, we propose alternative cost functions that approximate, or upper-bound, the \ac{mse} while avoiding explicit dependence on the unknown signal \( \xvec \).
}

\section{Proposed sampling allocation}\label{Sec;proposed_approach}
In this section, we introduce four possible cost functions $C(\dvec)$ for the sampling allocation scheme:
 the \ac{bcrb} (Subsection \ref{subsec;biased_crb}),  \ac{wcmse}  (Subsection \ref{subsec;worstMSE}), \ac{bmse}  (Subsection  \ref{subsec;BayesianMSE}),  and  \ac{wcbmse} (Subsection \ref{subsec;worstBayesianMSE}). General remarks on the relations between these cost functions are provided in Subsection \ref{subsec;remarks}.

\vspace{-0.25cm}
\subsection{Cost Function 1: bCRB}\label{subsec;biased_crb}
The first cost function is based on replacing the \ac{mse} from \eqref{mse_general} with the \ac{bcrb}~\cite{Kayestimation}. The \ac{bcrb} provides a lower bound on the \ac{mse} for any estimator with a given bias function\cite{Kayestimation}.
In our case, the distribution of the partial measurement vector obtained from the sensor subset $\mathcal{S}$, as described in \eqref{reduced_model_diag},  is as follows:
\beqna\label{pdf}
\Dmat\yvec\sim\mathcal{N}(\Dmat{h}_{\text{M}}(\Lmat)\xvec,\Rmat).
\eeqna
The \ac{bcrb} on the trace of the \ac{mse} for this Gaussian model
is given by
(see, e.g. pp. 45-46 in \cite{Kayestimation})
\beqna\label{bias_CRB}
 {\text{bCRB}}({\dvec})\define
 {\text{tr}}\Big((\Imat+\nabla_{\xvec} \bvec(\xvec,\dvec))\hspace{3cm}\nonumber\\
\times({h}_{\text{M}}(\Lmat)\Dmat\Rmat^{-1}\Dmat{h}_{\text{M}}(\Lmat))^\dagger(\Imat+\nabla_{\xvec} \bvec(\xvec,\dvec))^T\Big),
\eeqna
where
$\nabla_{\xvec} \bvec(\xvec,\dvec) \in \mathbb{R}^{N\times N}$ is the Jacobian matrix of the estimator bias,  defined as $\bvec(\xvec,\dvec)\define {\rm{E}}[\hat{\xvec}-\xvec]$.
Note that for $|\mathcal{S}|<N$, $\Dmat$ is not a full rank matrix. Consequently, the \ac{fim} is also not full rank, as it results from the multiplication of singular matrices. The use of the 
pseudo-inverse in \eqref{bias_CRB} enables the the option of 
 a singular \ac{fim} \cite{Hero1996Usman}. 
 
By using the model in \eqref{reduced_model_diag} and the estimator in \eqref{estimator}, we obtain that
the bias of the \ac{gfr}-\ac{ml} estimator 
 is
\begin{equation}\label{bias}
\bvec(\xvec,\dvec)=\Kmat^{-1}(\dvec)({h}_{\text{M}}(\Lmat)\Dmat\Rmat^{-1}\Dmat{h}_{\text{M}}(\Lmat){\xvec}+\mu{h}_{\text{R}}^{+}(\Lmat)\xvec_0)-{\xvec}.
\end{equation}
Thus, the gradient of the bias from \eqref{bias} \ac{wrt} $\xvec$ is
\beqna\label{bias_grad}
\nabla_{\xvec} \bvec(\xvec,\dvec)=\Kmat^{-1}(\dvec){h}_{\text{M}}(\Lmat)\Dmat\Rmat^{-1}\Dmat{h}_{\text{M}}(\Lmat)-\Imat.
\eeqna
By substituting  \eqref{bias_grad} in \eqref{bias_CRB}, and using the pseudo-inverse property $\Amat=\Amat\Amat^{\dagger}\Amat$, we obtain that the \ac{bcrb} on the \ac{mse}  of estimators with the \ac{gfr}-\ac{ml} bias  is given by
 \begin{equation}\label{CRB}
    {\text{bCRB}}({\dvec})= {\text{tr}}\Big(\Kmat^{-1}(\dvec)
{h}_{\text{M}}(\Lmat)\Dmat\Rmat^{-1}\Dmat{h}_{\text{M}}(\Lmat)\Kmat^{-1}(\dvec)\Big).
\end{equation}
It can be seen that for the special case of 
$\xvec = \xvec_0$, the \ac{mse} of the \ac{gfr}-\ac{ml} estimator from \eqref{mse_general} coincides with the \ac{bcrb} given in \eqref{CRB}. For the special case \textcolor{black}{of smooth graph signal estimation from Subsection \ref{sc_smooth} with } $h_{\text{M}}(\Lmat)=\Lmat$, 
the sampling scheme from our previous work \cite{dabush2023state} is obtained. 

\vspace{-0.2cm}
\subsection{Cost Function 2: WC-MSE}\label{subsec;worstMSE}
\vspace{-0.05cm}
As an alternative way to address the dependency of the \ac{mse} in \eqref{mse_general} on the unknown parameter $\xvec$, the following cost function is  proposed; it is based on a worst-case bias at  $\xvec$ that lies in the unit ball centered at $\xvec_0$, 
 $\mathcal{B}(\xvec_0)\define\{\forall \xvec\in\mathbb{R}^N~|~||\xvec-\xvec_0|| \leq 1\}$,  in a way that is 
similar to the rationale in \cite{eldar2004minimum}. 
We define the associated \ac{wcmse} of $\hat{\xvec}$ as
\beqna\label{worst_mse}
\text{MSE}_{WC}(\dvec) \define \max_{\xvec\in \mathcal{B}(\xvec_0)} \mathrm{E}[(\hat{\xvec}-\xvec)^T(\hat{\xvec}-\xvec)] \hspace{1.7cm}\nonumber\\
={\text{bCRB}}({\dvec})+\max_{\xvec\in \mathcal{B}(\xvec_0)} \mu^2\left\|\Kmat^{-1}(\dvec)h_{\text{R}}^{+}(\Lmat)(\xvec- \xvec_0)\right\|^2
\nonumber\\
 ={\text{bCRB}}({\dvec})
+ \mu^2 \lambda_{\text{max}}\Big({h}_{\text{R}}^{+}(\Lmat)\Kmat^{-2}(\dvec){h}_{\text{R}}^{+}(\Lmat)\Big),\hspace{0.9cm}
\nonumber\\
 ={\text{bCRB}}({\dvec})
+ \mu^2 \sigma^2_{\text{max}}\Big(\Kmat^{-1}(\dvec){h}_{\text{R}}^{+}(\Lmat)\Big),\hspace{1.9cm}
\eeqna
where the second equality is obtained by substituting \eqref{mse_general}  and \eqref{CRB},   the third equality is obtained by using the Rayleigh quotient theorem (\hspace{1sp}{\cite{Horn2012}}, pp. 234-235), and the last equality follows from the spectral norm identity 
$\lambda_{\text{max}}(\Amat^T\Amat)=\sigma_{\text{max}}^2(\Amat)$, 
with $\sigma_{\text{max}}(\cdot)$ denoting the largest singular value 
and $\Amat=\Kmat^{-1}(\dvec)h_{\text{R}}^{+}(\Lmat)$ (\hspace{1sp}{\cite{Horn2012}}, p. 346). 
The first term in \eqref{worst_mse} captures the contribution of the measurement noise, while the second term,  which is the spectral norm of $h_{\text{R}}^{+}(\Lmat)\Kmat^{-2}(\dvec, \mu)h_{\text{R}}^{+}(\Lmat)$ (\hspace{1sp}{\cite{Horn2012}}, p. 346), quantifies the worst-case impact of the regularization. 
The resulting \ac{wcmse} provides a robust metric for estimation under unfavorable conditions. The constraint $\xvec \in \mathcal{B}(\xvec_0)$ bounds 
deviations from the reference signal $\xvec_0$, while focusing on worst-case directional impacts. 
When $\xvec_0$ is reliable, this constraint models uncertainty within a meaningful range.

\vspace{-0.2cm}
\subsection{Cost Function 3: BMSE}\label{subsec;BayesianMSE}
\vspace{-0.05cm}
Until this subsection, the variable $\xvec$ has been treated as deterministic. In this subsection, we adopt a Bayesian perspective, as described in \eqref{x_distribution}, where 
\marksnippet{prior_distrbution1}
\(\xvec \sim \mathcal{N}(\xvec_0,\frac{1}{\mu}(\textcolor{black}{{h}_{\text{R}}^{+}(\Lmat)})^{\dagger})\)
. 
Then, we derive the theoretical minimum \ac{mse} for the Bayesian setting. 
The Bayesian trace \ac{mse} of the \ac{mmse} estimator, ${\rm{E}}[\xvec|\yvec]$, is given by (see p. 347 \cite{Kayestimation})
\beqna
\text{BMSE}(\dvec)
\label{conditional_exp}
={\rm{E}}[{\rm{E}}[({\rm{E}}[\xvec|\yvec]- \xvec)^T({\rm{E}}[\xvec|\yvec] - \xvec) | \xvec]].
\eeqna

As discussed in Subsection \ref{subsec;estimator} (after \eqref{x_distribution}), the \ac{gfr}-\ac{ml} estimator from \eqref{estimator}-\eqref{estimator_matrix} is equivalent to the \ac{mmse} estimator in the Bayesian setting, i.e. ${\rm{E}}[\xvec|\yvec]=\hat{\xvec}$,
where $\hat{\xvec}$ is defined in \eqref{estimator} and $ \xvec$ is  treated as a random variable in \eqref{conditional_exp}.
Thus, the inner (conditional) expectation in \eqref{conditional_exp} coincides with the non-Bayesian \ac{mse} expression derived in \eqref{mse_general}, with $\xvec$ now treated as random.  By substituting \eqref{mse_general} in \eqref{conditional_exp} and calculating the expectation \ac{wrt} $\xvec$, one  obtains
\beqna\label{bayesian_mse_derivaition}
\text{BMSE}(\dvec)
= {\rm{E}}_{\xvec}\Big[{\text{tr}}\Big(\Kmat^{-2}(\dvec)
{h}_{\text{M}}(\Lmat)\Dmat\Rmat^{-1}\Dmat{h}_{\text{M}}(\Lmat)\Big)\Big]\nonumber\\
+\mu^2 {\text{tr}} \Big({\rm{E}}[(\xvec_0-\xvec)(\xvec_0-\xvec)^T]{h}_{\text{R}}^{+}(\Lmat)\Kmat^{-2}(\dvec){h}_{\text{R}}^{+}(\Lmat)\Big),
\eeqna
where we used the trace operator definition and its property
${\text{tr}}(\Amat\Bmat) = {\text{tr}}(\Bmat\Amat)$. 
By changing the order of the trace and the expectation operators, substituting ${\rm{E}}[(\xvec - \xvec_0)(\xvec - \xvec_0)^T] = \frac{1}{\mu}(h_{\text{R}}^{+}(\Lmat))^{\dagger}$ (according to \eqref{x_distribution}) in \eqref{bayesian_mse_derivaition}, and applying the pseudo-inverse property $\Amat = \Amat\Amat^{\dagger}\Amat$, we simplify \eqref{bayesian_mse_derivaition} as follows:
\beqna\label{mse_bayesian}
{\text{BMSE}}({\dvec}) 
= {\text{tr}}\Big(\Kmat^{-2}(\dvec)({h}_{\text{M}}(\Lmat)\Dmat\Rmat^{-1}\Dmat{h}_{\text{M}}(\Lmat)+\mu h_{\text{R}}^{+}(\Lmat))\Big)\hspace{-1cm}\nonumber\\={\text{tr}}(\Kmat^{-1}(\dvec)),\hspace{4.1cm}
\eeqna
where the last equality is obtained by substituting \eqref{estimator_matrix}. 
The \ac{bmse} in \eqref{mse_bayesian} coincides with the Bayesian \ac{crb} in this case, which is attainable by the \ac{mmse} here, since the posterior distribution of $\xvec$ is Gaussian.
Consequently, the cost function in \eqref{mse_bayesian} can also be interpreted as a  Bayesian \ac{crb} cost function \cite{6981988}.

\vspace{-0.2cm}
\subsection{Cost Function 4: WC-BMSE}\label{subsec;worstBayesianMSE}
\vspace{-0.05cm}
Instead of taking the trace of the \ac{bmse} matrix (i.e. its  Frobenius norm \cite{Horn2012}, pp. 341-342) as in \eqref{mse_bayesian}, we here consider the spectral norm (\hspace{1sp}\cite{Horn2012}, p.  346) of the \ac{mse} matrix:
\beqna\label{wc_mse_bayesian} {\text{BMSE}{wc}}(\dvec) = \lambda_{\text{max}}\Big(\Kmat^{-1}(\dvec)\Big) = \lambda_{\text{min}}^{-1}\Big(\Kmat(\dvec)\Big), \eeqna
where the last equality follows from Theorem 4.2.2 in \cite{Horn2012}.

It is important to note that if $\Kmat(\dvec)$ is not invertible for some $\dvec$, the inverse is replaced by the pseudo-inverse. Consequently, the second equality holds for the minimal eigenvalue that is nonzero. The reformulation on the r.h.s. of \eqref{wc_mse_bayesian} eliminates the need to invert $\Kmat(\dvec)$, and is therefore more computationally efficient.

 \begin{claim}
     The cost function in \eqref{wc_mse_bayesian} can be interpreted as the \ac{wcbmse} in the Bayesian approach, measured in the Mahalanobis distance sense. 
 \end{claim}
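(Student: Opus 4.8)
The plan is to give \eqref{wc_mse_bayesian} an operational meaning as a worst case over a normalized family of Mahalanobis-type error metrics, and then to collapse that worst case to an eigenvalue problem. To every \ac{psd} matrix $\mathbf{W}$ with $\text{tr}(\mathbf{W})=1$ I would associate the Mahalanobis-type squared distance $\|\hat{\xvec}-\xvec\|_{\mathbf{W}}^2=(\hat{\xvec}-\xvec)^T\mathbf{W}(\hat{\xvec}-\xvec)$ (in the quadratic-form notation introduced after \eqref{eq:optimization_problem}) and the corresponding weighted \ac{bmse} $\text{BMSE}_{\mathbf{W}}(\dvec)\define{\rm E}[\|\hat{\xvec}-\xvec\|_{\mathbf{W}}^2]$, so that the plain \ac{bmse} of \eqref{mse_bayesian} is recovered (up to normalization) at $\mathbf{W}\propto\Imat$. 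The \ac{wcbmse} is then the least favorable such metric, $\text{BMSE}_{wc}(\dvec)\define\max_{\mathbf{W}\succeq\zerovec,\ \text{tr}(\mathbf{W})=1}\text{BMSE}_{\mathbf{W}}(\dvec)$, and the target is to prove it equals $\lambda_{\text{max}}(\Kmat^{-1}(\dvec))=\lambda_{\text{min}}^{-1}(\Kmat(\dvec))$.

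The first step is to identify the Bayesian error covariance matrix. Carrying out the simplification \eqref{bayesian_mse_derivaition}$\to$\eqref{mse_bayesian} at the matrix level, i.e.\ before applying the trace and its cyclic property, and substituting ${\rm E}[(\xvec-\xvec_0)(\xvec-\xvec_0)^T]=\frac{1}{\mu}(h_{\text{R}}^{+}(\Lmat))^{\dagger}$ from \eqref{x_distribution} together with the pseudo-inverse identity $h_{\text{R}}^{+}(\Lmat)(h_{\text{R}}^{+}(\Lmat))^{\dagger}h_{\text{R}}^{+}(\Lmat)=h_{\text{R}}^{+}(\Lmat)$, one obtains $\mathbf{M}(\dvec)\define{\rm E}[(\hat{\xvec}-\xvec)(\hat{\xvec}-\xvec)^T]=\Kmat^{-1}(\dvec)\big(h_{\text{M}}(\Lmat)\Dmat\Rmat^{-1}\Dmat h_{\text{M}}(\Lmat)+\mu h_{\text{R}}^{+}(\Lmat)\big)\Kmat^{-1}(\dvec)=\Kmat^{-1}(\dvec)$ --- precisely the collapse that produced $\text{tr}(\mathbf{M}(\dvec))=\text{tr}(\Kmat^{-1}(\dvec))$ in \eqref{mse_bayesian} (and consistent with $\mathbf{M}(\dvec)$ being the Gaussian posterior covariance, hence the attained Bayesian CRB matrix, as already noted there). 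Then, by linearity of the expectation and the cyclic property of the trace, $\text{BMSE}_{\mathbf{W}}(\dvec)={\rm E}[\text{tr}(\mathbf{W}(\hat{\xvec}-\xvec)(\hat{\xvec}-\xvec)^T)]=\text{tr}(\mathbf{W}\mathbf{M}(\dvec))=\text{tr}(\mathbf{W}\Kmat^{-1}(\dvec))$.

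The second step is the worst-case maximization over $\mathbf{W}$. The program $\max\{\text{tr}(\mathbf{W}\Kmat^{-1}(\dvec)):\mathbf{W}\succeq\zerovec,\ \text{tr}(\mathbf{W})=1\}$ maximizes a linear functional over the spectrahedron of trace-one \ac{psd} matrices, which is compact with extreme points exactly the rank-one projectors $\uvec\uvec^T$, $\|\uvec\|=1$; hence the maximum is attained at one of these and equals $\max_{\|\uvec\|=1}\uvec^T\Kmat^{-1}(\dvec)\uvec=\lambda_{\text{max}}(\Kmat^{-1}(\dvec))$ by the Rayleigh quotient theorem (\cite{Horn2012}, pp.~234--235). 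Equivalently, expanding $\Kmat^{-1}(\dvec)=\sum_i\lambda_i\vvec_i\vvec_i^T$ in its \ac{evd} with orthonormal $\{\vvec_i\}$ gives $\text{tr}(\mathbf{W}\Kmat^{-1}(\dvec))=\sum_i\lambda_i\vvec_i^T\mathbf{W}\vvec_i\le\lambda_{\text{max}}(\Kmat^{-1}(\dvec))\sum_i\vvec_i^T\mathbf{W}\vvec_i=\lambda_{\text{max}}(\Kmat^{-1}(\dvec))\,\text{tr}(\mathbf{W})$, with equality for $\mathbf{W}$ supported on the leading eigenvector. Rewriting $\lambda_{\text{max}}(\Kmat^{-1}(\dvec))=\lambda_{\text{min}}^{-1}(\Kmat(\dvec))$ via Theorem~4.2.2 in \cite{Horn2012} recovers \eqref{wc_mse_bayesian}. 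If $\Kmat(\dvec)$ is singular, I would restrict $\mathbf{W}$ to the range of $\Kmat(\dvec)$ --- so the Mahalanobis metric lives on that subspace --- replace the inverse by the pseudo-inverse as in the remark following \eqref{wc_mse_bayesian}, and the same computation returns the reciprocal of the smallest nonzero eigenvalue of $\Kmat(\dvec)$.

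I expect the only genuine difficulty to be conceptual rather than computational: fixing the normalization of the admissible Mahalanobis weightings so that the worst case is genuinely the spectral norm $\lambda_{\text{max}}(\mathbf{M}(\dvec))$ and not the trace $\text{tr}(\mathbf{M}(\dvec))$ (which is the ordinary \ac{bmse} of \eqref{mse_bayesian}) nor an unbounded quantity --- the trace-one constraint on $\mathbf{W}$ is what makes the extreme points rank-one and thus selects the top eigenvalue. Once one agrees to measure the error by a single trace-normalized quadratic form and to take the least favorable such form, the remaining ingredients --- the error-covariance identification, which is just the matrix version of \eqref{bayesian_mse_derivaition}--\eqref{mse_bayesian}, and the Rayleigh-quotient extremal argument already invoked in \eqref{worst_mse} --- are routine.
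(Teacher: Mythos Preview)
Your argument is mathematically correct and arrives at $\lambda_{\text{max}}(\Kmat^{-1}(\dvec))$, but it interprets ``Mahalanobis distance sense'' differently from the paper. You take the worst case over a family of trace-normalized quadratic \emph{error metrics} $\|\hat{\xvec}-\xvec\|_{\mathbf{W}}^2$, compute the Bayesian expectation $\text{tr}(\mathbf{W}\Kmat^{-1}(\dvec))$, and then maximize over $\mathbf{W}$. The paper instead adopts a set-based uncertainty description: it stacks the noise and the signal into $\zvec=[\evec^T,\xvec^T]^T$ with prior covariance $\Rmat_{\zvec}$, constrains a \emph{realization} $\zvec_0$ to the Mahalanobis ellipsoid $\|\zvec_0-\bar{\zvec}\|_{\Rmat_{\zvec}^{\dagger}}\le 1$, and maximizes the deterministic $\ell_2$ error $\|\hat{\xvec}-\xvec\|^2$ over that ellipsoid; Rayleigh--Ritz then yields $\lambda_{\text{max}}(\Kmat^{-1}(\dvec))$ directly. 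In the paper's reading, ``Mahalanobis'' refers to the \emph{constraint set} (the prior-covariance ellipsoid on the joint randomness), and no expectation is taken at all; in yours, it refers to the \emph{error weighting}, and the expectation is retained. Your route is clean and uses only the posterior covariance identity $\mathbf{M}(\dvec)=\Kmat^{-1}(\dvec)$ plus a standard spectrahedron argument; the paper's route ties the worst case more literally to the Mahalanobis geometry of the prior and does not require identifying the full posterior covariance. Both are valid justifications of \eqref{wc_mse_bayesian}, but be aware that your ``Mahalanobis-type'' label for an arbitrary trace-one $\mathbf{W}$ is looser than the paper's usage, where the Mahalanobis matrix is exactly the prior precision $\Rmat_{\zvec}^{\dagger}$.
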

 \begin{proof}
 In this proof we adopt a set-based (worst-case) description of the
uncertainty in the random variable 
$\zvec \define [\evec^T, \xvec^T]^T$. 
First, it can be seen that $\zvec \sim \mathcal{N}(\bar{\zvec}, \Rmat_{\zvec})$, where
$\bar{\zvec}\define[
\zerovec^T,\xvec_0^T]^T$ and 
$\Rmat_{\zvec}\define\begin{pmatrix}
 \Rmat & \zerovec\\\zerovec & (h_{\text{R}}^{+}(\Lmat))^{\dagger}\end{pmatrix}$.
 Let $\zvec_0$ denote an arbitrary realization of $\zvec$. We restrict this realization to the Mahalanobis ellipsoid implied by the prior:
 \beqna
 \label{const2}
\|\zvec_0-\bar{\zvec}\|_{\Rmat_{\zvec}^{\dagger}}\leq 1,
 \eeqna
 and we then maximize the $\ell_2$-norm of the the error vector from \eqref{error_vec} over all $\zvec_{0}$ in this set, as follows
  \beqna\label{wc_bmse_step1}
 \max_{\|\zvec_0-\bar{\zvec}\|_{\Rmat_{\zvec}^{-1}}\leq 1}\|\Kmat^{-1}(\dvec)\Hmat\Rmat_{\zvec}^{\dagger}(\zvec_0-\bar{\zvec})\|^2,
 \eeqna
where $\Hmat\define(h_{\text{M}}(\Lmat), \Imat)$. 
Using the Rayleigh-Ritz Theorem for the vector $(\Rmat_{\zvec}^{1/2})^{\dagger} (\zvec_0 - \bar{\zvec})$, the solution to \eqref{wc_bmse_step1} is expressed as
 \begin{equation}
\label{wc_bmse_step2}
\lambda_{\text{max}}((\Rmat_{\zvec}^{1/2})^{\dagger}\Hmat^T\Kmat^{-2}(\dvec)\Hmat(\Rmat_{\zvec}^{1/2})^{\dagger})=\lambda_{\text{max}}(\Kmat^{-1}(\dvec)),\hspace{-0.1cm}
 \end{equation}
where we substituted $\Kmat(\dvec)$ from \eqref{estimator_matrix} and used the property $\lambda_i(\Amat \Amat^T) = \lambda_i^2(\Amat^T \Amat)$  with $\Amat = \Kmat^{-1}(\dvec, \mu) \Hmat (\Rmat_{\zvec}^{1/2})^{\dagger}$.  It can be seen that the last term in \eqref{wc_bmse_step2} coincides with ${\text{BMSE}_{wc}}(\dvec)$, which completes the proof.
 \end{proof}

\vspace{-0.25cm}
\subsection{Discussion and General Remarks}\label{subsec;remarks}
\vspace{-0.05cm}
The cost functions in \eqref{CRB}, \eqref{worst_mse}, \eqref{mse_bayesian}, and  \eqref{wc_mse_bayesian} are not  functions of the unknown input graph signal, $\xvec$. This property enables their practical use for sampling design by replacing $C(\dvec)$ in \eqref{sampling_statment} with any of these cost functions.
These cost functions can also be utilized for general system design, such as selecting graph filters for regularization.
In the following we discuss some properties and special cases of the proposed approach.
\marksnippet{Extreme_case1}
\subsubsection{Extreme Cases}  
\textcolor{black}{We briefly examine two boundary cases to gain insight into the behavior of the cost functions.}\newline
{\textcolor{black}{\textbf{(i) Full observability without regularization:}  
When all nodes are observed ($\dvec = \onevec$) and no regularization is used ($\mu = 0$),}} \eqref{estimator_matrix} is reduced to 
\beqna
\label{estimator_matrix_d1_mu0}
\Kmat(\dvec=\onevec,\mu=0)
= {h}_{\text{M}}(\Lmat)\Rmat^{-1}{h}_{\text{M}}(\Lmat).
\eeqna
Substituting \eqref{estimator_matrix_d1_mu0} into any of the proposed cost functions, \eqref{CRB}, \eqref{worst_mse}, \eqref{mse_bayesian}, and \eqref{wc_mse_bayesian}, results in 
\beqna
\label{term_extreme}
C(\hat{\xvec},\dvec=\onevec)={\text{tr}}(({h}_{\text{M}}(\Lmat)\Rmat^{-1}{h}_{\text{M}}(\Lmat))^{\dagger}),
\eeqna 
{\textcolor{black}{and the bias $\bvec(\dvec)$ becomes zero. This special case unifies the behavior of all  cost functions and provides a baseline reference.}}


\color{black}\textbf{(ii) Dominant regularization:}  
When $\mu \to \infty$, it 
is shown in \textcolor{black}{Subsection \ref{app;limit_cost_func} in the supplementary material}
that \beqna\label{Extreme_case_large_mu} \hat{\tilde{\xvec}}_{\mathcal{V}\setminus\mathcal{R}} = [\tilde{\xvec}_0]_{\mathcal{V}\setminus\mathcal{R}},\eeqna where $\mathcal{V}\setminus\mathcal{R}$ denotes the set of frequencies for which $h_{\text{R}}^{+}(\lambda_i)\neq0$. That is, in the subspace spanned by the image of $h_{\text{R}}^{+}(\Lmat)$, the estimator relies solely on the prior.
As a result, when $h_{\text{R}}^{+}(\Lmat)$ is full column rank, all proposed cost functions become degenerate (i.e. independent of the sampling set), as detailed in Section \ref{app;limit_cost_func} of the supplementary material. 
This case highlights \color{black} the importance of careful tuning of $\mu$ to balance prior information and observed data. 

\subsubsection{Relation with the Laplacian-Regularized  Design}\label{sc_smoothness_discuss} In \cite{Bai2020},  the following sampling allocation was proposed:
\beqna\label{LR_design}
\dvec^{LR} 
= \arg\max_{\dvec\in\{0,1\}^N:\|\dvec\|_2^2=q} \lambda_{\text{min}}(\Dmat^T \Dmat + \mu \Lmat).
\eeqna 
It can be seen that \textcolor{black}{for the special case presented in Subsection \ref{sc_smooth}, where} $h_{\text{R}}^{+}(\Lmat) = \Lmat$, and $\Lmat\xvec_0 =\zerovec$, if $h_{\text{M}}(\Lmat)=\Imat$, $\Rmat = \Imat$, the \ac{wcbmse} cost function from \eqref{wc_mse_bayesian} coincides with the Laplacian-regularized sampling. 
The proposed \ac{wcbmse} cost function in this work can be seen as a generalization to different choices of $h_{\text{M}}(\Lmat)$, $h_{\text{R}}^{+}(\Lmat)$, $\xvec_0$, and $\Rmat$.

\subsubsection{Relationship with Bandlimitedness-Based Approaches} \label{sc_bandlimited_discuss}
\marksnippet{relation_with_bl1}
\textcolor{black}{Many popular sampling and recovery strategies in \ac{gsp}, such as the widely-used A-design~\cite{anis2016efficient} and E-design~\cite{chen2015discrete}, are based on the assumption that the graph signal is strictly bandlimited. These approaches optimize the performance of estimators constrained to a known subspace and work well under perfect bandlimitedness. The proposed framework generalizes them by incorporating prior knowledge through regularization for various, not necessarily bandlimited, signal models. }
\marksnippet{r1_relation_with_bl21}
In particular, 
the A-design \cite{anis2016efficient} \textcolor{black}{approach, which minimizes the mean \ac{mse},} and the E-design \cite{chen2015discrete} approach, \textcolor{black}{which minimizes the  \ac{wcmse}, both under the constraint of strict bandlimitedness,}  are given by  
\beqna\label{a_design}
\dvec^{A-des.} = \arg\min_{\dvec\in\{0,1\}^N:\|\dvec\|_2^2=q} {\text{tr}}(({\Vmat}_{\mathcal{S},\mathcal{R}}^T\Rmat_{\mathcal{S},\mathcal{S}}^{-1} {\Vmat}_{\mathcal{S},\mathcal{R}})^{-1}),\\
\label{e_design}
\dvec^{E-des.} = \arg\max_{\dvec\in\{0,1\}^N:\|\dvec\|_2^2=q} \lambda_{\text{min}}({\Vmat}_{\mathcal{S},\mathcal{R}}^T\Rmat_{\mathcal{S},\mathcal{S}}^{-1} {\Vmat}_{\mathcal{S},\mathcal{R}}),\hspace{0.3cm}
\eeqna
where $\mathcal{R}\subseteq\mathcal{V}$ is the subset of frequency indices associated with the bandlimited graph signal. \textcolor{black}{This special case is further discussed in Section~\ref{appendix_relation_bandlimited} of the supplementary material.}

The following claim states that for the special case where the measurement model captures a bandlimited graph signal over the frequency set $\mathcal{R}$ and the regularization strongly suppresses 
frequency components in the rest of the spectrum, $\mathcal{V}\setminus\mathcal{R}$, the proposed cost functions align with the  A-design and E-design cost functions. Thus, the proposed methods can be interpreted as generalizations of the  A-design and E-design criteria for general graph filters.
\marksnippet{r1_claim_a_E1}
\begin{claim}
\label{claim_A_E}
\textcolor{black}{Consider the special case of estimating a strictly graph-bandlimited signal in Subsection \ref{special_case_bandlited} } 
with \textcolor{black}{$\mu\rightarrow\infty$. Then, if $h_{\text{M}}(\Lmat)=\Imat$  
the \ac{bcrb} from \eqref{CRB} and } the  \ac{bmse} from \eqref{mse_bayesian} coincide with the A-design cost from \eqref{a_design}. In addition, 
the \ac{wcbmse} from \eqref{wc_mse_bayesian} coincides with the E-design cost function from \eqref{e_design}.
\end{claim}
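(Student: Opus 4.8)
The plan is to reduce all three cost functions to the behaviour of $\Kmat^{-1}(\dvec)$ as $\mu\to\infty$ under the bandlimited settings of Subsection~\ref{special_case_bandlited}, and then to match the limits against \eqref{a_design}--\eqref{e_design}. Throughout I would set $h_{\text{M}}(\Lmat)=\Imat$, write $\bar{\mathcal{R}}\define\mathcal{V}\setminus\mathcal{R}$, and let $\Vmat_{\mathcal{R}}$ (resp.\ $\Vmat_{\bar{\mathcal{R}}}$) be the submatrix of $\Vmat$ whose columns are the eigenvectors indexed by $\mathcal{R}$ (resp.\ $\bar{\mathcal{R}}$), so that $\Vmat_{\mathcal{R}}^{T}\Vmat_{\mathcal{R}}=\Imat$. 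I would also use, as is implicit in \eqref{a_design}--\eqref{e_design}, the identification $\Vmat_{\mathcal{R}}^{T}\Dmat\Rmat^{-1}\Dmat\Vmat_{\mathcal{R}}=\Vmat_{\mathcal{S},\mathcal{R}}^{T}\Rmat_{\mathcal{S},\mathcal{S}}^{-1}\Vmat_{\mathcal{S},\mathcal{R}}$ (since $\Dmat$ merely selects the rows/columns indexed by $\mathcal{S}$), together with the standing assumption that this matrix is nonsingular, i.e.\ that $\mathcal{S}$ is a uniqueness set for $\mathcal{R}$-bandlimited signals.

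The first and main step would be to establish the limit of the inverse. Writing $\Kmat(\dvec)=\Dmat\Rmat^{-1}\Dmat+\mu\,h_{\text{R}}^{+}(\Lmat)$ in the graph Fourier basis gives $\Kmat(\dvec)=\Vmat\Mmat(\mu)\Vmat^{T}$ with $\Mmat(\mu)=\Vmat^{T}\Dmat\Rmat^{-1}\Dmat\Vmat+\mu\,{\text{diag}}(h_{\text{R}}^{+}(\pmb\lambda))$. By \eqref{special_case}, $h_{\text{R}}^{+}(\lambda_i)=0$ exactly for $i\in\mathcal{R}$ and $h_{\text{R}}^{+}(\lambda_i)>0$ for $i\in\bar{\mathcal{R}}$, so partitioning $\Mmat(\mu)$ into $\mathcal{R}$ and $\bar{\mathcal{R}}$ blocks, its $\mathcal{R}$-block is the fixed matrix $\Vmat_{\mathcal{R}}^{T}\Dmat\Rmat^{-1}\Dmat\Vmat_{\mathcal{R}}$, while only the $\bar{\mathcal{R}}$-diagonal block $\Vmat_{\bar{\mathcal{R}}}^{T}\Dmat\Rmat^{-1}\Dmat\Vmat_{\bar{\mathcal{R}}}+\mu\bLambda_{\bar{\mathcal{R}}}$ depends on $\mu$, with $\bLambda_{\bar{\mathcal{R}}}={\text{diag}}(h_{\text{R}}^{+}(\lambda_i))_{i\in\bar{\mathcal{R}}}\succ\zerovec$. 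Applying the block-inversion (Schur-complement) formula and letting $\mu\to\infty$, the $\bar{\mathcal{R}}$-block and the cross blocks of $\Mmat^{-1}(\mu)$ would vanish (at rate $O(1/\mu)$) and the $\mathcal{R}$-block would converge to $(\Vmat_{\mathcal{R}}^{T}\Dmat\Rmat^{-1}\Dmat\Vmat_{\mathcal{R}})^{-1}$, giving
\begin{equation}\label{eq_Kinv_bl}
\lim_{\mu\to\infty}\Kmat^{-1}(\dvec)=\Vmat_{\mathcal{R}}\big(\Vmat_{\mathcal{R}}^{T}\Dmat\Rmat^{-1}\Dmat\Vmat_{\mathcal{R}}\big)^{-1}\Vmat_{\mathcal{R}}^{T}.
\end{equation}
This is nothing but the node-domain error covariance of the least-squares estimator of an $\mathcal{R}$-bandlimited signal, onto which the \ac{gfr}-\ac{ml} estimator collapses in this limit, consistent with \eqref{Extreme_case_large_mu}.

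Given \eqref{eq_Kinv_bl}, the rest is substitution. Since ${\text{tr}}(\cdot)$ and $\lambda_{\max}(\cdot)$ are continuous in their matrix argument, I would plug \eqref{eq_Kinv_bl} into \eqref{CRB}, \eqref{mse_bayesian}, and \eqref{wc_mse_bayesian}. For the \ac{bcrb}, with $h_{\text{M}}(\Lmat)=\Imat$ the integrand of \eqref{CRB} is $\Kmat^{-1}(\dvec)\Dmat\Rmat^{-1}\Dmat\Kmat^{-1}(\dvec)$; after inserting \eqref{eq_Kinv_bl} the central factor $\Vmat_{\mathcal{R}}^{T}\Dmat\Rmat^{-1}\Dmat\Vmat_{\mathcal{R}}$ cancels one of the two inverses, and trace cyclicity with $\Vmat_{\mathcal{R}}^{T}\Vmat_{\mathcal{R}}=\Imat$ leaves ${\text{tr}}\big((\Vmat_{\mathcal{S},\mathcal{R}}^{T}\Rmat_{\mathcal{S},\mathcal{S}}^{-1}\Vmat_{\mathcal{S},\mathcal{R}})^{-1}\big)$, i.e.\ the A-design cost \eqref{a_design}. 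The \ac{bmse} \eqref{mse_bayesian} equals ${\text{tr}}(\Kmat^{-1}(\dvec))$, and the same cyclicity/orthonormality step on \eqref{eq_Kinv_bl} yields the same expression, hence the A-design cost again. For the \ac{wcbmse}, since $\Vmat_{\mathcal{R}}$ has orthonormal columns any $\Vmat_{\mathcal{R}}\Bmat\Vmat_{\mathcal{R}}^{T}$ shares its nonzero eigenvalues with $\Bmat$; applied to \eqref{eq_Kinv_bl} this gives that \eqref{wc_mse_bayesian} tends to $\lambda_{\max}\big((\Vmat_{\mathcal{S},\mathcal{R}}^{T}\Rmat_{\mathcal{S},\mathcal{S}}^{-1}\Vmat_{\mathcal{S},\mathcal{R}})^{-1}\big)=\lambda_{\min}^{-1}\big(\Vmat_{\mathcal{S},\mathcal{R}}^{T}\Rmat_{\mathcal{S},\mathcal{S}}^{-1}\Vmat_{\mathcal{S},\mathcal{R}}\big)$; minimizing this over $\dvec$ is the same as maximizing $\lambda_{\min}\big(\Vmat_{\mathcal{S},\mathcal{R}}^{T}\Rmat_{\mathcal{S},\mathcal{S}}^{-1}\Vmat_{\mathcal{S},\mathcal{R}}\big)$, which is exactly the E-design \eqref{e_design}.

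The hard part is Step~1, i.e.\ \eqref{eq_Kinv_bl}: showing that the Schur-complement correction (a product sandwiching $(\Vmat_{\bar{\mathcal{R}}}^{T}\Dmat\Rmat^{-1}\Dmat\Vmat_{\bar{\mathcal{R}}}+\mu\bLambda_{\bar{\mathcal{R}}})^{-1}$) tends to $\zerovec$ and that the off-$\mathcal{R}$ blocks of $\Mmat^{-1}(\mu)$ are $O(1/\mu)$. This is cleanest in the \ac{gft} basis as above; alternatively, one can invoke the dominant-regularization limit recorded around \eqref{Extreme_case_large_mu}, which pins $\hat{\tilde{\xvec}}_{\bar{\mathcal{R}}}$ to $[\tilde{\xvec}_0]_{\bar{\mathcal{R}}}=\zerovec$ and thus reduces the \ac{gfr}-\ac{ml} estimator to the bandlimited least-squares estimator with error covariance \eqref{eq_Kinv_bl}. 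Everything after \eqref{eq_Kinv_bl} — the cancellations, the trace cyclicity, and the orthonormal-columns eigenvalue identity — is routine, and a secondary line is needed only for the identification $\Vmat_{\mathcal{R}}^{T}\Dmat\Rmat^{-1}\Dmat\Vmat_{\mathcal{R}}=\Vmat_{\mathcal{S},\mathcal{R}}^{T}\Rmat_{\mathcal{S},\mathcal{S}}^{-1}\Vmat_{\mathcal{S},\mathcal{R}}$ and its assumed invertibility.
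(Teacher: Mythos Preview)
Your proposal is correct and follows essentially the same approach as the paper: both arguments work in the graph Fourier basis, partition $\Vmat^{T}\Kmat(\dvec)\Vmat$ into $\mathcal{R}$- and $\bar{\mathcal{R}}$-blocks, invoke the block-inversion/Schur-complement formula to show that as $\mu\to\infty$ the inverse collapses to $\Vmat_{\mathcal{R}}(\Vmat_{\mathcal{R}}^{T}\Dmat\Rmat^{-1}\Dmat\Vmat_{\mathcal{R}})^{-1}\Vmat_{\mathcal{R}}^{T}$, and then substitute this limit into \eqref{CRB}, \eqref{mse_bayesian}, and \eqref{wc_mse_bayesian} using trace cyclicity and the orthonormal-columns eigenvalue identity. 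The only organizational difference is that the paper factors your Step~1 into a standalone lemma (the asymptotic block inverse, stated for general $h_{\text{M}}(\Lmat)$) and then specializes to $h_{\text{M}}(\Lmat)=\Imat$, whereas you inline the Schur-complement computation directly under the assumption $h_{\text{M}}(\Lmat)=\Imat$; the mathematical content, including the invertibility assumptions on $\Vmat_{\mathcal{R}}^{T}\Dmat\Rmat^{-1}\Dmat\Vmat_{\mathcal{R}}$ and $\bLambda_{\bar{\mathcal{R}}}$, is identical.
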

\begin{IEEEproof}
The proof appears in Appendix \ref{appendix_relation_bandlimited} \textcolor{black}{in the supplementary material.}
\end{IEEEproof}
\vspace{-0.2cm}
\section{Sensor selection solvers}
\label{solvers_subsection}
\vspace{-0.05cm}
Finding the set of $q<N$ sensor locations from the $N$ nodes to minimize the different cost functions, as described in \eqref{sampling_statment},   is a combinatorial optimization problem that has, in the worst
case, a computational complexity of $\binom{N}{q}$, which is prohibitive for large-scale systems.  
Thus, we propose two iterative approaches as follows. 
In Subsection \ref{subsec;greedy}, we present a heuristic method that iteratively selects sensor locations in order to approximate the solution of \eqref{sampling_statment}.
In Subsection \ref{subsec;PGD} we derive a \ac{pgd} method to solve a relaxation of \eqref{sampling_statment},   providing a computationally efficient solution with a complexity reduction by a factor of approximately $N$.
\subsection{Greedy Algorithm}\label{subsec;greedy}
In this subsection, we introduce a greedy algorithm, described in Algorithm \ref{Algorithm_greedy},  for the practical implementation of the sensor selection problem.
While greedy algorithms do not guarantee optimality, they often perform well in practice. The core idea behind this algorithm is to iteratively add to the sampling set those nodes that minimize the chosen cost function. The stopping criterion depends on the chosen optimization: either selecting a fixed number of sensors $q$ (as stated in \eqref{sampling_statment}) or achieving a predefined error threshold (as stated in \eqref{eq_l0}). 

\begin{algorithm}[hbt]
    \textbf{Input:}
      graph filters, $h_{\text{R}}^{+}(\Lmat)$, $h_{\text{M}}(\Lmat)$,  number of nodes, $q$, noise covariance matrix, $\Rmat$, and regularization parameter, $\mu$
      \\
       \textbf{Initialization:} Set the initial sampling subset $\mathcal{S}^{(0)}=\emptyset$ and the iteration index, $i=0$\\
        \textbf{while} {$i<q$}
        \begin{enumerate}
            \item Update the set of available nodes: $\mathcal{L}=\mathcal{V}\setminus\mathcal{S}^{(i)}$
            \item Select the node that minimizes the cost function:
            \begin{equation}\label{objective}
                w^{opt} = \arg \min_{w\in {\mathcal{L}}} C(\onevec_{\{\mathcal{S}^{(i)}\cup w\}})
            \end{equation}
            where $C(\dvec)$ is one of the proposed cost functions defined in \eqref{CRB}, \eqref{worst_mse},  \eqref{mse_bayesian}, or \eqref{wc_mse_bayesian}
            \item Update the sampling set: $\mathcal{S}^{(i+1)} \leftarrow \mathcal{S}^{(i)}\cup w^{opt}$, and increment the iteration index, $i\leftarrow i+1$
        \end{enumerate}
        \textbf{Output:}  Subset of the selected $q$ nodes: $\mathcal{S}=\mathcal{S}^{(i)}$
        \caption{Greedy Selection of the Measured Nodes}\label{Algorithm_greedy}
\end{algorithm}
\subsubsection{Computational Complexity}
\color{black}
The computational complexity of Algorithm \ref{Algorithm_greedy}  depends on the complexity of evaluating the cost function $C(\dvec)$ for each set of sensors (i.e. in each iteration, in \eqref{objective}) and on the desired number of sensors, $q$. Since a single calculation of \( C(\dvec) \) from \eqref{CRB}, \eqref{worst_mse}, and \eqref{mse_bayesian}  has a computational complexity of \( O(N^3) \) (the computational complexity of \eqref{wc_mse_bayesian} is $O(N^2)$ \cite{Lanczos_algorithm_eig}), this method requires \( \sum_{n=0}^{q-1} (N - n) \) calculations of \( C(\dvec) \) without any search-reduction rules (i.e. \( N \) options for the first node, \( N-1 \) options for the second node, and so on). 
 \marksnippet{greedy_complexity1}
This summation results in \( q(N - \frac{q-1}{2}) \) calculations, \textcolor{black}{ which for $q=1$ results in $N$ repetitions, and for $q=N$ results in  $0.5N^2$ repetitions, i.e.  so asymptotically it behaves like $qN$}. 
The total complexity becomes \( O(\textcolor{black}{qN^4}) \) (or $O(\textcolor{black}{qN^3})$ for  \eqref{wc_mse_bayesian}),
 which is computationally prohibitive for a large $N$. 
 Consequently, while the greedy approach is straightforward, its high computational cost for large graphs necessitates the development of more efficient methods. 
 \textcolor{black}{For the following special case,  the complexity can be further reduced. }

\color{black}
\color{black}
\marksnippet{submodularity1}
\subsubsection{Submodularity} 
Submodular and monotone functions offer theoretical guarantees for greedy optimization.
When $\Rmat$ is diagonal and $h_{\text{R}}^{+}(\Lmat)$ is positive definite, it can be shown that the negative \ac{bmse} is both submodular and monotonically increasing (see proof in Section \ref{sec_submodularity_bmse} in the supplementary material), and therefore enjoys the associated performance guarantees.
 \marksnippet{efficient_calc1}
\subsubsection{Efficient Cost Update for Single Node Addition}
Consider adding a new sensor \( i \notin \mathcal{S} \), and assuming a diagonal noise covariance matrix \( \Rmat \). The updated of $\Kmat(\dvec)$ is given by (see Proposition \ref{theorem_modular_K} in the supplementary material)
\[
\Kmat(\onevec_{\mathcal{S} \cup \{i\}}) = \Kmat(\onevec_{\mathcal{S}}) + \rvec_i \rvec_i^T,
\]
where \( \rvec_i = \sqrt{\Rmat^{-1}_{i,i}} [h_{\text{M}}(\Lmat)]_{\mathcal{V},i} \) captures the contribution of the new sensor $i$. This rank-one structure enables efficient updates of matrix inverses, as well as the smallest eigenvalues in a greedy search.

First, to avoid recomputing the matrix inverse \( \Kmat^{-1}(\dvec) \) from scratch at each iteration (as needed in the cost functions \eqref{CRB} and \eqref{mse_bayesian}), we apply the Sherman–Morrison identity (see \cite[Eq. 160]{matrix_cookbook}),  as follows:
\beqna\label{efficientKplus1}
\Kmat^{-1}(\onevec_{\mathcal{S} \cup \{i\}}) = \Kmat^{-1}(\onevec_{\mathcal{S}}) - \frac{\Kmat^{-1}(\onevec_{\mathcal{S}})\rvec_i \rvec_i^T \Kmat^{-1}(\onevec_{\mathcal{S}})}{1 + \rvec_i^T \Kmat^{-1}(\onevec_{\mathcal{S}}) \rvec_i}.
\eeqna
Substituting \eqref{efficientKplus1} into the  \ac{bmse} \eqref{mse_bayesian}, \ac{crb}  \eqref{CRB}, and \ac{wcmse} \eqref{worst_mse}
yields
\vspace{-0.5cm}

%
\beqna\label{biased_crb_effiecent_greedy}
{\text{BMSE}}(\onevec_{\mathcal{S} \cup \{i\}}) 
= \text{tr}\big(\Kmat^{-1}(\onevec_{\mathcal{S}})\big) - \frac{\rvec_i^T \Kmat^{-2}(\onevec_{\mathcal{S}})\rvec_i}{1 + \rvec_i^T \Kmat^{-1}(\onevec_{\mathcal{S}})\rvec_i},\eeqna
\vspace{-0.7cm}
\beqna\label{biased_crb_effiecent_greedy}
{\text{bCRB}}(\onevec_{\mathcal{S} \cup \{i\}})={\text{BMSE}}(\onevec_{\mathcal{S} \cup \{i\}})\hspace{3.1cm}\nonumber\\-\mu\text{tr}\Big(\Big(
\Kmat^{-1}(\onevec_{\mathcal{S}}) - \frac{\Kmat^{-1}(\onevec_{\mathcal{S}})\rvec_i \rvec_i^T \Kmat^{-1}(\onevec_{\mathcal{S}})}{1 + \rvec_i^T \Kmat^{-1}(\onevec_{\mathcal{S}}) \rvec_i}\Big)^{2}h_{\text{R}}^{+}(\Lmat)\Big),
\eeqna
\vspace{-0.5cm}
\begin{align}\label{worst_mse_effiecent_greedy}
\text{MSE}_{WC}(\onevec_{\mathcal{S} \cup \{i\}}) = \text{bCRB}(\onevec_{\mathcal{S} \cup \{i\}})\hspace{3.4cm}
\nonumber\\ +\mu^2 \sigma^2_{\text{max}}\Big(\Big(
\Kmat^{-1}(\onevec_{\mathcal{S}}) - \frac{\Kmat^{-1}(\onevec_{\mathcal{S}})\rvec_i \rvec_i^T \Kmat^{-1}(\onevec_{\mathcal{S}})}{1 + \rvec_i^T \Kmat^{-1}(\onevec_{\mathcal{S}}) \rvec_i}\Big)h_{\text{R}}^{+}(\Lmat)\Big).
\end{align}
These updates reduce the per-iteration complexity from \( \mathcal{O}(N^3) \) to \( \mathcal{O}(N^2) \), resulting in a total complexity of \( \mathcal{O}(qN^3) \) for \( q \) selected nodes in Algorithm~\ref{Algorithm_greedy}.

Second, to efficiently estimate the smallest eigenvalue required in \eqref{worst_mse}, we apply results from perturbation theory.
In particular, 
for a given sampling set \( \mathcal{S} \), let \( \lambda_{\min}(\Kmat(\onevec_{\mathcal{S}})) \) and its eigenvector \( \vvec_{\min} \) be known. Then, the first-order Taylor expansion of the minimum eigenvalue function (see \cite[Eqs. 67 and 488]{matrix_cookbook}, implies that
\begin{equation}\label{perturbation_theory}
\lambda_{\text{min}}\big(\Kmat(\onevec_{\mathcal{S}})+\rvec_i\rvec_i^T\big)
\approx \lambda_{\text{min}}\big(\Kmat(\onevec_{\mathcal{S}})\big) + (\vvec_{\text{min}}^T  \rvec_i)^2.
\end{equation}
Substituting \eqref{perturbation_theory} into \eqref{wc_mse_bayesian},  we approximate the \ac{wcbmse} as
\beqna\label{wc_mse_bayesian_efficient}
{\text{BMSE}{wc}}(\onevec_{\mathcal{S} \cup \{i\}}) 
\approx \left( \lambda_{\text{min}}\big(\Kmat(\onevec_{\mathcal{S}})\big) + (\rvec_i^T \vvec_{\text{min}})^2 \right)^{-1}.
\eeqna
This only requires \( \mathcal{O}(N) \) computation  
given \( \lambda_{\text{min}}\big(\Kmat(\onevec_{\mathcal{S}})\big) \) and \( \vvec_{\text{min}} \). 
Since this operation is repeated for \( q(N - \frac{q-1}{2}) \approx qN \) iterations as discussed after Alg. \ref{Algorithm_greedy}, the total complexity is \( \mathcal{O}(qN^2) \).

To conclude, with these complexity-reduction techniques, the greedy algorithm has a total complexity of \( \mathcal{O}(qN^3) \) for the \ac{bcrb}, \ac{wcmse}, and \ac{bmse}, and \( \mathcal{O}(qN^2) \) for the \ac{wcbmse}. This enables scalable implementation of greedy selection methods. However, when \( \Rmat \) is non-diagonal or  \( q \) is large, these updates become computationally expensive. In such cases, we propose an efficient \ac{pgd} method,  described next.
\color{black}

 \vspace{-0.2cm}
\subsection{Alternating PGD}\label{subsec;PGD}
\vspace{-0.05cm}
A common approach for dealing with binary decision variables, such as $\dvec$, is to relax them to continuous variables, and subsequently, to project the solution onto the feasible set of the original problem \cite{4663892}. 
To simplify the problem in \eqref{sampling_statment}, we relax the non-convex Boolean constraint $\dvec\in\{0,1\}^N$ to the convex box constraint  $\dvec\in[0,1]^N$,  and the norm constraint 
to a ball constraint, i.e. $\|\dvec\|_2^2 = q$ to $\|\dvec\|_2^2 \leq q$. 
The relaxed optimization problem is then formulated as follows:
\begin{align}\label{eq_regularized}
    \hat{\dvec} &= \arg\min_{\dvec \in [0,1]^N,~\|\dvec\|_2^2\leq q}  C(\dvec). 
\end{align}


To implement this approach, 
we derive the associated \ac{pgd} algorithm (see p. 223 in \cite{bertsekas1999nonlinear}), which iteratively combines a gradient descent step and a projection step. First, a gradient descent step with a backtracking linesearch \cite{Boyd_2004} is performed to determine a step size $\rho$ that reduces the cost function, as follows. 
We start from an initial value, $\rho$, which is iteratively reduced until the following condition no longer holds:
\begin{equation}\label{linesearch_step1} C\bigl(\mathcal{P}(\dvec^{(k)} - \rho \nabla C(\dvec^{(k)}))\bigr) >
C\bigl(\mathcal{P}(\dvec^{(k)})\bigr), \end{equation}
where $\nabla C(\dvec^{(k)})$ is the gradient of the cost function \ac{wrt} $\dvec$ evaluated at $\dvec^{(k)}$, and $\mathcal{P}$ denotes the projection operator onto the feasible set of the original problem, \beqna\mathcal{P}(\yvec)\define \mathcal{P}_{\big\{\forall\dvec\in\{0,1\}^N\big|\|\dvec\|_2^2\leq q\big\}}(\yvec). \eeqna 
Once the step size $\rho$ has been  determined, the vector is updated as:
\begin{equation}\label{linesearch_step2}
\dvec^{(k+1)} = \dvec^{(k)} - \rho \nabla C(\dvec^{(k)}),
\end{equation}
where the update skips projection onto the non-convex binary set to avoid zero-gradient points in subsequent iterations.

Next, we perform a projection of the result onto the constrained set of the relaxed problem from \eqref{eq_regularized}, given by
\begin{equation}
{\dvec}^{(k+1)}\gets \mathcal{P}_{\big\{\forall\dvec\in\mathbb{R}^N\big|\|\dvec\|_2^2\leq q,~ \zerovec\leq\dvec\leq\onevec\big\}}({\dvec}^{(k+1)}).
\end{equation}
The  set $\{\dvec\in\mathbb{R}^N\big|\|\dvec\|_2^2\leq q,\zerovec\leq\dvec\leq\onevec\}$ is the intersection of two convex sets. 
Thus, the projection step
 can be simplified by using 
alternating \ac{pgd} \cite{boyd2003alternating}, where the projection alternates sequentially between the two sets as follows:
\begin{enumerate} \item Projection onto the $\ell_2$-norm constraint: 
\beqna\label{eq_norm_proj}
\mathcal{P}_{\big\{\forall\dvec\in\mathbb{R}^N\big|\|\dvec\|_2^2\leq q\big\}}(\yvec)=\left\{\begin{array}{lr}\frac{q}{\|\yvec\|_2^2}\yvec \quad, &\|\yvec\|_2^2\geq q 
\\ \yvec \quad, &{\text{otherwise}} \end{array},\right.
\eeqna
which is the solution of Problem 4.22, p. 197 in \cite{Boyd_2004}.
\item Projection onto the box constraint: 
\beqna\label{eq_box_proj}
\mathcal{P}_{\big\{\forall\dvec\in\mathbb{R}^N\big|\zerovec\leq\dvec\leq\onevec\big\}}(\yvec)=\max\{\min\{\yvec,\onevec\},\zerovec\},
\eeqna
$\forall \yvec\in \mathbb{R}^N$.
This projection is obtained by the solution to the Euclidean projection onto a rectangle (p. 399 in \cite{Boyd_2004}).  

\end{enumerate}

The \ac{pgd} algorithm is summarized in Algorithm \ref{Algorithm_PGD}. 
\begin{algorithm}[hbt]
    \textbf{Input:}  
    \(\dvec^{(0)} \), cost function,  $C(\dvec)$, initial step size, \( \rho_0 \),  linesearch backtracking parameter,  $0<\beta<1$, maximum number of  \ac{pgd} iterations, $M_1$, 
    and tolerance, \( \varepsilon \)\\
    \textbf{Initialization:} $\rho=\rho_0$\\
    \textbf{for} {$k=0,\ldots,M_1$}
         \begin{enumerate}
         \item \big\{\textbf{while} condition \eqref{linesearch_step1} satisfied 
         \textbf{ do:} $\rho\gets\beta\rho$\big\}\\
         Update solution \[\dvec^{(k+1)} \gets \dvec^{(k)}-\rho\nabla C(\dvec^{(k)})\quad \text{(see \eqref{linesearch_step2})}\]
          \item  Project onto the $\ell_2$-norm constraint set: \[\dvec^{(k+1)}\gets
          \frac{q}{\|\dvec^{(k+1)}\|_2^2}\dvec^{(k+1)}
          \quad \text{(see \eqref{eq_norm_proj})}
    \]
        \item
         Project onto the box constraint:
        \[ \dvec^{(k+1)}\gets
        \max\{\min\{\dvec^{(k+1)},\onevec\},\zerovec\} \quad \text{(see \eqref{eq_box_proj})}
         \]
                 \item $k\gets k+1$, and  $\rho\gets\rho_0$
         \item 
          \textbf{if} $\|\dvec^{(k)}-\dvec^{(k-1)}\|_2\leq\varepsilon$
          \textbf{break} 

           \end{enumerate}
    
        \textbf{Output:} Final solution in the constrained set,  $\mathcal{P}(\dvec^{(k)})$

        \caption{Alternating Projected Gradient Descent for \eqref{eq_regularized}}\label{Algorithm_PGD}
\end{algorithm}

 The gradients of the considered cost functions  used in Step 1) of Algorithm \ref{Algorithm_PGD}
are provided in the following claim.
\vspace{-0.1cm}
\begin{claim}
\label{Claim_3}
The gradient of the cost functions from \eqref{CRB}, \eqref{worst_mse}, \eqref{mse_bayesian}, and \eqref{wc_mse_bayesian} \ac{wrt} $\dvec$ can be written as 
\end{claim}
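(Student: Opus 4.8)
The plan is to exploit the fact that all four cost functions in \eqref{CRB}, \eqref{worst_mse}, \eqref{mse_bayesian}, and \eqref{wc_mse_bayesian} depend on $\dvec$ only through the single matrix $\Kmat(\dvec)$ of \eqref{estimator_matrix}, so that the whole claim reduces to one elementary building block: the partial derivative of $\Kmat(\dvec)$ with respect to a scalar sampling variable $d_n$. Since $\mu h_{\text{R}}^{+}(\Lmat)$ does not depend on $\dvec$, only the term $h_{\text{M}}(\Lmat)\Dmat\Rmat^{-1}\Dmat h_{\text{M}}(\Lmat)$ contributes, and differentiating it entrywise (with $\Dmat={\text{diag}}(\dvec)$) yields the rank-two symmetric form
\[
\frac{\partial \Kmat(\dvec)}{\partial d_n} = \fvec_n\mvec_n^T + \mvec_n\fvec_n^T ,
\]
where $\fvec_n\define[h_{\text{M}}(\Lmat)]_{\mathcal{V},n}$ is the $n$th column of $h_{\text{M}}(\Lmat)$ and $\mvec_n\define[h_{\text{M}}(\Lmat)\Dmat\Rmat^{-1}]_{\mathcal{V},n}$. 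From the cyclic property of the trace and symmetry, two consequences are then all that is needed downstream: for any symmetric $\Bmat$ independent of $d_n$, ${\text{tr}}(\Bmat\,\partial\Kmat(\dvec)/\partial d_n)=2\fvec_n^T\Bmat\mvec_n$, and for any vector $\vvec$, $\vvec^T(\partial\Kmat(\dvec)/\partial d_n)\vvec = 2(\vvec^T\fvec_n)(\vvec^T\mvec_n)$.

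Next I would differentiate the two trace-type costs. For the \ac{bmse}, ${\text{BMSE}}(\dvec)={\text{tr}}(\Kmat^{-1}(\dvec))$, the identity $\partial{\text{tr}}(\Kmat^{-1})/\partial d_n=-{\text{tr}}(\Kmat^{-1}(\partial\Kmat/\partial d_n)\Kmat^{-1})$ together with the trace consequence above (with $\Bmat=\Kmat^{-2}(\dvec)$) gives the $n$th gradient component $-2\fvec_n^T\Kmat^{-2}(\dvec)\mvec_n$, which I would stack over $n$ and rewrite as $-2$ times the diagonal of $h_{\text{M}}(\Lmat)\Kmat^{-2}(\dvec)h_{\text{M}}(\Lmat)\Dmat\Rmat^{-1}$. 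For the \ac{bcrb} I would first use \eqref{CRB} and \eqref{estimator_matrix} to write ${\text{bCRB}}(\dvec)={\text{tr}}(\Kmat^{-1}(\dvec))-\mu\,{\text{tr}}(\Kmat^{-2}(\dvec)h_{\text{R}}^{+}(\Lmat))$; the first term is exactly the \ac{bmse} just handled, and for the second I would apply the product rule $\partial(\Kmat^{-2})/\partial d_n=-\Kmat^{-1}(\partial\Kmat/\partial d_n)\Kmat^{-2}-\Kmat^{-2}(\partial\Kmat/\partial d_n)\Kmat^{-1}$, collapse the two resulting cross terms via cyclicity into a single trace against the symmetric matrix $\Kmat^{-2}(\dvec)h_{\text{R}}^{+}(\Lmat)\Kmat^{-1}(\dvec)+\Kmat^{-1}(\dvec)h_{\text{R}}^{+}(\Lmat)\Kmat^{-2}(\dvec)$, and finish again with the trace consequence.

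For the two spectral-norm costs I would invoke the standard first-order perturbation formula for a simple extreme eigenvalue, the same one already used in \eqref{perturbation_theory}: if $\lambda$ is a simple eigenvalue of a symmetric matrix with unit eigenvector $\uvec$, its derivative along a symmetric perturbation $\Emat$ is $\uvec^T\Emat\uvec$. For the \ac{wcbmse}, writing ${\text{BMSE}}_{wc}(\dvec)=\lambda_{\text{min}}^{-1}(\Kmat(\dvec))$ as in \eqref{wc_mse_bayesian} and using the chain rule gives the $n$th gradient component $-\lambda_{\text{min}}^{-2}(\Kmat(\dvec))\,\vvec_{\text{min}}^T(\partial\Kmat(\dvec)/\partial d_n)\vvec_{\text{min}}=-2\lambda_{\text{min}}^{-2}(\Kmat(\dvec))\,(\vvec_{\text{min}}^T\fvec_n)(\vvec_{\text{min}}^T\mvec_n)$, with $\vvec_{\text{min}}$ the unit eigenvector of $\Kmat(\dvec)$ associated with $\lambda_{\text{min}}$. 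For the \ac{wcmse} I would start from \eqref{worst_mse}, ${\text{MSE}}_{WC}(\dvec)={\text{bCRB}}(\dvec)+\mu^2\lambda_{\text{max}}(h_{\text{R}}^{+}(\Lmat)\Kmat^{-2}(\dvec)h_{\text{R}}^{+}(\Lmat))$, reuse the \ac{bcrb} gradient, and add the derivative of the second term: with $\uvec$ the unit eigenvector attaining $\lambda_{\text{max}}$ and $\wvec\define h_{\text{R}}^{+}(\Lmat)\uvec$, the eigenvalue derivative equals $\wvec^T(\partial(\Kmat^{-2})/\partial d_n)\wvec=-2(\Kmat^{-1}(\dvec)\wvec)^T(\partial\Kmat(\dvec)/\partial d_n)(\Kmat^{-2}(\dvec)\wvec)$, which I would expand with the rank-two form of $\partial\Kmat/\partial d_n$ into products of inner products with $\fvec_n$ and $\mvec_n$ and stack over $n$.

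The main obstacle is not the algebra but the differentiability of the extreme-eigenvalue terms: the perturbation formula used for the \ac{wcmse} and \ac{wcbmse} is valid only where the relevant largest (respectively smallest) eigenvalue is simple; at $\dvec$ for which that eigenvalue has multiplicity larger than one the cost is merely directionally differentiable, and the formula then delivers an element of the (Clarke) subdifferential, i.e. a valid descent direction for the \ac{pgd} step rather than a classical gradient. A related point to address is the pseudo-inverse convention noted after \eqref{wc_mse_bayesian}: when $\Kmat(\dvec)$ is singular, ``$\lambda_{\text{min}}$'' must be read as the smallest nonzero eigenvalue, and the derivative expressions then hold on the set where the active eigenspace has locally constant dimension. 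The only remaining care is bookkeeping --- symmetrizing the cross terms produced by $\partial(\Kmat^{-2})/\partial d_n$ before applying the trace and quadratic-form identities, and checking that the final stacked expressions are real vectors in $\mathbb{R}^N$.
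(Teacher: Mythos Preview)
Your proposal is correct and follows essentially the same route as the paper's Appendix~\ref{Appendix_A_sec}: the paper also isolates $\partial\Kmat(\dvec)/\partial d_n$ (written there as $h_{\text{M}}(\Lmat)(\Emat_n\Rmat^{-1}\Dmat+\Dmat\Rmat^{-1}\Emat_n)h_{\text{M}}(\Lmat)$, which is exactly your $\fvec_n\mvec_n^T+\mvec_n\fvec_n^T$), rewrites the \ac{bcrb} as ${\text{BMSE}}(\dvec)-\mu\,{\text{tr}}(\Kmat^{-2}h_{\text{R}}^{+})$, and applies the same eigenvalue-perturbation formula for the two spectral costs. Your explicit caveat about simplicity of the extreme eigenvalue (and the subgradient interpretation otherwise) is a point the paper does not discuss but simply cites from the matrix cookbook.
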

\vspace{-0.5cm}
\begin{equation}
\label{general_gradient_form}
\nabla C(\dvec) = -2 
{\text{diag}}(\Rmat^{-1} \Dmat{h}_{\text{M}}(\Lmat)\Kmat^{-1}(\dvec)\Qmat\Kmat^{-1}(\dvec){h}_{\text{M}}(\Lmat)),
\end{equation}
{\em{where the matrix $\Qmat$ for each cost function is }}
\beqna\label{gradient_bcrb}
\Qmat_{\text{bCRB}} = \Imat-\Kmat^{-1}(\dvec){h}_{\text{R}}^{+}(\Lmat)
-{h}_{\text{R}}^{+}(\Lmat)\Kmat^{-1}(\dvec),\hspace{1.1cm}
\\
\label{gradient_wcmse}
\Qmat_{\text{MSE-wc}}=\Big(\Imat-\Kmat^{-1}(\dvec){h}_{\text{R}}^{+}(\Lmat)\big(\Imat-\uvec_{\text{max}}\uvec_{\text{max}}^{T} {h}_{\text{R}}^{+}(\Lmat)\big)\hspace{0.2cm}\nonumber\\
-\big(\Imat-{h}_{\text{R}}^{+}(\Lmat) \uvec_{\text{max}}\uvec_{\text{max}}^{T}\big){h}_{\text{R}}^{+}(\Lmat)\Kmat^{-1}(\dvec) \Big),
\\
\label{gradient_bmse}
\Qmat_{\text{BMSE}} = \Imat,\hspace{6.2cm}\\\label{gradient_wc_bmse}
\Qmat_{\text{BMSE}_{wc}}=\lambda^{-2}_{\text{min}}(\Kmat(\dvec))\Kmat(\dvec) \uvec_{\text{min}}\uvec_{\text{min}}^{T} \Kmat(\dvec),\hspace{1.5cm}
\eeqna
{\em{and in \eqref{gradient_wcmse} $\uvec_{\text{max}}$ is the normalized eigenvector corresponding to $\lambda_{\text{max}}\left( {h}_{\text{R}}^{+}(\Lmat) \Kmat^{-2}(\dvec) {h}_{\text{R}}^{+}(\Lmat) \right)$, and in \eqref{gradient_wc_bmse} $\uvec_{\text{min}}$ is the  normalized eigenvector corresponding to $\lambda_{\text{min}}\left( \Kmat(\dvec)\right)$.}}
\begin{proof}
The derivations of \eqref{general_gradient_form}-\eqref{gradient_wc_bmse} appear in Appendix \ref{Appendix_A_sec}.
\end{proof}

\subsubsection{Initialization of Algorithm \ref{Algorithm_PGD}}  When $\Rmat=\sigma^2\Imat$, $\sigma\in{\mathbb{R}}$,  \eqref{gradient_bcrb}-\eqref{gradient_wc_bmse} imply that the $n$th entry of the gradient of $C(\dvec)$ vanishes for $d_n=0$.  Consequently, initializing with $d_n^{(0)}=0$ causes the $n$th entry to remain zero throughout the optimization,  potentially preventing convergence of the algorithm to a lower-cost solution. To avoid this, we recommend initializing $\dvec$ as an arbitrary feasible solution in $ (0,1)^{N}$, for any $\Rmat$.  
This approach reduces the sensitivity of the output to initialization.

\color{black}
\marksnippet{convexity1}
\subsubsection{Convexity} 
If \( \Rmat \) is diagonal and \( h_{\text{R}}^{+}(\Lmat) \) is positive definite, the \ac{bmse} and \ac{wcbmse} are convex in \( w_i \triangleq d_i^2 \in[0,1]\),  $i=1,\ldots, N$ (see Section~\ref{convexity_proof} in the supplementary material). The constraints are affine in \( w_i \), making the overall problem convex, and thus \ac{pgd} is expected to converge to the global optimum. 
Nevertheless, practically, \ac{pgd} is known to results in good performance also in non-convex settings. 
\color{black}

  \subsubsection{Computational Complexity of Algorithm \ref{Algorithm_PGD}} The per-iteration cost of Algorithm~\ref{Algorithm_PGD} is dominated by evaluating
  the cost functions in \eqref{CRB}, \eqref{worst_mse}, \eqref{mse_bayesian}, as well as their gradients in \eqref{gradient_bcrb}-\eqref{gradient_bmse}, each with a complexity of $O(N^3)$. For the \ac{wcbmse} in \eqref{wc_mse_bayesian} and its gradient in \eqref{gradient_wc_bmse}, the complexity is lower, at \( \mathcal{O}(N^2) \), using efficient eigenvalue methods~\cite{Lanczos_algorithm_eig}. 
 Let $M_1$ denote the number of \ac{pgd} iterations required for convergence, 
 and $M_2$ denote the maximum number of backtracking linesearch steps per iteration. Consequently, the total worst-case complexity of the algorithm is $O(M_1 M_2 N^3)$ (or $O(M_1 M_2 N^2)$ for the \ac{wcbmse}).
 In practice, convergence typically occurs within several dozen iterations, $M_1$, and linesearch often succeeds on the first trial, reducing the effective complexity to approximately $O( M_1 N^3)$ (or $O( M_1 N^2)$ for the \ac{wcbmse}).

\marksnippet{computational_complexity_discuss_1}

\color{black}
\subsection{Comparison of the Proposed Solvers}
The computational complexities of both the greedy and the \ac{pgd} algorithms depend on the chosen cost function and the structure of the matrices involved.
In the general case, the greedy algorithm requires approximately \( qN \) evaluations of the cost function, each involving a matrix inversion or eigendecomposition with complexity \( \mathcal{O}(N^3) \). This results in an overall complexity of \( \mathcal{O}(qN^4) \) for the \ac{bcrb}, \ac{wcmse}, and \ac{bmse} cases, and \( \mathcal{O}(qN^3) \) for the \ac{wcbmse} case. When \( \Rmat \) is diagonal, efficient rank-one update formulae (see \eqref{efficientKplus1}-\eqref{worst_mse_effiecent_greedy}) can significantly reduce the per-evaluation cost, yielding a total complexity of \( \mathcal{O}(qN^3) \) for the \ac{bcrb}, \ac{wcmse}, and \ac{bmse} cases, and \( \mathcal{O}(qN^2) \) for the \ac{wcbmse} case. In contrast, the \ac{pgd} algorithm requires \( M_1 \) iterations, each dominated by one cost and gradient evaluation. Assuming up to \( M_2 \) backtracking steps per iteration, the worst-case complexity is \( \mathcal{O}(M_1 M_2 N^3) \) for the \ac{bcrb}, \ac{wcmse}, and \ac{bmse} cases, and \( \mathcal{O}(M_1 M_2 N^2) \) for the \ac{wcbmse} case. 
Hence, for the former objectives, \ac{pgd} is more efficient than the greedy method when $M_1 M_2 \leq q$ when $\Rmat$ is diagonal, and  $M_1 M_2 \leq qN$ otherwise. 
Thus, PGD offers a scalable alternative to the greedy approach, especially for large \( q \), dense noise covariance \( \Rmat \), or when convergence is achieved within fewer than \( q \) effective iterations.
\color{black}


\section{Simulations}\label{sec;sim}
In this section, we compare the performance 
of the \ac{gfr}-\ac{ml} from \eqref{estimator}-\eqref{estimator_matrix}  for synthetic data (Subsection \ref{subsec;synthtic_data_sim}),  electrical networks data (Subsection \ref{subsec;power_sim}), and 
road network data (Subsection \ref{subsec;trans_sim}) 
 using the following  sampling policies:
\begin{enumerate}
\item [(i)]  
A-design~\cite{anis2016efficient}---the method presented in \eqref{a_design}, 
\item[(ii)] 
E-design~\cite{chen2015discrete}---the method presented in \eqref{e_design}, 
\item [(iii)] 
LR-design~\cite{Bai2020}---the method presented in  \eqref{LR_design},
\item[(iv)]  \ac{bcrb}---the method presented in \eqref{sampling_statment} with the cost function from \eqref{CRB},
\item[(v)] \ac{wcmse}---the method presented in \eqref{sampling_statment} with the cost function from \eqref{worst_mse},
\item[(vi)]  \ac{bmse}---the method presented in \eqref{sampling_statment} with the cost function from \eqref{mse_bayesian},
\item[(vii)] \ac{wcbmse}---the method presented in \eqref{sampling_statment} with the cost function from \eqref{wc_mse_bayesian}.
\end{enumerate}
Unless otherwise stated, we set $\mathcal{R} = {1,\ldots N/2}$ for both the A-design and the E-design. 
The cost functions in i)-vii)  lead to combinatorial problems, and, thus, are implemented here 
by
Algorithm \ref{Algorithm_greedy}, and in Subsection \ref{subsec;trans_sim} by Algorithm \ref{Algorithm_PGD}, with the corresponding cost function as $C(\dvec)$. 
In all simulations, the performance is evaluated using  $10,000$ Monte-Carlo simulations.  Unless otherwise is stated, $\xvec$ is generated  by using \eqref{x_distribution}, and $\yvec$ follows the measurement model in \eqref{reduced_model_diag}, with $\xvec$,  $\evec\sim\mathcal{N}(\zerovec,0.01\Imat)$, and $\mu=0.1$. This Bayesian setting is used to provide a comprehensive evaluation of the methods' performance, while avoiding reference to single-value results, 
as in a non-Bayesian setting.

\subsection{Synthetic Data}\label{subsec;synthtic_data_sim}
In this subsection, we evaluate the proposed methods  on an Erd\H{o}s-R$\acute{\text{e}}$nyi graph model with
 $50$ nodes and edge probability of $0.1$.
The edge weights are randomly drawn as $W_{k,n}\sim\mathcal{N}(5,1)$ for each edge $(k,n)\in\xi$.

Figures \ref{fig1}.a-\ref{fig1}.c present the \ac{mse} versus the percentage of sampled nodes, $\tilde{q}\define q/N\times100\%$, for the different sampling  methods, and different signal generative models
based on the model in \eqref{reduced_model_diag} and \eqref{x_distribution} with the graph filters that appear in Table \ref{tabFilters}.  

\begin{table}[hbt]
\begin{center}
\begin{tabular}{|c|c|c|c|}
\hline
 \textbf{Fig.} & \textbf{\ref{fig1}.a}& \textbf{\ref{fig1}.b}& \textbf{\ref{fig1}.c} \\
\hline
$h_{\text{M}}(\Lmat)$  &$(h_{\text{GMRF}}^2)^{\dagger}=\Lmat$
&$\Imat$
&$h_\text{Diff}^{-1}=e^{0.5\Lmat}$\\
\hline
$(h_{\text{R}}^{+}(\Lmat))^{\dagger}$  &$h_{\text{GMRF}}^2=\Lmat^{\dagger}$
&$h_{\text{Tikh}}^2=(\Imat+0.2\Lmat)^{-2}$
&$\Imat$ \\
\hline
\end{tabular}
\caption{Graph filters used to generate the data, where  
$h_{\text{GMRF}}(\Lmat)$ - \ac{gmrf}~\cite{Dong_Vandergheynst_2016,Vassilis_2016}; 
$h_{\text{Tikh}}(\Lmat)$ - Laplacian (Tikhonov) regularization~\cite{Vassilis_2016,isufi2022graph,MITPress}; 
$h_{\text{Diff}}(\Lmat)$ - heat diffusion kernel~\cite{thanou2016learning,Vassilis_2016}.}

\label{tabFilters}
\end{center}
\end{table}
 \vspace{-0.35cm}
In Figs. \ref{fig1}.a-\ref{fig1}.c the \ac{mse} decreases as the number of sampled nodes, $\tilde{q}$, increases, as expected, since there is more information for the estimation approach. 
In Fig. \ref{fig1}.a, we consider a realistic scenario where the estimated graph signal is smooth, but the measured graph signal is not (see, e.g. the power system example in Subsection \ref{subsec;power_sim}). Here, the estimation performance based on the sampling designs by the \ac{bmse} and the \ac{wcbmse} criteria outperforms other sampling methods for any $\tilde{q}$. This is because methods such as A-design, E-design, and LR-design are based on the assumption that the measured signal is smooth/bandlimited, which does not hold here. In addition, the \ac{wcmse}, which combines the \ac{bcrb} and the worst-case bias (see in \eqref{worst_mse}), outperforms the \ac{bcrb}
for $\tilde{q} > 60\%$, highlighting the importance of accounting for bias.

In Fig. \ref{fig1}.b, we consider the case where both $\xvec$ and $\yvec$ are smooth. Here,  the \ac{bmse}- and \ac{wcbmse}-based estimations outperform the estimations based on the A-design, E-design, and LR-design.  Although the data is smooth, it is not strictly bandlimited,  as these methods assume, or have mismatches in graph filters (e.g. the LR-design assumes different $h_{\text{M}}(\Lmat)$ and $h_{\text{R}}^{+}(\Lmat)$ in \eqref{reduced_model_diag} and \eqref{smothness_full_theta} from those used). The \ac{bcrb} and \ac{wcmse} perform poorly on average, as they optimize locally for specific values of $\xvec$: \ac{bcrb} assumes $\xvec = \zerovec$, while \ac{wcmse} assumes $\xvec$ aligns with the largest eigenvector of the bias term in \eqref{mse_general}.

In Fig. \ref{fig1}.c, we consider the test case of an inverse diffusion (high-pass filtering) process, i.e. $h_{\text{M}}(\Lmat)=h_{\text{Diff.}}^{-1}$, which can model anomalies or source localization in different applications
\cite{Sandryhaila2014,drayer2020detection,7528230,9909689}.  
We generate $\xvec$ using all-pass graph filter, $h_{\text{R}}^{+}(\Lmat)=\Imat$.
We set $\mathcal{R} = {N/2, \ldots, N}$ in \eqref{a_design} and \eqref{e_design}, as the measured signal passes through a \ac{hpf}.
The estimation based on   \ac{wcmse}, and \ac{wcbmse} outperform all other methods for $\tilde{q} \leq 80\%$, as their conservative worst-case approach is well-suited to the highly varied data with high-frequency components resulting from the combination of all-pass and high-pass filtering.
In contrast, the other methods
exhibit suboptimal performance due to their model mismatches. Specifically, the A-design and E-design assume the graph filters from \eqref{special_case}, the LR-design assumes ${h}_{\text{R}}^{+}(\Lmat) = \Lmat$, the \ac{bcrb} assumes $\xvec = \zerovec$, and the \ac{bmse}, which averages over all the eigenvalues of the Bayesian matrix in \eqref{mse_bayesian}, is more sensitive to outliers from very high graph frequencies.

\begin{figure*}[hbt]
\captionsetup[subfigure]{labelformat=empty}
     \centering
     \begin{subfigure}[b]{0.3\textwidth}
         \centering
         \includegraphics[width=\textwidth]{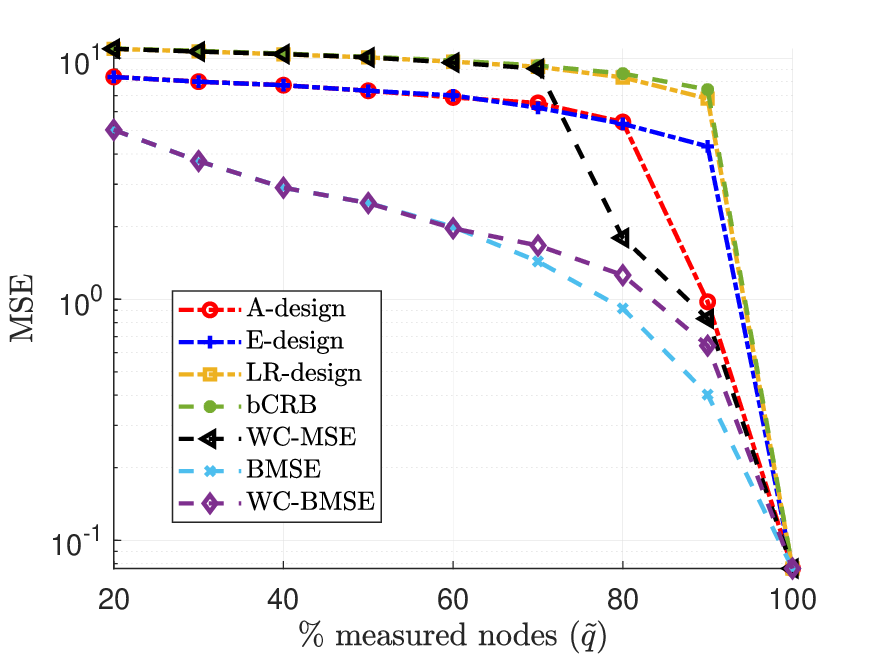}
         \caption{(a)}
         \label{case1}
     \end{subfigure}
     \hfill
     \begin{subfigure}[b]{0.3\textwidth}
         \centering
         \includegraphics[width=\textwidth]{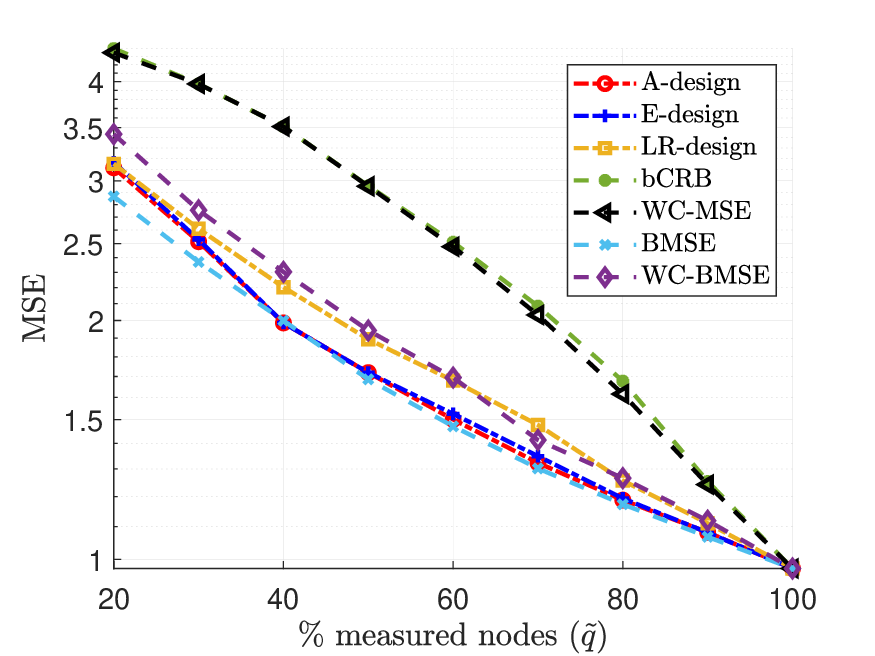}
         \caption{(b)}
         \label{case2}
     \end{subfigure}
     \hfill
     \begin{subfigure}[b]{0.3\textwidth}
         \centering
         \includegraphics[width=\textwidth]{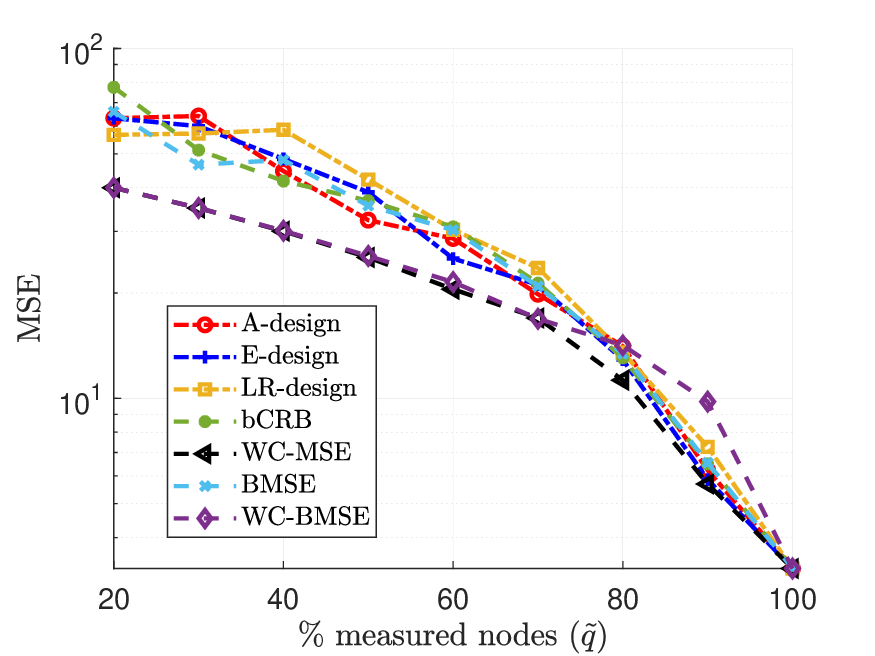}
         \caption{(c)}
         \label{case3}
     \end{subfigure}
     \vspace{-0.25cm}
        \caption{The \ac{mse} versus the percentage of sampled nodes, $\tilde{q}$, of the different sampling sets, 
        where the graph filters are  (see Table \ref{tabFilters}): (a) $h_{\text{M}}(\Lmat)=\Lmat,~(h_{\text{R}}^{+}(\Lmat))^{\dagger}=h_{\text{GMRF}}$; (b) $h_{\text{M}}(\Lmat)=\Imat,~(h_{\text{R}}^{+}(\Lmat))^{\dagger}=h_{\text{Tikh}}$, $\alpha=0.2$; and (3) $h_{\text{M}}(\Lmat)=h_{\text{Diff.}},~\tau=0.5,~h_{\text{R}}^{+}(\Lmat)=\Imat$.      \vspace{-0.25cm}}
        \label{fig1}
\end{figure*}
\subsection{Power System Data}\label{subsec;power_sim}
We now evaluate our sampling strategies in a realistic scenario of sensor allocation for \ac{psse}, a core task in energy management systems. 
A power system can be represented as an undirected weighted graph, ${\mathcal{G}}({\mathcal{V}},\xi)$, where nodes represent buses (generators or loads) and edges represent transmission lines  \cite{2021Anna,drayer2020detection,Giannakis_Wollenberg_2013}.
The graph Laplacian matrix, $\Lmat$, is constructed using line susceptances \cite{dabush2023state,drayer2020detection},
and the grid dynamics are described by 
nonlinear power flow equations, which are often linearized DC  models \cite{Giannakis_Wollenberg_2013}.

This linear model can be written as \eqref{original_model} with $h_{\text{M}}(\Lmat)=\Lmat$, where $\yvec$ is the active power vector and  $\xvec$ is the unknown system state vector, both can be treated as graph signals \cite{drayer2020detection,dabush2023state}.
\ac{psse}  aims to estimate $\xvec$ based on system measurements $\yvec$.
The voltage data of power grids
has been shown empirically and theoretically to be smooth/a  \ac{lpf} signal \cite{dabush2023state,2021Anna,drayer2020detection}. Thus, we model it by using $h_{\text{R}}^{+}(\Lmat)=\Lmat$.
To ensure identifiability,  bus 111 is set to be a reference bus with zero phase.
In the following, we use power data (susceptances and voltage angles) obtained from the IEEE 118-bus test case~\cite{iEEEdata}, with noise   
covariance $\Rmat=0.01\Imat$.
\subsubsection{Robustness to Sampling Set Size}\label{subsec;power_q}
 Figure~\ref{fig2}.a presents the \ac{mse} of the \ac{gfr}-\ac{ml} estimator versus the percentage of measured nodes, $\tilde{q}$, for the different sampling methods. The sharp decrease in \ac{mse} as $\tilde{q}$ increases from $91\%$ to $100\%$ indicates the phase transition from underdetermined to full observability.
 It can be seen that for $60\%\leq \tilde{q}\leq 90\%$, all proposed methods outperform the A-design, E-design, and LR-design. 
This is because the latter methods rely on the assumption that the measured signal is smooth or bandlimited, which is not satisfied here.  Interestingly, the \ac{wcmse} and \ac{bcrb} achieve better performance than in Fig. \ref{case1}, despite using the same graph filters, 
due to the non-Bayesian setup here: 
a single fixed voltage angle vector is estimated, and is relatively close to the specific voltage angles assumed by these cost functions.

\subsubsection{Robustness to Noise}\label{subsec;power_snr}
In Figure \ref{fig2}.b the \ac{mse} of the \ac{gfr}-\ac{ml} is presented  versus $\frac{1}{\sigma^2}$ for $\tilde{q}=70\%$ for the different sampling methods. 
 It can be seen that all methods are consistent, where the \ac{mse} decreases 
 as $1/\sigma^2$ increases.
The \ac{bcrb} and the LR-design are more robust for small values of $1/\sigma^2$, whereas the \ac{wcbmse} achieves a lower \ac{mse} for high values of $1/\sigma^2$. This highlights a tradeoff between robustness to noise and the alignment of the sampling method with the data characteristics. Here, the noise does not contain graphical information, as $\Rmat = \Imat$, which explains why the \ac{bcrb}, with its reduced reliance on graphical knowledge (assuming $\xvec_0 = \zerovec$), performs well in low \ac{snr} scenarios. However, at high $1/\sigma^2$ (high \ac{snr}), the fit of the sampling policy to the data becomes critical. Other sampling methods, apart from the \ac{wcbmse}, suffer from mismatch assumptions, such as the smoothness or bandlimited nature of the measured signal, 
or an inappropriate choice of $\xvec_0$ in \ac{bcrb} and \ac{wcmse}. Additionally, the \ac{bmse}, which averages over all eigenvalues of the Bayesian matrix in \eqref{mse_bayesian}, is more sensitive to outliers at the high graph frequencies in comparison to the \ac{wcbmse}, which takes the largest eigenvalue.

\subsubsection{Robustness to Topology Mismatches} 
In Fig. \ref{fig2}.c, the \ac{mse} of the \ac{gfr}-\ac{ml} is presented 
 versus the number of missing/added edges, $\Delta \Lmat$, for $\tilde{q}=70\%$.    
To investigate the robustness of the methods to topology perturbations, 
$\Delta \Lmat$ edges were randomly removed or added from the Laplacian matrix, which results in a perturbed Laplacian matrix in the sampling methods and in the \ac{gfr}-\ac{ml} estimator. 
It can be seen that the \ac{bcrb}, with its reduced reliance on graphical knowledge (assuming $\xvec_0 = \zerovec$), is the most robust to topology mismatches. 
In contrast, the \ac{mse} for all the other methods that rely on accurate graph information increase as $\Delta\Lmat$ increases.

\begin{figure*}[hbt]
\captionsetup[subfigure]{labelformat=empty}
     \centering
     \begin{subfigure}[b]{0.3\textwidth}
         \centering
         \includegraphics[width=\textwidth]{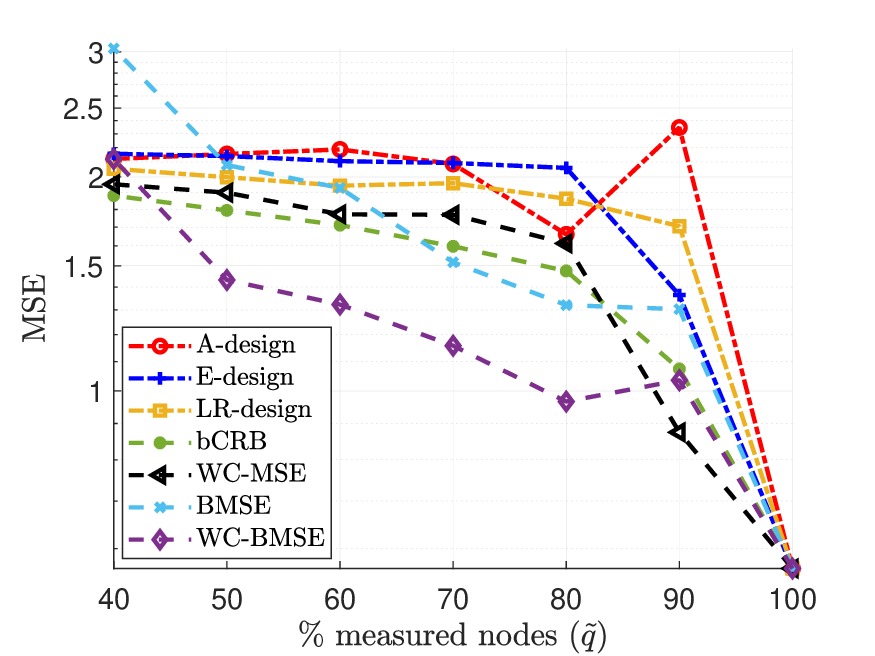}
         \caption{(a)}
         \label{vs_sigma}
     \end{subfigure}
     \hfill
     \begin{subfigure}[b]{0.3\textwidth}
         \centering
         \includegraphics[width=\textwidth]{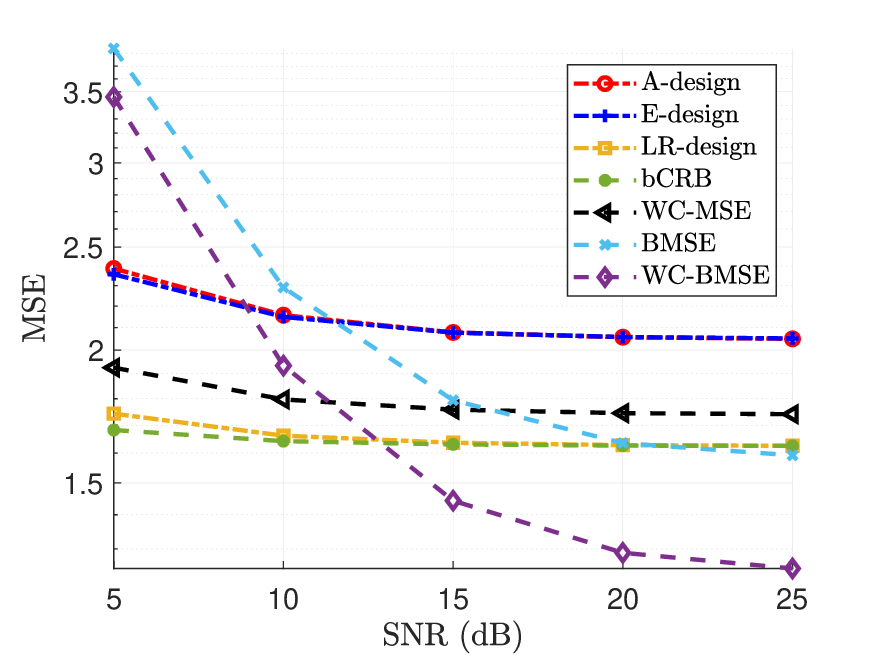}
         \caption{(b)}
         \label{vs_r}
     \end{subfigure}
     \hfill
     \begin{subfigure}[b]{0.3\textwidth}
         \centering
         \includegraphics[width=\textwidth]{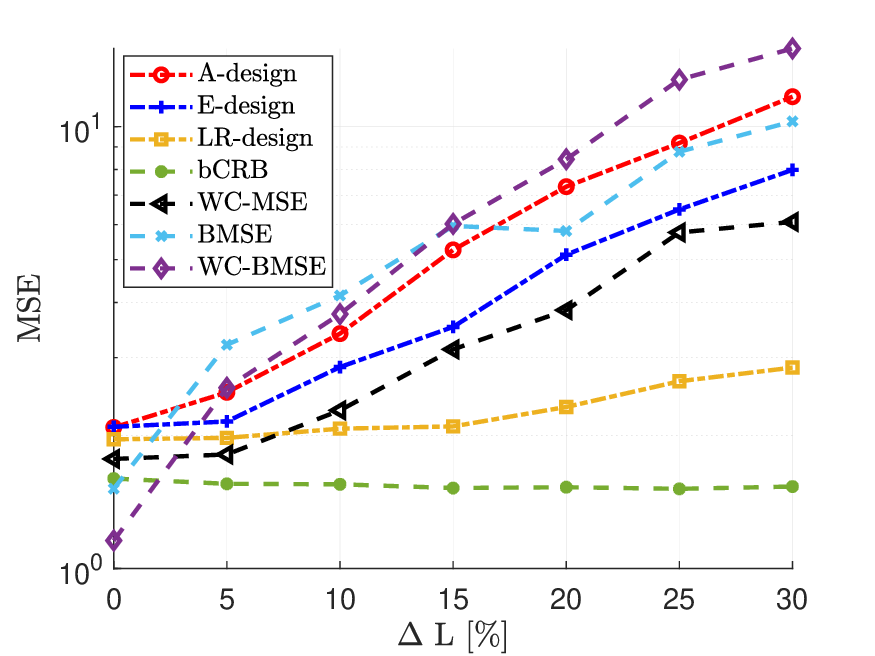}
         \caption{(c)}
         \label{vs_x}
     \end{subfigure}
     \vspace{-0.25cm}
        \caption{State estimation in power systems: the \ac{mse} of the \ac{gfr}-\ac{ml} for all the sampling methods 
    versus (\textbf{a}) the number of measured sensors, $\tilde{q}$, with $\sigma^2=0.01$;  (\textbf{b}) $\frac{1}{\sigma^2}$ with $\tilde{q}=70\%$; and (\textbf{c}) the number of edges that are different from the edges in the true graph, $\Delta \Lmat$.
    \vspace{-0.5cm}}
        \label{fig2}
\end{figure*}
\vspace{-0.25cm}
\subsection{Road Networks Data}\label{subsec;trans_sim}
Road networks can be modeled as graphs, where nodes represent key traffic locations (e.g. intersections), and edges represent roads connecting them \cite{Road_Network_Data_Model2}. In this framework, various traffic metrics, such as vehicle speed, vehicle density, and transit demand, can be treated as graph signals. 
 Due to the large scale of road networks, sensor deployment and data processing become computationally challenging. 
 Thus, sampling a subset of nodes provides a cost-effective solution for monitoring, particularly as traffic flow often exhibits diffusion-like behavior \cite{traffic_diffusion}.

In this section, we evaluate sampling allocation schemes that optimize the estimation performance in road networks. 
Simulations are performed on the Minnesota road graph ($N=2,642$ nodes) from the \ac{gsp} Toolbox \cite{perraudin2014gspbox}, where smooth graph signals are generated via \eqref{reduced_model_diag} using $h_{\text{M}}(\Lmat)=h_{Diff.}$ with $\tau=0.5$, and $h_{\text{R}}^{+}(\Lmat)=h_{Tikh.}$ with $\alpha=0.01$ (see Table 1). 
We compare the \ac{mse} of the \ac{gfr}-\ac{ml} estimator versus the number of nodes, $\tilde{q}$, for all sampling methods.
The methods are implemented using the computationally-efficient \ac{pgd} algorithm from Algorithm \ref{Algorithm_PGD}. Thus, the gradient of the cost function must be computed: 
for the A-design method \cite{anis2016efficient}, the implementation leverages the connection discussed in Claim  \ref{claim_A_E}, utilizing the gradient of the \ac{bmse} provided in \eqref{gradient_bmse}; for the E-design and LR-design methods, the gradient is given by \eqref{gradient_wc_bmse} with $h_{\text{M}}(\Lmat)$ and $h_{\text{R}}^{+}(\Lmat)$ from \eqref{special_case}, with $\mu=10^4$ and $\mathcal{R}=\{1,\ldots,N/2\}$,  for the E-design \cite{chen2015discrete}, and $h_{\text{M}}(\Lmat)=\Imat$ and $h_{\text{R}}^{+}(\Lmat)=\Lmat$  for the LR-design. These gradients are a byproduct of our work.

Figure~\ref{fig:road_net} shows the \ac{mse} of the \ac{gfr}-\ac{ml} estimator versus sampling ratio $\tilde{q}$.  As expected, the \ac{mse} decreases with more samples for all methods. 
For $q = 90\%$, all sampling methods yield similar results due to near-complete observability. 
For $q = 20\%$,  all methods yield comparable (poor) results, due to insufficient data for accurate estimation.   In the intermediate range of $40\%\leq \tilde{q} \leq 80\%$, the \ac{bmse} criterion outperforms the estimation based on the A-design, E-design, and LR-design 
since the data is smooth, but not perfectly bandlimited. 
In addition, the \ac{bmse}-based sampling outperforms the \ac{bcrb}, \ac{wcmse}, and \ac{wcbmse} methods, because it balances flexibility and generality. Unlike the \ac{bcrb} and \ac{wcmse}, which optimize for specific values of $\xvec$, and the overly conservative \ac{wcbmse}, the \ac{bmse} leverages a Bayesian prior to perform well across a wide range of $\xvec$, ensuring better overall performance.
  \begin{figure}[hbt]
    \centering
    \includegraphics[width=0.7\columnwidth]{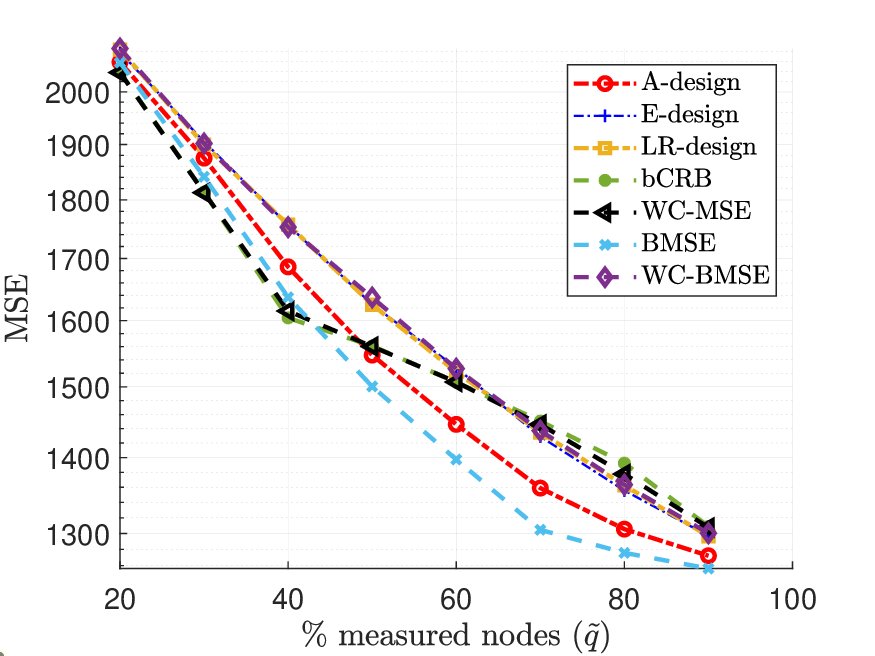}
    \caption{Estimation in  Minnesota road graph: The \ac{mse} versus the percentage of sampled nodes, $\tilde{q}$, of the different sampling sets, for  $h_{\text{M}}(\Lmat)=h_{Diff.}$,  $\tau=0.5$, and $h_{\text{R}}^{+}(\Lmat)=h_{Tikh.}$,  $\alpha=0.01$. \vspace{-0.5cm}}\label{fig:road_net}
\end{figure}
	\section{Conclusion}
	\label{Conclusions}
In this paper, we propose a general framework for efficient sampling allocation for graph signal recovery using the \ac{gfr}-\ac{ml} estimator. By utilizing general graph filters for both the measurement model and the regularization, the \ac{gfr}-\ac{ml} approach effectively enhances recovery performance in underdetermined systems.
We show that its \ac{mse} depends on the unknown parameter and, thus, cannot be used as a criterion for optimization {\textcolor{black}{in the general case}}. Therefore, we introduce four task-specific cost functions to optimize sampling allocation strategies: \ac{bcrb}, \ac{wcmse}, \ac{bmse}, and \ac{wcbmse}. We investigate their properties and establish connections to existing methods, including A-design, E-design, and the LR approach. To address computational complexity, we developed a greedy algorithm and an alternating \ac{pgd} method. 

Simulation results on synthetic graphs, the IEEE 118-bus power system, and the Minnesota road network demonstrate significant \ac{mse} reductions compared to existing methods.
\marksnippet{r1_relation_with_bl_conc1}
{\textcolor{black}{In particular, our framework consistently outperforms the A-design and E-design schemes in terms of \ac{mse}, as these designs are based on the mismatched assumption of perfectly bandlimited signals. }}
For the tested cases, the \ac{wcmse} demonstrates the best fit for high-frequency data, the \ac{wcbmse} and the \ac{bmse} achieve the best performance in most Bayesian settings, and the \ac{bcrb}  exhibits high robustness to noise and topology mismatches.
Our framework demonstrated flexibility and scalability, accommodating general graph filters and enabling task-specific adaptations.

\appendices
\renewcommand{\thesectiondis}[2]{\Alph{section}:}

\vspace{-0.1cm}
 \section{\textcolor{black}{Proof of Claim \ref{Claim_3}}}
\label{Appendix_A_sec}
\vspace{-0.05cm}
 In this appendix, we derive the gradient of the cost functions from \eqref{CRB}, \eqref{worst_mse}, \eqref{mse_bayesian}, and \eqref{wc_mse_bayesian}. For simplicity, 
 \( \Kmat(\dvec) \) from \eqref{estimator_matrix}, ${h}_{\text{M}}(\Lmat)$, and ${h}_{\text{R}}^{+}(\Lmat)$ are denoted as \( \Kmat \), ${{h}_{\text{M}}}$, and ${{h}_{\text{R}}^{+}}$, respectively, throughout the derivations. The order of the following subsections is structured to simplify the mathematical steps, rather than reflecting their order of use in the paper.
\vspace{-0.25cm}
 \subsection{Derivation of the Gradient of  
 \eqref{mse_bayesian}
}\label{gradient_BCRB}
In order to derive the gradient of \eqref{mse_bayesian} \ac{wrt} the vector $\dvec$, we note that by using the linearity of derivatives, the $n$th component of the gradient of   the \ac{bmse} \ac{wrt} $\dvec$ is
\beqna\label{gradient_bmse_n_entry}
[\nabla \text{BMSE}(\dvec)]_n =\frac{\partial}{\partial d_n} {\text{tr}}(\Kmat^{-1})= {\text{tr}}\left(\frac{\partial\Kmat^{-1}}{\partial d_n}\right),
\eeqna
$n=1,\ldots,N$, where the second equality is a known derivative formula (see Eq. (36) in \cite{matrix_cookbook}). 
From (40) in \cite{matrix_cookbook}, we have
\beqna\label{derivative_inv_K_step1}
\frac{\partial\Kmat^{-1}}{\partial d_n}=-\Kmat^{-1} \frac{\partial \Kmat}{\partial d_n} \Kmat^{-1}, ~n=1,\ldots,N. 
\eeqna
Let $\Emat_n$ be a diagonal matrix with $n$ at position $(n,n)$ and zeros elsewhere. The derivative of $\Kmat$ from \eqref{estimator_matrix} is (see (37) in \cite{matrix_cookbook})
\begin{equation}\label{derivative_K}
\frac{\partial \Kmat}{\partial d_n} = {{h}_{\text{M}}} \left( \Emat_n \Rmat^{-1} \Dmat + \Dmat \Rmat^{-1} \Emat_n \right) {{h}_{\text{M}}}.
\end{equation}
By substituting \eqref{derivative_inv_K_step1} and \eqref{derivative_K}  in \eqref{gradient_bmse_n_entry} and using the identity ${\text{tr}}(\Amat)={\text{tr}}(\Amat^T)$, we obtain:
\beqna\label{gradient_BMSE_step4}
\frac{\partial\Kmat^{-1}}{\partial d_n} = -2{\text{tr}}\Big( \Kmat^{-1} {{h}_{\text{M}}}  \Emat_n \Rmat^{-1} \Dmat {{h}_{\text{M}}} \Kmat^{-1}\Big)\nonumber\\=-2{\text{tr}}\Big(\Emat_n \Rmat^{-1} \Dmat {{h}_{\text{M}}}\Kmat^{-2} {{h}_{\text{M}}}\Big),\hspace{0.6cm}
\eeqna
where the last equality follows from ${\text{tr}}(\Amat\Bmat)={\text{tr}}(\Bmat\Amat)$. Using ${\text{tr}}(\Emat_n\Amat)=\Amat_{n,n}$, yields \eqref{general_gradient_form} with $\Qmat$ given by \eqref{gradient_bmse}.
\subsection{Derivation of of the Gradient of \eqref{CRB}}\label{gradient_CRB}
To compute the gradient of  \eqref{CRB}, we substitute the connection that is obtained by changing variables in \eqref{estimator_matrix}, ${{h}_{\text{M}}}\Dmat\Rmat^{-1}\Dmat{{h}_{\text{M}}} = \Kmat-{{h}_{\text{R}}^{+}}$, in \eqref{CRB}: 
\begin{equation}\label{bcrb_form2}
\text{BCRB}(\dvec) = {\text{tr}}(\Kmat^{-2}(\Kmat-{{h}_{\text{R}}^{+}}))
=\text{BMSE}(\dvec)-{\text{tr}}(\Kmat^{-2}{{h}_{\text{R}}^{+}}).  \nonumber 
\end{equation}
The gradient of this \ac{bcrb} is 
\begin{equation}\label{gradient_bcrb_1_step}
\nabla \text{BCRB}(\dvec) = \nabla\text{BMSE}(\dvec)-\nabla{\text{tr}}(\Kmat^{-2}{{h}_{\text{R}}^{+}} ).  
\end{equation}
The $n$th component of the last term of \eqref{gradient_bcrb_1_step} is
\beqna\label{gradient_bcrb_n_entry}
[\nabla {\text{tr}}(\Kmat^{-2}{{h}_{\text{R}}^{+}} )]_n=\frac{\partial}{\partial d_n} {\text{tr}}(\Kmat^{-2}{{h}_{\text{R}}^{+}})
= {\text{tr}}\left(\frac{\partial\Kmat^{-2}}{\partial d_n}{{h}_{\text{R}}^{+}}\right),\hspace{0cm}
\eeqna
where the second equality is a known derivative formula (see Eq. (36) in \cite{matrix_cookbook}), and since $h_{\text{R}}^{+}(\Lmat)$ does not depend on $\dvec$.  
By using derivative rules (see (37) from \cite{matrix_cookbook}), and substituting in \eqref{derivative_inv_K_step1}
\beqna\label{derivative_square_inv_K}
\frac{\partial\Kmat^{-2}}{\partial d_n}=-\Kmat^{-2}\frac{\partial\Kmat}{\partial d_n}\Kmat^{-1}-\Kmat^{-1}\frac{\partial\Kmat}{\partial d_n}\Kmat^{-2}.
\eeqna
By substituting \eqref{derivative_square_inv_K} in \eqref{gradient_bcrb_n_entry} and using the property ${\text{tr}}(\Amat\Bmat)={\text{tr}}(\Bmat\Amat)$ and the property ${\text{tr}}(\Amat)={\text{tr}}(\Amat^T)$ we obtain
\beqna\label{gradient_bcrb_n_entry_step2}
[\nabla {\text{tr}}\Big(\Kmat^{-2}{{h}_{\text{R}}^{+}} \Big)]_n=-2 {\text{tr}}(\Kmat^{-2}\frac{\partial\Kmat}{\partial d_n}\Kmat^{-1}{{h}_{\text{R}}^{+}}).
\eeqna
By substituting \eqref{derivative_K} in \eqref{gradient_bcrb_n_entry_step2} and using the property ${\text{tr}}(\Amat\Bmat)={\text{tr}}(\Bmat\Amat)$ and the property ${\text{tr}}(\Amat)={\text{tr}}(\Amat^T)$ we obtain 
\beqna\label{gradient_bcrb_n_entry_final}
[\nabla {\text{tr}}\Big(\Kmat^{-2}{{h}_{\text{R}}^{+}} \Big)]_n\hspace{5.cm}\nonumber\\=-2 
{\text{tr}}\Big(\Emat_n\Rmat^{-1} \Dmat{{h}_{\text{M}}}\Kmat^{-1}\big(\Kmat^{-1}{{h}_{\text{R}}^{+}}+{{h}_{\text{R}}^{+}}\Kmat^{-1}\big)\Kmat^{-1}{{h}_{\text{M}}}\Big).
\eeqna
By noting that ${\text{tr}}(\Emat_n\Amat)=\Amat_{n,n}$ we obtain
\beqna\label{gradient_bcrb_final}
\nabla {\text{tr}}(\Kmat^{-2}{{h}_{\text{R}}^{+}} )\hspace{6cm}\nonumber\\=-2 
{\text{diag}}\Big(\Rmat^{-1} \Dmat{{h}_{\text{M}}}\Kmat^{-1}(\Kmat^{-1}{{h}_{\text{R}}^{+}}+{{h}_{\text{R}}^{+}}\Kmat^{-1})\Kmat^{-1}{{h}_{\text{M}}}\Big).
\eeqna
Substituting \eqref{gradient_bmse} into \eqref{general_gradient_form} to obtain $\nabla \text{BMSE}(\dvec)$, and then substituting it and \eqref{gradient_bcrb_final} into \eqref{gradient_bcrb_1_step}, yields \eqref{general_gradient_form} with $\Qmat$ from \eqref{gradient_bcrb}.

\vspace{-0.2cm}
\subsection{Derivation of  the Gradient of \eqref{worst_mse}}\label{gradient_worst_mse}
\vspace{-0.05cm}
In the following we derive the gradient of \eqref{worst_mse} \ac{wrt} $\dvec$ 
First, we note that according to \eqref{worst_mse}, we have
\beqna \label{gradient_wcmse_1step}
\nabla\text{MSE}_{wc}(\dvec) = \nabla\text{bCRB}(\dvec)+\nabla\lambda_{\text{max}}\left( {{h}_{\text{R}}^{+}} \Kmat^{-2} {{h}_{\text{R}}^{+}} \right).
\eeqna
The eigenvalue $\lambda_{\text{max}}\left( {{h}_{\text{R}}^{+}} \Kmat^{-2} {{h}_{\text{R}}^{+}} \right)$ depends on $\dvec$ through $\Kmat$. The gradient of the maximal eigenvalue \ac{wrt} a parameter is given by (see, e.g. Eq. (67) in \cite{matrix_cookbook})
\beqna\label{gradient_wmse_step1}
\frac{\partial \lambda_{\text{max}}\left( {{h}_{\text{R}}^{+}} \Kmat^{-2} {{h}_{\text{R}}^{+}} \right)}{\partial d_n}
= \uvec_{\text{max}}^{T} {{h}_{\text{R}}^{+}}\frac{\partial \left(  \Kmat^{-2}\right)}{\partial d_n}  {{h}_{\text{R}}^{+}} \uvec_{\text{max}},
\eeqna
where $\uvec_{\text{max}}$ is the normalized eigenvector corresponding to the maximal eigenvalue, and the last equality is obtained since $h_{\text{R}}^{+}(\Lmat)$ is not a function of $\dvec$.
By substituting \eqref{derivative_square_inv_K}  in \eqref{gradient_wmse_step1} we obtain
\beqna\label{gradient_wmse_step2}
\frac{\partial \lambda_{\text{max}}\left( {{h}_{\text{R}}^{+}} \Kmat^{-2} {{h}_{\text{R}}^{+}} \right)}{\partial d_n}
=-2\uvec_{\text{max}}^{T} {{h}_{\text{R}}^{+}}\Kmat^{-2}\frac{\partial\Kmat}{\partial d_n}\Kmat^{-1}{{h}_{\text{R}}^{+}} \uvec_{\text{max}}.
\eeqna
By substituting \eqref{derivative_K} in \eqref{gradient_wmse_step2} we obtain
\beqna\label{gradient_wmse_step3}
\frac{\partial \lambda_{\text{max}}\left( {{h}_{\text{R}}^{+}} \Kmat^{-2} {{h}_{\text{R}}^{+}} \right)}{\partial d_n}
=-2\uvec_{\text{max}}^{T} {{h}_{\text{R}}^{+}}\Kmat^{-2}{{h}_{\text{M}}} \hspace{1.8cm}\nonumber\\ \times\Big( \Emat_n \Rmat^{-1} \Dmat
 + \Dmat \Rmat^{-1} \Emat_n \Big) {{h}_{\text{M}}}\Kmat^{-1}{{h}_{\text{R}}^{+}} \uvec_{\text{max}}.\hspace{0cm}
\eeqna
Since the trace of a scalar is the scalar itself, the property ${\text{tr}}(\Amat\Bmat)={\text{tr}}(\Bmat\Amat)$ and the property ${\text{tr}}(\Amat)={\text{tr}}(\Amat^T)$ we obtain
\beqna\label{gradient_wmse_step4}
\frac{\partial \lambda_{\text{max}}\left( {{h}_{\text{R}}^{+}} \Kmat^{-2} {{h}_{\text{R}}^{+}} \right)}{\partial d_n}
=-2{\text{tr}}\Big( \Emat_n \Rmat^{-1} \Dmat {{h}_{\text{M}}}\Kmat^{-1}\hspace{1.2cm}\nonumber\\ \times\big(\Kmat^{-1}{{h}_{\text{R}}^{+}} \uvec_{\text{max}}\uvec_{\text{max}}^{T} {{h}_{\text{R}}^{+}}+{{h}_{\text{R}}^{+}} \uvec_{\text{max}}\uvec_{\text{max}}^{T} {{h}_{\text{R}}^{+}}\Kmat^{-1}\big)\Kmat^{-1}{{h}_{\text{M}}}\Big).
\eeqna
By noting that ${\text{tr}}(\Emat_n\Amat)=\Amat_{n,n}$ we obtain
\beqna\label{gradient_wmse_step5}
\nabla \lambda_{\text{max}}\left( {{h}_{\text{R}}^{+}} \Kmat^{-2} {{h}_{\text{R}}^{+}} \right)
=-2{\text{diag}}\Big(\Rmat^{-1} \Dmat {{h}_{\text{M}}}\Kmat^{-1}\hspace{1.2cm}\nonumber\\\times\big(\Kmat^{-1}{{h}_{\text{R}}^{+}} \uvec_{\text{max}}\uvec_{\text{max}}^{T} {{h}_{\text{R}}^{+}}+{{h}_{\text{R}}^{+}} \uvec_{\text{max}}\uvec_{\text{max}}^{T} {{h}_{\text{R}}^{+}}\Kmat^{-1}\big)\Kmat^{-1}{{h}_{\text{M}}}\Big).\hspace{0cm}
\eeqna
Substituting \eqref{gradient_bcrb} into \eqref{general_gradient_form} to obtain $\nabla \text{bCRB}(\dvec)$, and then substituting it and \eqref{gradient_wmse_step5} into \eqref{gradient_wcmse_1step}, yields \eqref{general_gradient_form} with $\Qmat$ from \eqref{gradient_wcmse}.
\vspace{-0.2cm}
\subsection{Derivation of the Gradient of \eqref{wc_mse_bayesian}}\label{gradient_wcbmse_derivation}
\vspace{-0.05cm}
In the following we derive the gradient of \eqref{wc_mse_bayesian}. First, we note that 
$\lambda_{\text{max}}(\Kmat^{-1})=\lambda_{\text{min}}^{-1}(\Kmat).$
The eigenvalue $\lambda_{\text{min}}\left( \Kmat \right)$ depends on $\dvec$ through $\Kmat$. The gradient of an eigenvalue \ac{wrt} a parameter is given by (see, e.g. Eq. (67) in \cite{matrix_cookbook})
\beqna\label{gradient_wcbmse_derivation_step1}
\frac{\partial \lambda_{\text{min}}\left( \Kmat\right)}{\partial d_n} = \uvec_{\text{min}}^{T} \frac{\partial \Kmat}{\partial d_n} \uvec_{\text{min}},
\eeqna
where $\uvec_{\text{min}}$ is the normalized eigenvector corresponding to $\lambda_{\text{min}}\left( \Kmat \right)$.
By substituting \eqref{derivative_K} in \eqref{gradient_wcbmse_derivation_step1} we obtain
\beqna\label{gradient_wcbmse_derivation_step2}
\frac{\partial \lambda_{\text{min}}^{-1}(\Kmat)}{\partial d_n}
\hspace{6.6cm}\nonumber\\  =-\lambda_{\text{min}}^{-2}(\Kmat)\uvec_{\text{min}}^{T} {{h}_{\text{M}}} \Big( \Emat_n \Rmat^{-1} \Dmat + \Dmat \Rmat^{-1} \Emat_n \Big) {{h}_{\text{M}}} \uvec_{\text{min}}
\hspace{0.5cm}\nonumber\\
=-\lambda_{\text{min}}^{-2}(\Kmat) \text{tr}\big(( \Emat_n \Rmat^{-1} \Dmat + \Dmat \Rmat^{-1} \Emat_n) {{h}_{\text{M}}} \uvec_{\text{min}}\uvec_{\text{min}}^{T} {{h}_{\text{M}}}\big).
\eeqna
By using the 
trace operator properties, we obtain
\beqna\label{gradient_wcbmse_derivation_step4}
\frac{\partial \lambda_{\text{min}}^{-1}(\Kmat)}{\partial d_n}
=-2\lambda_{\text{min}}^{-2}(\Kmat)   {\text{tr}}\Big( \Emat_n \Rmat^{-1} \Dmat {{h}_{\text{M}}} \uvec_{\text{min}}\uvec_{\text{min}}^{T} {{h}_{\text{M}}}\Big).
\eeqna
Noting that ${\text{tr}}(\Emat_n\Amat)=\Amat_{n,n}$, yields \eqref{general_gradient_form} with $\Qmat$ given by \eqref{gradient_wc_bmse}.

\vspace{-0.35cm}
\bibliographystyle{IEEEtran}
\bibliography{IEEEabrv,refs} 	
\vspace{0.5cm}
{\Large{\centering Supplemental Material for the Paper: \\ Efficient Sampling Allocation Strategies for General Graph-Filter-Based Signal Recovery}}

\vspace{0.5cm}
This document contains supplemental material for the paper \cite{Lital_25}, using the same notation. 
The following appendices are ordered to reflect the logical derivation path: In Subsection~\ref{ss;bandlimited_estimation} we derive the estimation method for strictly bandlimited graph signals. 
In Subsection~\ref{appendix_relation_bandlimited} we prove Claim \ref{claim_A_E}, which describes the corresponding \ac{mse} expression and the four associated cost functions.
In Subsection \ref{app;limit_cost_func}, we analyze the asymptotic behavior of the \ac{gfr}-\ac{ml} estimator and the proposed cost functions for $\mu\to\infty$. Subsection \ref{appendix_theorem_bandlimited} proves supporting matrix identities. Finally, Subsections \ref{sec_additive}, 
\ref{sec_submodularity_bmse}, and \ref{convexity_proof} discuss the additivity, submodularity and convexity of the cost functions.
The following table presents the assumptions of each section in the following. 
\vspace{-0.25cm}
\begin{table}[h]
\centering
\caption{Assumptions Summary Across Sections}
\label{tab:mu_assumptions}
\begin{tabular}{|c|c|c|c|c|}
\hline
\textbf{Sec.} & $\mu$ & $\xvec_0$ & $h_R(\Lmat)$ & $h_M(\Lmat)$ \\
\hline
S.I & $\to \infty$ & $ \in \ker(h_R(\Lmat))$ & $\Vmat^\top \operatorname{diag}(\onevec_{\mathcal{V} \setminus \mathcal{R}})\Vmat$ & $\Imat$ \\
\hline
S.II & $\to \infty$ & $ \in \ker(h_R(\Lmat))$ & $\Vmat^\top \operatorname{diag}(\onevec_{\mathcal{V} \setminus \mathcal{R}})\Vmat$ & $\Imat$ \\
\hline
S.III & $\to \infty$ & general & $h_R(\lambda_i) = 0,~ \forall i \in \mathcal{R}$ & general \\
\hline
S.IV & $ \to \infty$ & general & general & general \\
\hline
\end{tabular}
\end{table}


\vspace{-0.5cm}
\section{Bandlimited Graph Signal Recovery}\label{ss;bandlimited_estimation}
In this section, we develop the  GFR-ML estimator from \eqref{estimator}-\eqref{estimator_matrix} in \cite{Lital_25} for the special case of bandlimited graph signals, as defined in Subsection~\ref{special_case_bandlited}.The following derivations show that our general GFR-ML estimator reduces to the known GLS form under strict bandlimitedness (see, e.g. \cite{ortega2022introduction}).

In the strictly bandlimited setting (see Subsection \ref{special_case_bandlited}), it is assumed that the signal \( \xvec \) is confined to a subset of graph frequencies \( \mathcal{R} \subset \mathcal{V} \)  so that \( \xvec = \Umat_{\mathcal{R}} \tilde{\xvec} \), for some coefficient vector \( \tilde{\xvec} \in \mathbb{R}^{|\mathcal{R}|} \). This setting corresponds to the regularizer\footnote{In fact, the derivation and the results in this section are the same for any filter that satisfy $h_{\text{R}}^{+}(\lambda_i)=0, \forall i\in\mathcal{R}$.}
\be
\label{regularization_bandli}
h_{\text{R}}^{+}(\Lmat) = \Vmat\, \text{diag}(\onevec_{\mathcal{V} \setminus \mathcal{R}})\, \Vmat^{-1},
\ee
where \( \onevec_{\mathcal{V} \setminus \mathcal{R}} \in \mathbb{R}^N \) is the indicator vector for frequencies outside \( \mathcal{R} \), and \( \Vmat \) is the graph Fourier basis. 
This regularizer is designed as a projection onto the complement of the bandlimited subspace. 
The signal prior \( \xvec_0 \) is then assumed to satisfy \( \xvec_0 \in \text{ker}(h_{\text{R}}^{+}(\Lmat)) \). As a result,
$
h_{\text{R}}^{+}(\Lmat)\xvec_0 = \zerovec$.
In addition, to represent strictly bandlimited graph signals using the generalized regularization in \eqref{smothness_full_theta} with the prior \( \xvec_0 \) and the regularizer from \eqref{regularization_bandli}, it is required that \( \varepsilon = 0 \), corresponding to the limit \( \mu \to \infty \). 
Thus, under these assumptions, our general 
 GFR-ML estimator from \eqref{estimator} is reduced to
\be
\hat{\xvec} = \Kmat^{-1}(\dvec)\, h_{\text{M}}^T(\Lmat)\Dmat\Rmat^{-1}\Dmat\yvec.
\ee

Since $\tilde{\xvec}=\Vmat^{-1} \xvec$, we can write the ML estimator of the GFT of $\xvec$ as 
\be
\label{hat_tilde_x1}
\hat{\tilde{\xvec}}= \Vmat^{-1} \hat{\xvec} =\Vmat^{-1}\Kmat^{-1}(\dvec)\, h_{\text{M}}^T(\Lmat)\Dmat\Rmat^{-1}\Dmat\yvec.
\ee
Using the selection operator $\Pmat_{\mathcal{R}} \in \mathbb{R}^{|\mathcal{R}| \times N}$ that extracts the entries of $\tilde{\xvec}$ corresponding to the non-zero frequencies in $\mathcal{R}$, \eqref{hat_tilde_x1} can be written as
\beqna
\label{estimator_app2}
\hat{\tilde{\xvec}}_{\mathcal{R}} = \Pmat_{\mathcal{R}} \Vmat^{-1} \Kmat^{-1}(\dvec)\, h_{\text{M}}^T(\Lmat)\Dmat\Rmat^{-1}\Dmat\yvec
\nonumber\\=\Umat_{\mathcal{R}}^T \Kmat^{-1}(\dvec)\, h_{\text{M}}^T(\Lmat)\Dmat\Rmat^{-1}\Dmat\yvec,
\eeqna
where we use the fact that
 \( \Umat_{\mathcal{R}} = \Vmat \Pmat_{\mathcal{R}}^T \), and 
$
\Pmat_{\mathcal{R}} \Vmat^{-1} = \Umat_{\mathcal{R}}^T$.

Moreover, in the strictly bandlimited case, the signal lies in the subspace spanned by \( \Umat_{\mathcal{R}} \), and thus the effect of the regularization term in \( \Kmat^{\dagger}(\dvec) \) vanishes. That is,   \( \Kmat^{\dagger}(\dvec) \) from \eqref{estimator_matrix} for $\mu\to\infty$ becomes (see Theorem \ref{asymptotic_k} in \cite{Lital_25})
\begin{equation}
\Vmat^{-1}\Kmat^{\dagger}(\dvec)\Vmat= \begin{bmatrix}
(\Umat_{\mathcal{R}}^{T}\Kmat_{\text{M}}\Umat_{\mathcal{R}})^{-1} & \zerovec \\ \zerovec & \zerovec
\end{bmatrix}
,\label{KKK}
\end{equation}
where
\be
\label{K_m_def}
\Kmat_{\text{M}} \define {h}_{\text{M}}(\Lmat)\Dmat\Rmat^{-1}\Dmat{h}_{\text{M}}(\Lmat),
\ee
and $\Umat_{\mathcal{R}}^T \Kmat_{\text{M}} \Umat_{\mathcal{R}}$ is a non-singular matrix. 

To proceed, we multiply both sides of \eqref{KKK} by $\Pmat_{\mathcal{R}}$ from the left and by $\Vmat^{-1}$ from the right to obtain
\be
\label{relation1}
\Umat_{\mathcal{R}}^{T} \Kmat^{\dagger}(\dvec) = (\Umat_{\mathcal{R}}^{T}\Kmat_{\text{M}}\Umat_{\mathcal{R}})^{-1} \Umat_{\mathcal{R}}^T,
\ee
where we used the property that the columns of $\Vmat^{-1}\Kmat^{\dagger}(\dvec)\Vmat$ correspond to the complement of the bandlimited subspace are zero.
Substituting \eqref{relation1} into the estimator \eqref{estimator_app2} results in
\be
\label{hat_tilde_x2}
\hat{\tilde{\xvec}}_{\mathcal{R}} 
= \left( \Umat_{\mathcal{R}}^T \Kmat_{\text{M}} \Umat_{\mathcal{R}} \right)^{-1} \Umat_{\mathcal{R}}^T h_{\text{M}}^T(\Lmat)\Dmat\Rmat^{-1}\Dmat\yvec.
\ee

Now, we define the effective measurement matrix as
\be
\Hmat_{\mathcal{R}} \triangleq h_{\text{M}}(\Lmat) \Umat_{\mathcal{R}} \in \mathbb{R}^{N \times |\mathcal{R}|}.
\ee
Using this definition and \eqref{KKK}, we obtain
\[
\Umat_{\mathcal{R}}^T \Kmat_{\text{M}} \Umat_{\mathcal{R}} = \Hmat_{\mathcal{R}}^T \Dmat \Rmat^{-1} \Dmat \Hmat_{\mathcal{R}}, \quad
\Umat_{\mathcal{R}}^T h_{\text{M}}^T(\Lmat) = \Hmat_{\mathcal{R}}^T,
\]
which implies that \eqref{hat_tilde_x2} can be rewritten as
\be
\label{hat_tilde_x3}
\hat{\tilde{\xvec}}_{\mathcal{R}} 
= \left( \Hmat_{\mathcal{R}}^T \Dmat \Rmat^{-1} \Dmat \Hmat_{\mathcal{R}} \right)^{-1}
\Hmat_{\mathcal{R}}^T \Dmat \Rmat^{-1} \yvec.
\ee

The corresponding vertex-domain estimator \( \hat{\xvec} \in \mathbb{R}^N \) is obtained by projecting the estimated GFT coefficients in \eqref{hat_tilde_x3} back to the vertex domain using the bandlimited eigenbasis \( \Umat_{\mathcal{R}} \). Thus, we have
\be
\hat{\xvec} = \Umat_{\mathcal{R}} \hat{\tilde{\xvec}}_{\mathcal{R}}
 = \Umat_{\mathcal{R}} \left( \Hmat_{\mathcal{R}}^T \Dmat^T \Rmat^{-1} \Dmat \Hmat_{\mathcal{R}} \right)^{-1}
\Hmat_{\mathcal{R}}^T \Dmat^T \Rmat^{-1} \yvec.
\ee
For $h_{\text{M}}=\Imat$, this estimator is the classical \ac{gls} estimator for a strictly bandlimited graph signal under additive Gaussian noise. 

The \ac{mse} matrix of this estimator  is given by
\beqna
\label{MSE_for_bandli}
\text{MSE}(\hat{\xvec}) &=& \mathbb{E}\left[ (\hat{\xvec} - \xvec)(\hat{\xvec} - \xvec)^T;\xvec \right]
\nonumber\\&= &
\Umat_{\mathcal{R}} \left( \Hmat_{\mathcal{R}}^T \Dmat^T \Rmat^{-1} \Dmat \Hmat_{\mathcal{R}} \right)^{-1} \Umat_{\mathcal{R}}^T.
\eeqna
This expression is independent of \( \xvec \), making \ac{mse} minimization feasible, and is widely used in works assuming strict bandlimitedness. However, in the general case, the \ac{mse} matrix is a function of $\xvec$.

\section{Proof of Claim \ref{claim_A_E}}\label{appendix_relation_bandlimited}
To prove Claim~\ref{claim_A_E}, we consider the strictly bandlimited case described in Subsection~\ref{special_case_bandlited}, with \( \mu \to \infty \) and \( h_{\text{M}}(\Lmat) = \Imat \), and show that the cost functions in \eqref{CRB}, \eqref{mse_bayesian}, and \eqref{wc_mse_bayesian} are reduced to the A- and E-design criteria in \eqref{a_design} and \eqref{e_design}.

All relevant cost functions involve the matrix \( \Kmat_{\text{M}} \) from \eqref{K_m_def}, so we begin by rewriting this term. Let \( \Dmat \) be the sampling operator indicating the set \( \mathcal{S} \). Substituting \( h_{\text{M}}(\Lmat) = \Imat \) into the definition of \( \Kmat_{\text{M}} \) in \eqref{K_m_def} gives
\begin{equation}
\label{k_m_bandlimited}
\Umat_{\mathcal{R}}^{T}\Kmat_{\text{M}}\Umat_{\mathcal{R}} = \Umat_{\mathcal{R}}^{T}\Dmat \Rmat^{-1} \Dmat \Umat_{\mathcal{R}}= \Vmat_{\mathcal{S},\mathcal{R}}^T\Rmat_{\mathcal{S}}^{-1}\Vmat_{\mathcal{S},\mathcal{R}},
\end{equation}
where the second equality is obtained by substituting the definition of $\Umat_{\mathcal{R}}$ and using the masking property of  $\Dmat$.

Substituting \eqref{special_case}, \eqref{k_m_bandlimited}, and the asymptotic expression in \eqref{limit_K} from Theorem~\ref{asymptotic_k} into \eqref{CRB}, we obtain
\begin{equation}
\label{CRB_bandlimited}
\text{bCRB}(\dvec) = \text{tr} \left( ( \Vmat_{\mathcal{S},\mathcal{R}}^T\Rmat_{\mathcal{S},\mathcal{S}}^{-1}\Vmat_{\mathcal{S},\mathcal{R}} )^{-1} \right).
\end{equation}
Similarly, substituting the same expressions into \eqref{mse_bayesian} and \eqref{wc_mse_bayesian}  results in
\begin{equation}
\label{mse_bayesian_bandlimited}
\text{BMSE}(\dvec) = \text{tr} \left( ( \Vmat_{\mathcal{S},\mathcal{R}}^T\Rmat_{\mathcal{S},\mathcal{S}}^{-1}\Vmat_{\mathcal{S},\mathcal{R}} )^{-1} \right)
\end{equation}
\begin{equation}
\label{wc_mse_bayesian_bandlimited}
\text{BMSE}_{\text{WC}}(\dvec) = \lambda_{\text{max}} \left( ( \Vmat_{\mathcal{S},\mathcal{R}}^T\Rmat_{\mathcal{S},\mathcal{S}}^{-1}\Vmat_{\mathcal{S},\mathcal{R}} )^{-1} \right).
\end{equation}
From \eqref{CRB_bandlimited} and \eqref{mse_bayesian_bandlimited}–\eqref{wc_mse_bayesian_bandlimited}, we conclude the proof of Claim~\ref{claim_A_E}. Therefore, for the special case of estimating strictly bandlimited graph signals defined in Subsection \ref{special_case_bandlited}, and with \( h_{\text{M}}(\Lmat) = \Imat \), the \ac{bcrb} and \ac{bmse} reduce to the A-design in \eqref{a_design}, while the \ac{wcbmse} reduces to the E-design in \eqref{e_design}.

\vspace{-0.25cm}
\section{The Proposed Cost Functions for $\mu\to\infty$ }\label{app;limit_cost_func}
In the following, we analyze the effect of letting $\mu \to \infty$ on the cost functions from \eqref{CRB}, 
\eqref{worst_mse}, \eqref{mse_bayesian}, and \eqref{wc_mse_bayesian}.  The practical implications of the asymptotic analysis lie in understanding how tuning \( \mu \) to be large affects estimation and sampling within the image and kernel subspaces of \( h_{\text{R}}^+(\Lmat) \). 
 We begin by deriving the general estimator when $\mu \to \infty$.

\begin{claim}
For $h_{\text{R}}^+(\lambda_i)=0, ~\forall i\in\mathcal{R}$, when $\mu\to\infty$ 
the \ac{ml} estimator of the GFT of \( \xvec \) from \eqref{estimator}–\eqref{estimator_matrix} can be written as
\beqna\label{estimator_bandlimit}
\lim_{\mu \to \infty}
\hat{\tilde{\xvec}}=\Vmat^T\Kmat^{-1}(\dvec)({h}_{\text{M}}^T(\Lmat)\Dmat\Rmat^{-1}\Dmat\yvec+\mu{h}_{\text{R}}^{+}(\Lmat)\xvec_0)\nonumber\\=\begin{bmatrix}
    (\Umat_{\mathcal{R}}^{T}\Kmat_{\text{M}}\Umat_{\mathcal{R}})^{-1}\Umat_{\mathcal{R}}^T{h}_{\text{M}}^T(\Lmat)\Dmat\Rmat^{-1}\Dmat\yvec\\\zerovec
\end{bmatrix}\hspace{0.5cm}\nonumber\\+\begin{bmatrix}
 -\big(\Umat_{\mathcal{R}}^{T}\Kmat_{\text{M}}\Umat_{\mathcal{R}}\big)^{-1}\Umat_{\mathcal{R}}^{T}\Kmat_{\text{M}}\Umat_{\mathcal{V}\setminus\mathcal{R}}[\tilde{\xvec}_0]_{\mathcal{V}\setminus\mathcal{R}}\\
[\tilde{\xvec}_0]_{\mathcal{V}\setminus\mathcal{R}}
\end{bmatrix},
\eeqna
where $\Kmat_{\text{M}}$ is defined in \eqref{K_m_def}.
\end{claim}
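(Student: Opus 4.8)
The plan is to work entirely in the graph Fourier basis $\Vmat$ and to exploit the block structure induced by the partition $\mathcal{V}=\mathcal{R}\,\cup\,(\mathcal{V}\setminus\mathcal{R})$, where $h_{\text{R}}^{+}(\lambda_i)=0$ on $\mathcal{R}$ and $h_{\text{R}}^{+}(\lambda_i)\neq 0$ on its complement. Set $\Lambdamat\define\text{diag}\bigl(h_{\text{R}}^{+}(\lambda_i),\,i\in\mathcal{V}\setminus\mathcal{R}\bigr)$, which is positive definite. Since $h_{\text{R}}^{+}(\Lmat)=\Vmat\,\text{diag}(h_{\text{R}}^{+}(\pmb{\lambda}))\Vmat^{-1}$ and $\Vmat^{T}=\Vmat^{-1}$, conjugating $\Kmat(\dvec)$ from \eqref{estimator_matrix} by $\Vmat$ gives the block matrix $\Vmat^{T}\Kmat(\dvec)\Vmat=\begin{bmatrix}A & B\\ B^{T} & \tilde{C}+\mu\Lambdamat\end{bmatrix}$ with $A\define\Umat_{\mathcal{R}}^{T}\Kmat_{\text{M}}\Umat_{\mathcal{R}}$, $B\define\Umat_{\mathcal{R}}^{T}\Kmat_{\text{M}}\Umat_{\mathcal{V}\setminus\mathcal{R}}$, $\tilde{C}\define\Umat_{\mathcal{V}\setminus\mathcal{R}}^{T}\Kmat_{\text{M}}\Umat_{\mathcal{V}\setminus\mathcal{R}}$, and $\Kmat_{\text{M}}$ from \eqref{K_m_def}; crucially the $\mu$-dependence enters only the $(2,2)$ block. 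Likewise, $\Vmat^{T} h_{\text{R}}^{+}(\Lmat)\xvec_{0}=\text{diag}(h_{\text{R}}^{+}(\pmb{\lambda}))\tilde{\xvec}_{0}$, so $\Vmat^{T}\bigl(h_{\text{M}}^{T}(\Lmat)\Dmat\Rmat^{-1}\Dmat\yvec+\mu h_{\text{R}}^{+}(\Lmat)\xvec_{0}\bigr)$ splits into the block vector $\bigl[\gvec_{\mathcal{R}}^{T},\,\gvec_{\mathcal{V}\setminus\mathcal{R}}^{T}+\mu[\tilde{\xvec}_{0}]_{\mathcal{V}\setminus\mathcal{R}}^{T}\Lambdamat\bigr]^{T}$, where $\gvec_{\mathcal{R}}\define\Umat_{\mathcal{R}}^{T}h_{\text{M}}^{T}(\Lmat)\Dmat\Rmat^{-1}\Dmat\yvec$; only the complement block diverges with $\mu$.

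I would then treat the two source terms separately, using linearity of $\Kmat^{-1}(\dvec)$ and taking limits term by term. For the data term $h_{\text{M}}^{T}(\Lmat)\Dmat\Rmat^{-1}\Dmat\yvec$, the vector multiplying it is fixed, so I can pull the limit through: $\lim_{\mu\to\infty}\Vmat^{T}\Kmat^{-1}(\dvec)\Vmat$ is given by Theorem~\ref{asymptotic_k} of \cite{Lital_25} (see \eqref{KKK}), namely block-diagonal with $(1,1)$ block $A^{-1}$ and all other blocks zero, so the data-term contribution converges exactly to the first bracketed vector in the claim, $\bigl[(\Umat_{\mathcal{R}}^{T}\Kmat_{\text{M}}\Umat_{\mathcal{R}})^{-1}\Umat_{\mathcal{R}}^{T}h_{\text{M}}^{T}(\Lmat)\Dmat\Rmat^{-1}\Dmat\yvec,\;\zerovec\bigr]^{T}$. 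For the prior term $\mu h_{\text{R}}^{+}(\Lmat)\xvec_{0}$ the multiplying vector diverges, so the coarse limit is insufficient; here I would instead expand $\bigl(\Vmat^{T}\Kmat(\dvec)\Vmat\bigr)^{-1}$ by the block-inverse (Schur-complement) formula and multiply by $\bigl[\zerovec^{T},\,\mu[\tilde{\xvec}_{0}]_{\mathcal{V}\setminus\mathcal{R}}^{T}\Lambdamat\bigr]^{T}$. The three limits that control every resulting term are $(\tilde{C}+\mu\Lambdamat)^{-1}\to\zerovec$ at rate $1/\mu$, the Schur complement $A-B(\tilde{C}+\mu\Lambdamat)^{-1}B^{T}\to A$ so its inverse tends to $A^{-1}$ (using that $A$ is nonsingular, exactly as in Subsection~\ref{ss;bandlimited_estimation}), and the indeterminate product $(\tilde{C}+\mu\Lambdamat)^{-1}(\mu\Lambdamat)=(\tfrac{1}{\mu}\tilde{C}+\Lambdamat)^{-1}\Lambdamat\to\Imat$.

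Assembling the prior term: the $(1,2)$ block of the inverse, $-\bigl(A-B(\tilde{C}+\mu\Lambdamat)^{-1}B^{T}\bigr)^{-1}B(\tilde{C}+\mu\Lambdamat)^{-1}$, acting on $\mu\Lambdamat[\tilde{\xvec}_{0}]_{\mathcal{V}\setminus\mathcal{R}}$ tends to $-A^{-1}B[\tilde{\xvec}_{0}]_{\mathcal{V}\setminus\mathcal{R}}=-(\Umat_{\mathcal{R}}^{T}\Kmat_{\text{M}}\Umat_{\mathcal{R}})^{-1}\Umat_{\mathcal{R}}^{T}\Kmat_{\text{M}}\Umat_{\mathcal{V}\setminus\mathcal{R}}[\tilde{\xvec}_{0}]_{\mathcal{V}\setminus\mathcal{R}}$, which is the top entry of the second bracketed vector. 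In the complement block, the relevant sum $(\tilde{C}+\mu\Lambdamat)^{-1}\mu\Lambdamat[\tilde{\xvec}_{0}]_{\mathcal{V}\setminus\mathcal{R}}+(\tilde{C}+\mu\Lambdamat)^{-1}B^{T}\bigl(A-B(\tilde{C}+\mu\Lambdamat)^{-1}B^{T}\bigr)^{-1}B(\tilde{C}+\mu\Lambdamat)^{-1}\mu\Lambdamat[\tilde{\xvec}_{0}]_{\mathcal{V}\setminus\mathcal{R}}$ converges to $[\tilde{\xvec}_{0}]_{\mathcal{V}\setminus\mathcal{R}}+\zerovec$, since the second summand still carries an uncancelled factor $(\tilde{C}+\mu\Lambdamat)^{-1}\to\zerovec$ even after one $\mu\Lambdamat$ is absorbed. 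Adding the data-term and prior-term contributions block by block reproduces the right-hand side of \eqref{estimator_bandlimit}, which completes the proof.

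The main obstacle is precisely this last bookkeeping around the $0\cdot\infty$ term $\mu\Kmat^{-1}(\dvec)h_{\text{R}}^{+}(\Lmat)\xvec_{0}$: unlike the purely data-driven piece, which converges cleanly through the pseudo-inverse limit \eqref{KKK}, the prior piece does not vanish and must be handled with the refined Schur-complement expansion, tracking which products of $(\tilde{C}+\mu\Lambdamat)^{-1}$ against $\mu\Lambdamat$ survive and which---in particular the doubly-$(\tilde{C}+\mu\Lambdamat)^{-1}$ cross term---still decay. A minor but necessary point to state is the invertibility of $A=\Umat_{\mathcal{R}}^{T}\Kmat_{\text{M}}\Umat_{\mathcal{R}}$ (equivalently, $\Dmat h_{\text{M}}(\Lmat)\Umat_{\mathcal{R}}$ having full column rank), which makes $\Kmat(\dvec)$ invertible for all sufficiently large $\mu$ and renders all the limits above well-defined.
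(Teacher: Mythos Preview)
Your proposal is correct and follows essentially the same approach as the paper: conjugate by $\Vmat$ to expose the block structure, separate the data and prior terms, and handle each via the appropriate Schur-complement limit. The only cosmetic difference is that the paper packages the prior-term limit as a standalone result (equation \eqref{limit_kh} in Theorem~\ref{asymptotic_k}, proved via Lemma~\ref{thm:asymptotic_mu} using the block-inverse formula that pivots on $A$), whereas you recompute it inline using the equivalent block-inverse formula that pivots on $\tilde{C}+\mu\Lambdamat$; both routes yield the same $-A^{-1}B$ and $\Imat$ blocks.
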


\begin{IEEEproof}
Since \( \tilde{\xvec} = \Vmat^{-1} \xvec \),  
we can write the \ac{ml} estimator of the \ac{gft} of \( \xvec \) from \eqref{estimator} as
\begin{equation}
\label{hat_not_full_rank_x}
\hat{\tilde{\xvec}} = \Vmat^{-1} \hat{\xvec} = \Vmat^T \Kmat^{-1}(\dvec) \left( {h}_{\text{M}}^T(\Lmat)\Dmat\Rmat^{-1}\Dmat\yvec + \mu {h}_{\text{R}}^{+}(\Lmat)\xvec_0 \right).
\end{equation}

By taking the limit \( \mu \to \infty \) in \eqref{hat_not_full_rank_x} and substituting 
 \eqref{limit_K} and \eqref{limit_kh} from Appendix \ref{appendix_theorem_bandlimited}, the \ac{ml} estimator of the \ac{gft} of \( \xvec \) simplifies to \eqref{estimator_bandlimit}. 
\end{IEEEproof}

In this case, \( \hat{\tilde{\xvec}}_{\mathcal{R}} \) depends on the measurements and the prior, while \( \hat{\tilde{\xvec}}_{\mathcal{V} \setminus \mathcal{R}} \) depends only on \( [\tilde{\xvec}_0]_{\mathcal{V} \setminus \mathcal{R}} \), as appears in \eqref{Extreme_case_large_mu}. 

Notably, when \( h_{\text{R}}^+(\Lmat) \) is full column rank, which corresponds mathematically to \( \mathcal{V} \setminus \mathcal{R} = \mathcal{V} \) in \eqref{estimator_bandlimit}, \eqref{estimator_bandlimit} yields a degenerate case in which the estimation depends solely on the prior. This implies that setting \( \mu \) too large effectively eliminates the influence of the measurements on the estimator.

Having derived the \ac{gfr}-\ac{ml} estimator in \eqref{estimator_bandlimit}, we now turn to analyzing how the cost functions in \eqref{CRB}, \eqref{worst_mse}, \eqref{mse_bayesian}, and \eqref{wc_mse_bayesian} behave in this case.
By substituting \eqref{limit_K} into \eqref{CRB}, we obtain
\begin{equation}
\label{CRB_bandlimited2}
    \text{bCRB}(\dvec) = \text{tr} \Big( \big( \Umat_{\mathcal{R}}^{T} \Kmat_{\text{M}} \Umat_{\mathcal{R}} \big)^{-1} \Big).
\end{equation}
Substituting \eqref{limit_hkkh} into \eqref{worst_mse} without constant terms yields
\beqna\label{worst_mse_bandlimited}
\text{MSE}_{WC}(\dvec) 
 ={\text{bCRB}}({\dvec})\hspace{4.25cm}\nonumber\\
+ \lambda_{\text{max}}\Big(\Umat_{\mathcal{V}\setminus\mathcal{R}}^T\Kmat_{\text{M}}\Umat_{\mathcal{R}}\big(\Umat_{\mathcal{R}}^{T}\Kmat_{\text{M}}\Umat_{\mathcal{R}}\big)^{-2}\Umat_{\mathcal{R}}^{T}\Kmat_{\text{M}}\Umat_{\mathcal{V}\setminus\mathcal{R}}\Big).
\eeqna
By substituting \eqref{limit_K} into \eqref{mse_bayesian}, we obtain
\beqna\label{mse_bayesian_bandlimited2}
{\text{BMSE}}({\dvec}) 
= {\text{tr}}((\Umat_{\mathcal{R}}^{T}\Kmat_{\text{M}}\Umat_{\mathcal{R}})^{-1}).
\eeqna
Similarly, substituting \eqref{limit_K} into \eqref{wc_mse_bayesian} gives
\beqna\label{wc_mse_bayesian_bandlimited2} {\text{BMSE}_{WC}}(\dvec) = \lambda_{\text{max}}\Big((\Umat_{\mathcal{R}}^{T}\Kmat_{\text{M}}\Umat_{\mathcal{R}})^{-1}\Big). \eeqna

Notably, when \( h_{\text{R}}^+(\Lmat) \) is full column rank, which corresponds mathematically to \( \mathcal{V} \setminus \mathcal{R} = \mathcal{V} \) in \eqref{estimator_bandlimit}, \eqref{estimator_bandlimit} yields a degenerate case in which the estimator relies solely on the prior and becomes independent of the observed data. As a result,
the cost functions in \eqref{CRB_bandlimited2}-\eqref{wc_mse_bayesian_bandlimited2} are independent of the measurements, as expected, and there is no need for sampling design.

Finally, it can be seen that if $\Kmat_{\text{M}}$ satisfies \eqref{k_m_bandlimited}, the cost functions in \eqref{CRB_bandlimited2}, 
\eqref{mse_bayesian_bandlimited2}, and \eqref{wc_mse_bayesian_bandlimited2}
are reduced to 
\eqref{CRB_bandlimited}, \eqref{mse_bayesian_bandlimited},
and \eqref{wc_mse_bayesian_bandlimited}
from Section \ref{appendix_relation_bandlimited}. Thus, this section extends the analysis to a more general setting with an arbitrary prior \( \xvec_0 \), a general measurement graph filter \( h_{\text{M}}(\Lmat) \), and a more flexible regularization graph filter \( h_{\text{R}}(\Lmat) \).

\section{Asymptotic Behavior of $\Kmat(\dvec)$ for $\mu\to\infty$}\label{appendix_theorem_bandlimited}
In this appendix, we establish the asymptotic behavior of the \ac{gfr}-\ac{ml} estimator as the regularization parameter \( \mu \to \infty \). 
Understanding the asymptotic behavior of 
$\Kmat
(\dvec)$  as 
\( \mu \to \infty \) is important for characterizing the estimator's sensitivity to the regularization term. This analysis reveals  how the estimator and the proposed cost functions in \eqref{CRB}, \eqref{worst_mse}, \eqref{mse_bayesian}, and \eqref{wc_mse_bayesian} behave when the prior dominates, demonstrating mathematically that the estimator depends solely on the prior and, correspondingly, that the cost functions become independent of the data.
Furthermore, since bandlimited graph signal recovery corresponds to the special case where \( \mu \to \infty \) for specific choices of \( \xvec_0 \), \( h_{\text{R}} \), and \( h_{\text{M}} \), this asymptotic analysis also helps clarify the connection to classical bandlimited recovery. It is particularly useful for the proof of Claim~\ref{claim_A_E}, which establishes the relationship between the proposed cost functions and the A-/E-design methods.

To this end, we first analyze the inverse of a general affine block matrix in the following Lemma.
\begin{lemma}[Asymptotic inverse of an affine block matrix]\label{thm:asymptotic_mu}
Let
\(\Amat\in\mathbb{C}^{n\times n}\),
\(\Bmat\in\mathbb{C}^{n\times m}\), and
\(\Dmat,\Emat\in\mathbb{C}^{m\times m}\),  
where \(\Amat\) and \(\Emat\) are nonsingular.
In addition, assume that the block matrix
\begin{equation}
\Mmat(\mu)\triangleq
\begin{bmatrix}
\Amat & \Bmat\\[2pt]
\Bmat^T & \Dmat+\mu \Emat
\end{bmatrix}
\label{eq:def_Mmu}
\end{equation}
is a non-singular matrix. Then,
for a sufficiently large $\mu$, the following hold:
\vspace{-0.1cm}
    \begin{equation}
    \lim_{\mu \to \infty} \Mmat^{-1}(\mu) =
    \begin{bmatrix}
    \Amat^{-1} & \zerovec\\[2pt]
    \zerovec & \zerovec
    \end{bmatrix},\hspace{2cm}
    \label{eq:limit_inv_Mmu}
    \end{equation}
    \vspace{-0.2cm}
    \begin{equation}
    \lim_{\mu \to \infty} \Mmat^{-1}(\mu)\begin{bmatrix}
\zerovec & \zerovec\\[2pt]
\zerovec & \mu \Emat
\end{bmatrix} =
    \begin{bmatrix}
    \zerovec & -\Amat^{-1}\Bmat\\[2pt]
    \zerovec & \Imat
    \end{bmatrix}.
    \label{eq:limit_inv_MmuE}
    \end{equation}
\end{lemma}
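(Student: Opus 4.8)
The plan is to reduce both identities to one fact about the Schur complement of the $(1,1)$ block of $\Mmat(\mu)$: it equals $\mu\Emat$ plus a $\mu$‑independent term, hence it is nonsingular for large $\mu$ and its inverse decays like $1/\mu$. Since every ``off‑diagonal'' block of $\Mmat^{-1}(\mu)$ carries exactly one such inverse factor, both limits then follow by continuity of matrix inversion together with taking entrywise limits.

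\textbf{Step 1 --- the Schur complement.} Because $\Amat$ is nonsingular, I would introduce the Schur complement
\[ \Svec(\mu)\triangleq(\Dmat+\mu\Emat)-\Bmat^{T}\Amat^{-1}\Bmat=\mu\Emat+\Cmat,\qquad \Cmat\triangleq\Dmat-\Bmat^{T}\Amat^{-1}\Bmat . \]
Writing $\Svec(\mu)=\mu\bigl(\Emat+\tfrac1\mu\Cmat\bigr)$ and noting $\Emat+\tfrac1\mu\Cmat\to\Emat$, which is nonsingular, continuity of inversion shows that $\Svec(\mu)$ is nonsingular for all sufficiently large $\mu$ and that
\[ \Svec^{-1}(\mu)=\tfrac1\mu\bigl(\Emat+\tfrac1\mu\Cmat\bigr)^{-1}\xrightarrow[\mu\to\infty]{}\zerovec . \]
As a by‑product this re‑derives the nonsingularity of $\Mmat(\mu)$ assumed in the statement, since with $\Amat$ nonsingular $\Mmat(\mu)$ is invertible if and only if $\Svec(\mu)$ is.

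\textbf{Step 2 --- the two limits.} Next I would write the block‑inversion formula relative to $\Amat$,
\[ \Mmat^{-1}(\mu)=\begin{bmatrix} \Amat^{-1}+\Amat^{-1}\Bmat\,\Svec^{-1}(\mu)\,\Bmat^{T}\Amat^{-1} & -\,\Amat^{-1}\Bmat\,\Svec^{-1}(\mu) \\[3pt] -\,\Svec^{-1}(\mu)\,\Bmat^{T}\Amat^{-1} & \Svec^{-1}(\mu)\end{bmatrix}. \]
Letting $\mu\to\infty$ and using Step 1, every block containing $\Svec^{-1}(\mu)$ vanishes and only the $(1,1)$ block $\Amat^{-1}$ survives, which is \eqref{eq:limit_inv_Mmu}. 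For \eqref{eq:limit_inv_MmuE}, right‑multiplying the same formula by $\begin{bmatrix}\zerovec & \zerovec\\ \zerovec & \mu\Emat\end{bmatrix}$ annihilates the first block column and leaves $-\Amat^{-1}\Bmat\bigl(\Svec^{-1}(\mu)\mu\Emat\bigr)$ in the $(1,2)$ slot and $\Svec^{-1}(\mu)\mu\Emat$ in the $(2,2)$ slot; the one computation needed is
\[ \Svec^{-1}(\mu)\,\mu\Emat=(\mu\Emat+\Cmat)^{-1}\mu\Emat=\bigl(\Emat+\tfrac1\mu\Cmat\bigr)^{-1}\Emat\xrightarrow[\mu\to\infty]{}\Emat^{-1}\Emat=\Imat , \]
again by continuity of inversion at the nonsingular matrix $\Emat$. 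Substituting this limit gives $-\Amat^{-1}\Bmat$ and $\Imat$ for the two surviving blocks, i.e.\ exactly \eqref{eq:limit_inv_MmuE}.

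\textbf{Main obstacle.} The only delicate point is the logical ordering: the block‑inversion formula may be invoked only after $\Svec(\mu)$ has been shown to be invertible, so the continuity/Neumann estimate of Step 1 must precede Step 2. Beyond that, everything reduces to continuity of matrix inversion and entrywise limits, so those steps can be stated briefly.
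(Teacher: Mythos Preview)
Your proposal is correct and follows essentially the same route as the paper: both introduce the Schur complement $\Svec(\mu)=\Dmat+\mu\Emat-\Bmat^{T}\Amat^{-1}\Bmat$, invoke the standard block-inversion formula relative to $\Amat$, and derive both limits from the single observation that $\Svec^{-1}(\mu)\to\zerovec$ while $\Svec^{-1}(\mu)\,\mu\Emat=(\Emat+\tfrac{1}{\mu}\Cmat)^{-1}\Emat\to\Imat$ by continuity of inversion at $\Emat$. Your remark on the logical ordering (invertibility of $\Svec(\mu)$ before using the block formula) is a nice touch, though under the stated hypothesis that $\Mmat(\mu)$ and $\Amat$ are nonsingular this is automatic.
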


\begin{IEEEproof}
According to the lemma for the inverse of matrices \cite[Eq. (400)]{matrix_cookbook}, the inverse of  \( \Mmat(\mu) \) can be written as
\beqna
\Mmat^{-1}(\mu)=\hspace{6cm}
\nonumber\\
\begin{bmatrix}
\Amat^{-1}+\Amat^{-1}\Bmat\,\Smat^{-1}(\mu)\Bmat^T\,\Amat^{-1} & -\Amat^{-1}\Bmat\,\Smat^{-1}(\mu)\\[4pt]
-\Smat^{-1}(\mu)\Bmat^T\,\Amat^{-1} & \Smat^{-1}(\mu)
\end{bmatrix},
\label{eq:Mmu_inv_block}
\eeqna
where 
\begin{equation}
\Smat(\mu)\triangleq \Dmat+\mu \Emat-\Bmat^T\Amat^{-1}\Bmat
\label{eq:Schur_def}
\end{equation}
is its associated Schur complement.
Since \( \Emat \) is a nonsingular matrix, 
 for sufficiently large \( \mu \) such that
 \[\mu^{-1}\lambda_{\rm{min}}(\Emat^{-1}(\Dmat-\Bmat^T\Amat^{-1})>-1,\] 
 we can write the inverse of $\Smat(\mu)$ as follows:
\begin{equation}
\Smat^{-1}(\mu)
   =\mu^{-1}\Emat^{-1}\!\bigl(\Imat+\mu^{-1}\Emat^{-1}(\Dmat-\Bmat^T\Amat^{-1}\Bmat)\bigr)^{-1}.
\label{eq:Schur_inv_scaled}
\end{equation}
Taking the limit of \( \mu \to \infty \) in \eqref{eq:Schur_inv_scaled}, one obtains
\begin{equation}
\lim_{\mu\to\infty}\mu \Emat\Smat^{-1}(\mu)
=\Imat.
\label{eq:mu_S_converge}
\end{equation}
Consequently the entries of 
\(\Smat^{-1}(\mu)\) converge to zero as \( \mu \to \infty \) with $\left| [\Smat^{-1}(\mu)]_{i,j} \right|=\mathcal{O}(\mu^{-1})$, $\forall i,j\in\mathcal{V}$.

Hence, 
applying the limit \( \mu \to \infty \) to \eqref{eq:Mmu_inv_block} and using \eqref{eq:mu_S_converge}, each term involving \( \Smat^{-1}(\mu) \) vanishes, yielding the result in \eqref{eq:limit_inv_Mmu}.
Moreover, using matrix multiplication rules, we have
\begin{equation}
\Mmat^{-1}(\mu)\begin{bmatrix}
\zerovec & \zerovec\\[2pt]
\zerovec & \mu \Emat
\end{bmatrix}=
\begin{bmatrix}
\zerovec & -\mu\Amat^{-1}\Bmat\,\Smat^{-1}(\mu)\Emat\\[4pt]
\zerovec & \mu\Smat^{-1}(\mu)\Emat
\end{bmatrix}.
\label{eq:MmuE_inv_block}
\end{equation}
Taking the limit \( \mu \to \infty \) and applying \eqref{eq:mu_S_converge}, we obtain the result in \eqref{eq:limit_inv_MmuE}.
\end{IEEEproof}

Now we use Lemma \ref{thm:asymptotic_mu} to develop the asymptotic behavior of $\Kmat(\dvec)$.
The proof is based on the block structure of the transformed system matrix.
\begin{theorem}
\label{asymptotic_k}
Let \( \Kmat(\dvec) \) be defined as in~\eqref{estimator_matrix}, where \( h_{\text{M}}(\Lmat) \) and \( h_{\text{R}}^{+}(\Lmat) \) are graph filters, and  \( h_{\text{R}}^{+}(\Lmat)\Umat_{\mathcal{R}} = \zerovec \). 
If \( \Umat_{\mathcal{R}}^{T} \Kmat_{\text{M}} \Umat_{\mathcal{R}} \) and \( \Umat_{\mathcal{V} \setminus \mathcal{R}}^{T} h_{\text{R}}^{+}(\Lmat) \Umat_{\mathcal{V} \setminus \mathcal{R}} \) are invertible, then the following asymptotic properties hold as \( \mu \to \infty \):
\beqna
\lim_{\mu \to \infty} \Vmat^{-1}\Kmat^{\dagger}(\dvec)\Vmat= \begin{bmatrix}
(\Umat_{\mathcal{R}}^{T}\Kmat_{\text{M}}\Umat_{\mathcal{R}})^{-1} & \zerovec \\ \zerovec & \zerovec
\end{bmatrix}
,\hspace{0.7cm}\label{limit_K}
\eeqna
\beqna
\lim_{\mu \to \infty} \mu\Vmat^{-1}\Kmat^{\dagger}(\dvec)  {h}_{\text{R}}^{+}(\Lmat)\Vmat \hspace{3.5cm}\nonumber\\= \begin{bmatrix}
\zerovec & -\big(\Umat_{\mathcal{R}}^{T}\Kmat_{\text{M}}\Umat_{\mathcal{R}}\big)^{-1}\Umat_{\mathcal{R}}^{T}\Kmat_{\text{M}}\Umat_{\mathcal{V}\setminus\mathcal{R}}\\[4pt]
\zerovec & \Imat
\end{bmatrix}
,\label{limit_kh}
\eeqna
\begin{equation}
\lim_{\mu \to \infty} \mu^2\Vmat^{-1}{h}_{\text{R}}^{+}(\Lmat)\Kmat^{\dagger}(\dvec)\Kmat^{\dagger}(\dvec)  {h}_{\text{R}}^{+}(\Lmat)\Vmat \nonumber\hspace{1.5cm}
\end{equation}
\begin{equation}
    = \begin{bmatrix}
\zerovec & \zerovec\\[4pt]
\zerovec & \Umat_{\mathcal{V}\setminus\mathcal{R}}^T\Kmat_{\text{M}}\Umat_{\mathcal{R}}\big(\Umat_{\mathcal{R}}^{T}\Kmat_{\text{M}}\Umat_{\mathcal{R}}\big)^{-2}\Umat_{\mathcal{R}}^{T}\Kmat_{\text{M}}\Umat_{\mathcal{V}\setminus\mathcal{R}}+\Imat
\end{bmatrix}
.\label{limit_hkkh}
\end{equation}
\end{theorem}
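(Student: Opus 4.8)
The plan is to reduce everything to a single application of Lemma~\ref{thm:asymptotic_mu} after changing to the graph Fourier basis. First I would conjugate $\Kmat(\dvec)$ by $\Vmat$: since $h_{\text{R}}^{+}(\Lmat) = \Vmat\,\mathrm{diag}(h_{\text{R}}^{+}(\pmb\lambda))\,\Vmat^{-1}$ and $h_{\text{R}}^{+}(\lambda_i)=0$ for $i\in\mathcal{R}$, the matrix $\Vmat^{-1}h_{\text{R}}^{+}(\Lmat)\Vmat$ is diagonal and vanishes on the $\mathcal{R}$-block. Partitioning the GFT coordinates as $\mathcal{R}$ versus $\mathcal{V}\setminus\mathcal{R}$, and writing $\Kmat_{\text{M}}=h_{\text{M}}(\Lmat)\Dmat\Rmat^{-1}\Dmat h_{\text{M}}(\Lmat)$ as in~\eqref{K_m_def}, we get
\[
\Vmat^{-1}\Kmat(\dvec)\Vmat
=\begin{bmatrix}
\Umat_{\mathcal{R}}^{T}\Kmat_{\text{M}}\Umat_{\mathcal{R}} & \Umat_{\mathcal{R}}^{T}\Kmat_{\text{M}}\Umat_{\mathcal{V}\setminus\mathcal{R}}\\[4pt]
\Umat_{\mathcal{V}\setminus\mathcal{R}}^{T}\Kmat_{\text{M}}\Umat_{\mathcal{R}} & \Umat_{\mathcal{V}\setminus\mathcal{R}}^{T}\Kmat_{\text{M}}\Umat_{\mathcal{V}\setminus\mathcal{R}}+\mu\,\Umat_{\mathcal{V}\setminus\mathcal{R}}^{T}h_{\text{R}}^{+}(\Lmat)\Umat_{\mathcal{V}\setminus\mathcal{R}}
\end{bmatrix},
\]
which matches the form $\Mmat(\mu)$ in~\eqref{eq:def_Mmu} with $\Amat=\Umat_{\mathcal{R}}^{T}\Kmat_{\text{M}}\Umat_{\mathcal{R}}$, $\Bmat=\Umat_{\mathcal{R}}^{T}\Kmat_{\text{M}}\Umat_{\mathcal{V}\setminus\mathcal{R}}$, $\Dmat=\Umat_{\mathcal{V}\setminus\mathcal{R}}^{T}\Kmat_{\text{M}}\Umat_{\mathcal{V}\setminus\mathcal{R}}$, and $\Emat=\Umat_{\mathcal{V}\setminus\mathcal{R}}^{T}h_{\text{R}}^{+}(\Lmat)\Umat_{\mathcal{V}\setminus\mathcal{R}}$. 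By the hypotheses, $\Amat$ and $\Emat$ are invertible, and $\Mmat(\mu)=\Vmat^{-1}\Kmat(\dvec)\Vmat$ is nonsingular for $\mu$ large (it is positive definite once $\mu$ exceeds a threshold).

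Next, note that $\Vmat^{-1}$ is orthogonal ($\Vmat^{T}=\Vmat^{-1}$), so $\Vmat^{-1}\Kmat^{\dagger}(\dvec)\Vmat=(\Vmat^{-1}\Kmat(\dvec)\Vmat)^{\dagger}=\Mmat^{\dagger}(\mu)=\Mmat^{-1}(\mu)$ whenever $\Mmat(\mu)$ is nonsingular; so for large $\mu$ the pseudo-inverse is just the inverse, and~\eqref{eq:limit_inv_Mmu} of Lemma~\ref{thm:asymptotic_mu} gives~\eqref{limit_K} directly. For~\eqref{limit_kh}, write $\mu\,\Vmat^{-1}\Kmat^{\dagger}(\dvec)h_{\text{R}}^{+}(\Lmat)\Vmat=\Mmat^{-1}(\mu)\cdot\big(\mu\,\Vmat^{-1}h_{\text{R}}^{+}(\Lmat)\Vmat\big)$, and observe that $\mu\,\Vmat^{-1}h_{\text{R}}^{+}(\Lmat)\Vmat=\mathrm{diag}(\zerovec,\mu\,\Umat_{\mathcal{V}\setminus\mathcal{R}}^{T}h_{\text{R}}^{+}(\Lmat)\Umat_{\mathcal{V}\setminus\mathcal{R}})=\mathrm{diag}(\zerovec,\mu\Emat)$, which is exactly the right-hand factor in~\eqref{eq:limit_inv_MmuE}; applying that part of the lemma yields~\eqref{limit_kh}. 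Finally,~\eqref{limit_hkkh} follows by writing the left-hand side as the product $\big(\mu\,\Vmat^{-1}h_{\text{R}}^{+}(\Lmat)\Kmat^{\dagger}(\dvec)\Vmat\big)\big(\mu\,\Vmat^{-1}\Kmat^{\dagger}(\dvec)h_{\text{R}}^{+}(\Lmat)\Vmat\big)$, taking the limit of each factor via~\eqref{limit_kh} and its transpose (using symmetry of $\Kmat(\dvec)$ and $h_{\text{R}}^{+}(\Lmat)$, so the first factor tends to the transpose of the limit in~\eqref{limit_kh}), and multiplying the two $2\times2$ block limits: the $(2,2)$ block is $\big(-\big(\Umat_{\mathcal{R}}^{T}\Kmat_{\text{M}}\Umat_{\mathcal{R}}\big)^{-1}\Umat_{\mathcal{R}}^{T}\Kmat_{\text{M}}\Umat_{\mathcal{V}\setminus\mathcal{R}}\big)^{T}\big(-\big(\Umat_{\mathcal{R}}^{T}\Kmat_{\text{M}}\Umat_{\mathcal{R}}\big)^{-1}\Umat_{\mathcal{R}}^{T}\Kmat_{\text{M}}\Umat_{\mathcal{V}\setminus\mathcal{R}}\big)+\Imat\cdot\Imat$, which is the claimed expression.

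The one genuinely delicate point—and the main obstacle—is justifying that the pseudo-inverse in the statement can be replaced by the ordinary inverse and that the limits commute with pseudo-inversion. This is handled by the nonsingularity hypotheses: once $\Amat$ and $\Emat$ are invertible, the Schur-complement argument inside Lemma~\ref{thm:asymptotic_mu} shows $\Smat(\mu)$ is invertible for all sufficiently large $\mu$, hence $\Mmat(\mu)$ is invertible and $\Mmat^{\dagger}(\mu)=\Mmat^{-1}(\mu)$ on that range; the stated limits are limits of these genuine inverses, so no pseudo-inverse subtlety survives in the limit. A secondary bookkeeping point is keeping the block partition consistent when transposing~\eqref{limit_kh} for use in~\eqref{limit_hkkh}; this is routine given that both $\Kmat(\dvec)$ and $h_{\text{R}}^{+}(\Lmat)$ are symmetric, so $\big(\Vmat^{-1}\Kmat^{\dagger}(\dvec)h_{\text{R}}^{+}(\Lmat)\Vmat\big)^{T}=\Vmat^{-1}h_{\text{R}}^{+}(\Lmat)\Kmat^{\dagger}(\dvec)\Vmat$, and the limit of the transpose is the transpose of the limit.
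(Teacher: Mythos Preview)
Your proposal is correct and follows essentially the same route as the paper: conjugate $\Kmat(\dvec)$ by $\Vmat$ to obtain the $2\times2$ block form, identify it with $\Mmat(\mu)$ in Lemma~\ref{thm:asymptotic_mu} with exactly the choices of $\Amat,\Bmat,\Dmat,\Emat$ you wrote, then read off \eqref{limit_K} from \eqref{eq:limit_inv_Mmu}, \eqref{limit_kh} from \eqref{eq:limit_inv_MmuE}, and obtain \eqref{limit_hkkh} by multiplying the limit in \eqref{limit_kh} with its transpose. Your explicit handling of the pseudo-inverse versus inverse issue and of the transpose bookkeeping for \eqref{limit_hkkh} is in fact slightly more careful than the paper's own presentation, which treats both points tacitly.
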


\begin{IEEEproof}
  By 
  using the definition from \eqref{estimator_matrix} and the property that \( h_{\text{R}}^{+}(\Lmat)\Umat_{\mathcal{R}} = \zerovec \), one can verify that \( \Vmat^{-1}\Kmat(\dvec)\Vmat \) is a block matrix of the form given in \eqref{eq:def_Mmu}, where
  \vspace{-0.1cm}
\beqna\label{block_A}
\Amat=\Umat_{\mathcal{R}}^{T}\Kmat_{\text{M}}\Umat_{\mathcal{R}},~\Bmat=\Umat_{\mathcal{R}}^{-1}\Kmat_{\text{M}}\Umat_{\mathcal{V}\setminus\mathcal{R}},\hspace{1cm} \nonumber\\ \Dmat=\Umat_{\mathcal{V}\setminus\mathcal{R}}^{T}\Kmat_{\text{M}}\Umat_{\mathcal{V}\setminus\mathcal{R}},~
\Emat=\Umat_{\mathcal{V}\setminus\mathcal{R}}^{T}h_{\text{R}}^{+}(\Lmat)\Umat_{\mathcal{V}\setminus\mathcal{R}}.
\eeqna
Since both \( \Amat \)  and \( \Emat \) defined in \eqref{block_A}
are invertible, \eqref{eq:limit_inv_Mmu} from Theorem~\ref{thm:asymptotic_mu} directly implies the result in \eqref{limit_K}.

 Consider the expression
   \vspace{-0.1cm}
\beqna
\Vmat^{-1}\mu\Kmat^{\dagger}(\dvec)  {h}_{\text{R}}^{+}(\Lmat)\Vmat=\Vmat^{-1}\mu\Kmat^{\dagger}(\dvec)\Vmat  \Vmat^{-1}{h}_{\text{R}}^{+}(\Lmat)\Vmat.
\eeqna
Note that by using the  assumption  that \( h_{\text{R}}^{+}(\Lmat)\Umat_{\mathcal{R}} = \zerovec \) and substituting the notations in  \eqref{block_A}, we have
  \vspace{-0.1cm}
\beqna
\Vmat^{-1}{h}_{\text{R}}^{+}(\Lmat)\Vmat=\begin{bmatrix}
\zerovec & \zerovec\\[2pt]
\zerovec & \mu \Emat
\end{bmatrix}.
\eeqna
Applying the limit \( \mu \to \infty \) and substituting \eqref{eq:limit_inv_MmuE} along with the definitions in \eqref{block_A}, we obtain the result in \eqref{limit_kh}.
Finally, since the limit in \eqref{limit_kh} exists and is finite, the limit of the matrix product
$
\lim_{\mu \to \infty} \Vmat^{-1} \mu^2 h_{\text{R}}^{+}(\Lmat) \Kmat^{\dagger}(\dvec) \Kmat^{\dagger}(\dvec) h_{\text{R}}^{+}(\Lmat) \Vmat
$
can be evaluated by multiplying the limit matrices obtained in \eqref{limit_kh}. This yields the result in \eqref{limit_hkkh}.
\end{IEEEproof}

\vspace{-0.25cm}
\section{Additive Property of  \( \Kmat(\dvec) \)}
\label{sec_additive}
In this section, we establish the additive (modular) structure of the estimator matrix 
$\Kmat(\dvec)$
 in \eqref{estimator_matrix}, a property that enables efficient updates in greedy algorithms and facilitates the submodularity proof of the \ac{bmse} objective. Proposition~\ref{theorem_modular_K} formalizes this property by showing that for diagonal noise covariance, each node contributes independently to 
$\Kmat(\dvec)$.
\begin{proposition}
\label{theorem_modular_K}
Consider a subset of nodes \(  \{a\}\in \mathcal{V}\setminus\mathcal{S} \), assuming a diagonal noise covariance matrix \( \Rmat \). The estimator matrix from \eqref{estimator_matrix} can be written as 
\beqna\label{modular_k}
\Kmat(\onevec_{\mathcal{S} \cup \{a\}}) = \Kmat(\onevec_{\mathcal{S}}) + \Kmat_{\text{M}}(\onevec_{\{a\}}),
\eeqna
where all the contribution of the nodes in $\{a\}$ is 
captured by \beqna\label{k_m_definition}
\Kmat_{\text{M}}(\onevec_{\{a\}})= \sum_{i\in\{a\}}{\Rmat^{-1}_{i,i}} [h_{\text{M}}(\Lmat)]_{\mathcal{V},i}[h_{\text{M}}(\Lmat)]_{\mathcal{V},i}^T. 
\eeqna 
\end{proposition}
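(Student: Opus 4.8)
The plan is to expand $\Kmat(\dvec)$ from \eqref{estimator_matrix} as a sum of rank-one terms, one per sampled node, and then exploit the disjointness of $\mathcal{S}$ and $\{a\}$. Since the regularization term $\mu h_{\text{R}}^{+}(\Lmat)$ does not depend on $\dvec$, it simply carries over unchanged into $\Kmat(\onevec_{\mathcal{S}})$ on the right-hand side of \eqref{modular_k}, so only the data-fidelity term $h_{\text{M}}(\Lmat)\Dmat\Rmat^{-1}\Dmat h_{\text{M}}(\Lmat)$ needs to be analyzed.

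First I would use that $\dvec\in\{0,1\}^N$, hence $d_i^2=d_i$, together with the hypothesis that $\Rmat$ (and therefore $\Rmat^{-1}$) is diagonal, to rewrite the diagonal matrix $\Dmat\Rmat^{-1}\Dmat$ in the paper's indicator-vector notation as
\begin{equation}
\Dmat\Rmat^{-1}\Dmat = \sum_{i\in\mathcal{V}} d_i\,\Rmat^{-1}_{i,i}\,\onevec_{\{i\}}\onevec_{\{i\}}^T,
\end{equation}
where $\onevec_{\{i\}}$ is the $i$th canonical basis vector. Substituting this into \eqref{estimator_matrix}, using $h_{\text{M}}(\Lmat)\onevec_{\{i\}}=[h_{\text{M}}(\Lmat)]_{\mathcal{V},i}$, and using the symmetry $h_{\text{M}}^T(\Lmat)=h_{\text{M}}(\Lmat)$ (which holds because $h_{\text{M}}(\Lmat)=\Vmat\,{\text{diag}}(h_{\text{M}}(\pmb{\lambda}))\,\Vmat^{T}$ by \eqref{graph_filter} and \eqref{SVD_new_eq}), I obtain the pointwise-additive form
\begin{equation}
\Kmat(\dvec) = \sum_{i\in\mathcal{V}} d_i\,\Rmat^{-1}_{i,i}\,[h_{\text{M}}(\Lmat)]_{\mathcal{V},i}[h_{\text{M}}(\Lmat)]_{\mathcal{V},i}^T + \mu h_{\text{R}}^{+}(\Lmat).
\end{equation}

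To finish, I would specialize to $\dvec=\onevec_{\mathcal{S}\cup\{a\}}$, whose support is $\mathcal{S}\cup\{a\}$, and note that since $\{a\}\subseteq\mathcal{V}\setminus\mathcal{S}$ the sets $\mathcal{S}$ and $\{a\}$ are disjoint, so the sum over $\mathcal{S}\cup\{a\}$ splits into the sum over $\mathcal{S}$ plus the sum over $\{a\}$. Regrouping the $\mathcal{S}$-sum with $\mu h_{\text{R}}^{+}(\Lmat)$ reproduces $\Kmat(\onevec_{\mathcal{S}})$, while the leftover $\{a\}$-sum is exactly $\Kmat_{\text{M}}(\onevec_{\{a\}})$ as defined in \eqref{k_m_definition}, which yields \eqref{modular_k}.

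There is no genuine obstacle here; the argument is a direct computation. The only points requiring care are: (i) invoking $d_i^2=d_i$ so that the diagonal masking $\Dmat\Rmat^{-1}\Dmat$ is idempotent and \emph{linear} in the support of $\dvec$, which is precisely what makes $\Kmat$ modular rather than merely monotone in the sampling set; (ii) the symmetry of $h_{\text{M}}(\Lmat)$ when writing each node's contribution as a rank-one outer product; and (iii) the disjointness of $\mathcal{S}$ and $\{a\}$, guaranteed by the hypothesis $\{a\}\subseteq\mathcal{V}\setminus\mathcal{S}$, which is what permits the additive (rather than inclusive) split of the sum.
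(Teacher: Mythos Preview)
Your proof is correct and follows essentially the same approach as the paper: both expand the data-fidelity term $h_{\text{M}}(\Lmat)\Dmat\Rmat^{-1}\Dmat h_{\text{M}}(\Lmat)$ as a sum of rank-one contributions indexed by the sampled nodes, then split the sum over $\mathcal{S}\cup\{a\}$ using disjointness. Your version is slightly more explicit about the intermediate steps (invoking $d_i^2=d_i$ and the symmetry of $h_{\text{M}}(\Lmat)$), but the argument is the same direct computation.
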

\begin{IEEEproof}
For any \( \mathcal{S} \subseteq \mathcal{V} \) and diagonal $\Rmat$, it follows that
\begin{equation}
    \label{saparted_M}
    {h}_{\text{M}}(\Lmat)\Dmat\Rmat^{-1}\Dmat{h}_{\text{M}}(\Lmat)= \sum_{i \in \mathcal{S}}\frac{1}{\Rmat_{i,i}}[{h}_{\text{M}}(\Lmat)]_{\mathcal{V},i}[{h}_{\text{M}}(\Lmat)]_{\mathcal{V},i}^T.
\end{equation}
Substituting \eqref{saparted_M} for the set $\mathcal{S}\cup\{a\}$, with $\{a\}\notin\mathcal{S}$, into \eqref{estimator_matrix}, we obtain  \vspace{-0.1cm}
\beqna
    \Kmat(\onevec_{\mathcal{S}\cup\{a\}}) = \sum_{i \in \mathcal{S}\cup\{a\}}\frac{1}{\Rmat_{i,i}}[{h}_{\text{M}}(\Lmat)]_{\mathcal{V},i}[{h}_{\text{M}}(\Lmat)]_{\mathcal{V},i}^T+ \mu{h}_{\text{R}}^{+}(\Lmat)\nonumber\\
    = \Kmat(\onevec_{\mathcal{S}})+\sum_{i\in\{a\}}{\Rmat^{-1}_{i,i}} [h_{\text{M}}(\Lmat)]_{\mathcal{V},i}[h_{\text{M}}(\Lmat)]_{\mathcal{V},i}^T .\nonumber
\eeqna
\end{IEEEproof} 
\vspace{-0.3cm}
\section{Prove of Submodularity of the BMSE}\label{sec_submodularity_bmse}
In this section, we show that the negative \ac{bmse} cost function is submodular and monotonically increasing when $\Rmat$ is diagonal and $h_{\text{R}}^{+}(\Lmat)$ is positive definite. The submodularity proof follows a similar approach to that in \cite{Submodular39}.
The proof is based on the following definition of submodularity \cite{Submodular39}:
 \begin{definition}[submodularity]\label{submodularity}
A set function \( f : 2^{\mathcal{V}} \to \mathbb{R} \) is submodular if and only if the derived set functions 
\( f_a : 2^{\mathcal{V} \setminus \{a\}} \to \mathbb{R} \) defined by
\begin{equation}
    f_a(\mathcal{S}) = f(\mathcal{S} \cup \{a\}) - f(\mathcal{S})
    \label{eq:submodularity}
\end{equation}
are monotone decreasing, i.e., if for all subsets \(\mathcal{A},\mathcal{B} \subseteq \mathcal{V}\setminus\{a\} \) it holds that
$\mathcal{A} \subseteq \mathcal{B}$ imply that $f_a(\mathcal{B}) \leq f_a(\mathcal{A})$.
\end{definition}

Submodularity 
provide strong optimality guarantees for maximization problems. Therefore, we reformulate the original minimization in \eqref{eq:optimization_problem} as an equivalent maximization by negating the objective 
\begin{equation}
    \mathcal{S}^*=\arg\max_{\mathcal{S} \subseteq \mathcal{V},\, |\mathcal{S}| = q} -\text{BMSE}(\onevec_{\mathcal{S}}),
    \label{eq:set_opt}
\end{equation}
and prove these properties for the resulting negative objective. 


We begin with  defining the derived set function $f_a : 2^{\mathcal{V} \setminus \{a\}} \to \mathbb{R}$ for the negative \ac{bmse}  $\forall\{a\} \in \mathcal{V}$ as
\beqna
    f_a(\mathcal{S}) = -\operatorname{tr} \big(\Kmat^{-1}(\onevec_{\mathcal{S}\cup\{a\}}) \big) + \operatorname{tr}(\Kmat^{-1}(\onevec_{\mathcal{S}}) )\hspace{1cm}\nonumber\\
     = -\operatorname{tr} \Big(\big(\Kmat(\onevec_{\mathcal{S}}) +\Kmat_{\text{M}}(\onevec_{\{a\}})\big)^{-1}\Big)+ \operatorname{tr}(\Kmat^{-1}(\onevec_{\mathcal{S}}) ),
\eeqna
where second equality obtain by substituing \eqref{modular_k} from Proposition \ref{theorem_modular_K} and $\Kmat_{\text{M}}(\onevec_{\{a\}})$ defined in \eqref{K_m_def}. 
We define 
\[
   \tilde{\Kmat}(\theta) \define  \Kmat(\onevec_{\mathcal{S}_1}) + \theta \, ( \Kmat(\onevec_{\mathcal{S}_2}) -  \Kmat(\onevec_{\mathcal{S}_1}) ), \quad \theta \in [0,1],
\]
so that \( \tilde{\Kmat}(0) = \Kmat(\onevec_{\mathcal{S}_1}) \) and \( \tilde{\Kmat}(1) = \Kmat(\onevec_{\mathcal{S}_2}) \).
Now define
\[
    \hat{f}_a(\tilde{\Kmat}(\theta)) = -\operatorname{tr}\big( (\tilde{\Kmat}(\theta) + \Kmat_{\text{M}}(\onevec_{\{a\}}))^{-1} \big) + \operatorname{tr}\big(\tilde{\Kmat}^{-1}(\theta) \big).
\]
Note that \( \hat{f}_a(\tilde{\Kmat}(0)) = f_a(\mathcal{S}_1) \) and \(\hat{f}_a(\tilde{\Kmat}(1)) = f_a(\mathcal{S}_2) \).  
To show that $f_a(\mathcal{S})$ satisfies \eqref{eq:submodularity}, we show that the derivative of $\hat{f}_a(\tilde{\Kmat}(\theta))$ w.r.t. $\theta$ is non-positive. Differentiating yields
\begin{align}\label{der_1}
    \frac{d}{d\theta} \hat{f}_a(\tilde{\Kmat}(\theta))
    = \frac{d}{d\theta}\mathrm{tr}\left(\tilde{\Kmat}^{-1}(\theta) \right) \hspace{2cm}\nonumber\\- \frac{d}{d\theta} \, \mathrm{tr}\left( (\tilde{\Kmat}(\theta) + \Kmat_{\text{M}}(\onevec_{\{a\}}))^{-1} \right) 
       .
\end{align}
Using the matrix derivative formula \cite[Eq.~(40)]{matrix_cookbook}
\[
    \frac{d}{d\theta} \Xmat(\theta)^{-1} 
    = - \Xmat(\theta)^{-1} \frac{d \Xmat(\theta)}{d\theta} \Xmat(\theta)^{-1},
\]
and the cyclic property of the trace, we obtain
\begin{equation}\label{tr1_der}
    \frac{d}{d\theta}\mathrm{tr}(\tilde{\Kmat}^{-1}(\theta) )
     =-\mathrm{tr} (\tilde{\Kmat}^{-2}(\theta) \big( \Kmat(\onevec_{\mathcal{S}_2}) - \Kmat(\onevec_{\mathcal{S}_1}) \big)), 
     \end{equation}
\beqna \label{tr2_der}
    \frac{d}{d\theta} \, \mathrm{tr}\left( (\tilde{\Kmat}(\theta) + \Kmat_{\text{M}}(\onevec_{\{a\}}))^{-1} \right) \hspace{3.25cm}\nonumber\\
    = -\mathrm{tr} \left( (\tilde{\Kmat}(\theta) + \Kmat_{\text{M}}(\onevec_{\{a\}}))^{-2} \big( \Kmat(\onevec_{\mathcal{S}_2}) - \Kmat(\onevec_{\mathcal{S}_1}) \big) \right).
\eeqna
Substituting \eqref{tr1_der} and \eqref{tr2_der} into \eqref{der_1} gives
     \beqna\label{der2}
         \frac{d}{d\theta} \hat{f}_a(\tilde{\Kmat}(\theta))
    = \mathrm{tr} \Big((\tilde{\Kmat}(\theta) + \Kmat_{\text{M}}(\onevec_{\{a\}}))^{-2} - \big(\tilde{\Kmat}(\theta)^{-2}\big) \nonumber\\
    \times\big( \Kmat(\onevec_{\mathcal{S}_2}) - \Kmat(\onevec_{\mathcal{S}_1}) \big) \Big).
\eeqna
Based on Proposition \ref{theorem_modular_K}, for any \( \mathcal{S}_1 \subseteq \mathcal{S}_2 \subseteq \mathcal{V} \setminus \{a\} \), it holds that  
\( \Kmat(\onevec_{\mathcal{S}_2}) - \Kmat(\onevec_{\mathcal{S}_1}) \succeq 0 \).  
Similarly, we have  \( (\tilde{\Kmat}(\theta) + \Kmat_{\text{M}}(\onevec_{\{a\}}))^{-2} -\tilde{\Kmat}^{-2}(\theta) \preceq 0 \). 
Therefore, the matrix inside the trace on the right-hand side of \eqref{der2} is negative semidefinite, since it is the product of a positive semidefinite and a negative semidefinite matrix. Consequently, its trace is non-positive, which, when substituted into \eqref{der2}, yields the inequality
$
         \frac{d}{d\theta} \hat{f}_a(\tilde{\Kmat}(\theta))
         \leq 0.$
Integrating over $\theta\in[0,1]$ yields
\[
    \hat{f}_a(\tilde{\Kmat}(1)) - \hat{f}_a(\tilde{\Kmat}(0)) = \int_0^1 \frac{d}{d\theta} \hat{f}_a(\tilde{\Kmat}(\theta)) \, d\theta \leq 0,
\]
i.e., \( \hat{f}_a(\tilde{\Kmat}(1)) =f_a(\mathcal{S}_2) \leq  \hat{f}_a(\tilde{\Kmat}(0)) = f_a(\mathcal{S}_1) \).  
 Thus, \( f_a \) is monotone decreasing, and \( f \) is submodular by Definition~\ref{submodularity}.

\vspace{-0.25cm}
\section{Convexity of the \ac{bmse} and \ac{wcbmse} }\label{convexity_proof}
In this section, we prove that for diagonal $\Rmat$ and positive definite $h_{\text{R}}^{+}(\Lmat)$, the \ac{bmse} and \ac{wcbmse} are convex \ac{wrt} the (squared) selection weights
$w_i \triangleq d_i^2 \in [0,1]$.

For diagonal $\Rmat$, we can write the estimator matrix $\Kmat(\dvec)$ as a function of $\wvec$, where $[\wvec]_i=w_i$, as
\begin{equation}
\label{eq:K-affine}
\Kmat(\dvec) = \Kmat_{\text{affine}}(\wvec) \triangleq \sum_{i \in \mathcal{V}} \frac{w_i}{\Rmat_{i,i}} [{h}_{\text{M}}(\Lmat)]_{\mathcal{V},i} [{h}_{\text{M}}(\Lmat)]_{\mathcal{V},i}^T + \mu\,{h}_{\text{R}}^{+}(\Lmat).
\end{equation}
Here, $\Kmat_{\text{affine}}(\wvec)$ is affine in $\wvec$, and $\Kmat_{\text{affine}}(\wvec) \succ \zerovec$ on the feasible set (for $\mu>0$).

To prove the convexity of the \ac{bmse} and \ac{wcbmse}, we use the following Theorem \cite[Example~3.4]{Boyd_2004}:

\begin{theorem}[Matrix Fractional Function]\label{matrix_frac_theorem}
    The function 
    \[
        f : \mathbb{R}^N \times \mathbb{S}^N_{++} \to \mathbb{R}, 
        \quad f(\xvec,\Ymat) = \xvec^\top \Ymat^{-1} \xvec
    \]
    is convex on its domain 
    $\mathrm{dom}\,f = \mathbb{R}^N \times \mathbb{S}^N_{++}$.
\end{theorem}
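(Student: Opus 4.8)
The plan is to prove Theorem~\ref{matrix_frac_theorem} by exhibiting the matrix fractional function $f(\xvec,\Ymat)=\xvec^\top\Ymat^{-1}\xvec$ as a pointwise supremum of functions that are affine (indeed linear) jointly in the pair $(\xvec,\Ymat)$; since a pointwise supremum of convex functions is convex, this immediately yields convexity of $f$ on $\mathbb{R}^N\times\mathbb{S}^N_{++}$. Concretely, I would first establish the variational identity
\[
\xvec^\top\Ymat^{-1}\xvec \;=\; \sup_{\yvec\in\mathbb{R}^N}\bigl(2\,\xvec^\top\yvec-\yvec^\top\Ymat\yvec\bigr),\qquad \forall\,\Ymat\in\mathbb{S}^N_{++}.
\]

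To justify this identity, fix $\xvec$ and $\Ymat\succ 0$ and note that $g(\yvec)\define 2\xvec^\top\yvec-\yvec^\top\Ymat\yvec$ is a strictly concave quadratic in $\yvec$, so it has a unique global maximizer, determined by $\nabla g(\yvec)=2\xvec-2\Ymat\yvec=0$, i.e.\ $\yvec^\star=\Ymat^{-1}\xvec$; substituting back gives $g(\yvec^\star)=2\xvec^\top\Ymat^{-1}\xvec-\xvec^\top\Ymat^{-1}\xvec=\xvec^\top\Ymat^{-1}\xvec$. Next, for each fixed $\yvec$ the map $(\xvec,\Ymat)\mapsto 2\xvec^\top\yvec-\yvec^\top\Ymat\yvec$ is linear in $\xvec$ and linear in the entries of $\Ymat$, hence affine, and in particular convex, in $(\xvec,\Ymat)$. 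Taking the supremum over $\yvec$ preserves convexity, so $f$ is convex on its domain, and the supremum is finite there precisely because $\Ymat\succ 0$ rules out directions along which $g$ is unbounded above. An equivalent route, closer to the SDP language used elsewhere in this work, is to invoke the Schur-complement characterization: for $\Ymat\succ 0$, $\begin{pmatrix}\Ymat & \xvec\\ \xvec^\top & t\end{pmatrix}\succeq 0 \iff t\ge \xvec^\top\Ymat^{-1}\xvec$, so $\mathrm{epi}\,f$ is the intersection of $\{(\xvec,\Ymat,t):\Ymat\succ 0\}$ with the inverse image of the positive-semidefinite cone under an affine map, hence convex, which again gives the claim.

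For the theorem itself there is no substantial obstacle; the only care needed is the (routine) verification that the stationary point of $g$ is the global maximizer --- immediate from strict concavity, which uses $\Ymat\succ 0$ --- and that the supremum is finite on all of $\mathbb{R}^N\times\mathbb{S}^N_{++}$. The work that this theorem is meant to enable is the subsequent convexity argument for the cost functions: when $\Rmat$ is diagonal and $h_{\text{R}}^{+}(\Lmat)\succ 0$, the matrix $\Kmat_{\text{affine}}(\wvec)$ from \eqref{eq:K-affine} is affine in $\wvec$ and positive definite on $[0,1]^N$, so composing Theorem~\ref{matrix_frac_theorem} with this affine map shows $\wvec\mapsto\zvec^\top\Kmat_{\text{affine}}^{-1}(\wvec)\zvec$ is convex for every fixed $\zvec$. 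Writing $\operatorname{tr}\bigl(\Kmat^{-1}(\dvec)\bigr)$ as a sum of $N$ such matrix-fractional terms (taking $\zvec$ to range over the standard basis vectors) makes the \ac{bmse} a sum of convex functions of $\wvec$, and writing $\lambda_{\max}\bigl(\Kmat^{-1}(\dvec)\bigr)=\sup_{\|\zvec\|_2=1}\zvec^\top\Kmat_{\text{affine}}^{-1}(\wvec)\zvec$ makes the \ac{wcbmse} a pointwise supremum of convex functions; both are therefore convex in $\wvec$, and since the box and $\ell_2$-ball constraints are affine/convex in $\wvec$, the relaxed sampling problem is convex.
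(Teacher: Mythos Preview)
Your proof is correct: both the variational identity $\xvec^\top\Ymat^{-1}\xvec=\sup_{\yvec}\bigl(2\xvec^\top\yvec-\yvec^\top\Ymat\yvec\bigr)$ and the Schur-complement epigraph characterization are valid and complete arguments, and your subsequent use of the theorem to deduce convexity of the \ac{bmse} and \ac{wcbmse} in $\wvec$ matches exactly how the paper applies it in Propositions~\ref{prop:bmse-convex} and~\ref{prop:bmse-wc-convex}.

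Note, however, that the paper does not supply its own proof of this theorem at all: it is quoted directly from \cite[Example~3.4]{Boyd_2004} and treated as a known background result. Boyd's own argument there is essentially your second route (the epigraph is described via a Schur-complement LMI, hence convex). Your first route, the pointwise-supremum-of-affines derivation, is a genuinely different and arguably more elementary proof; it avoids any appeal to semidefinite cones and makes the convexity manifest without differentiating or invoking Schur complements. Both approaches are standard, and either would be acceptable here.
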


\begin{proposition}
\label{prop:bmse-convex}
The function
$
\mathrm{tr}\big(\Kmat^{-1}_{\text{affine}}(\wvec)\big) $
is convex in nonnegative $\wvec$.
\end{proposition}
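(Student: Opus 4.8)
The plan is to write the trace as a sum of scalar matrix-fractional functions and invoke Theorem~\ref{matrix_frac_theorem}. First I would note that by \eqref{eq:K-affine}, the matrix $\Kmat_{\text{affine}}(\wvec)$ is affine in $\wvec$ and positive definite on the feasible set (since $\mu>0$ and $h_{\text{R}}^{+}(\Lmat)\succ\zerovec$), so $\Kmat^{-1}_{\text{affine}}(\wvec)$ is well defined there. Next I would expand the trace in terms of a fixed basis: $\mathrm{tr}\big(\Kmat^{-1}_{\text{affine}}(\wvec)\big)=\sum_{k=1}^{N}\evec_k^T\Kmat^{-1}_{\text{affine}}(\wvec)\evec_k$, where $\evec_k$ is the $k$th standard basis vector. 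Each summand has the form $f(\evec_k,\Kmat_{\text{affine}}(\wvec))$ with $f$ the matrix fractional function from Theorem~\ref{matrix_frac_theorem}.

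The key step is then to use that $f(\xvec,\Ymat)=\xvec^T\Ymat^{-1}\xvec$ is jointly convex in $(\xvec,\Ymat)$ on $\mathbb{R}^N\times\mathbb{S}^N_{++}$, and that convexity is preserved under composition with an affine map. Since $\wvec\mapsto(\evec_k,\Kmat_{\text{affine}}(\wvec))$ is affine (the first component constant, the second affine by \eqref{eq:K-affine}), each map $\wvec\mapsto\evec_k^T\Kmat^{-1}_{\text{affine}}(\wvec)\evec_k$ is convex on the feasible set $\{\wvec\in[0,1]^N\}$. Finally, a nonnegative sum of convex functions is convex, so $\mathrm{tr}\big(\Kmat^{-1}_{\text{affine}}(\wvec)\big)=\sum_{k=1}^{N}\evec_k^T\Kmat^{-1}_{\text{affine}}(\wvec)\evec_k$ is convex in $\wvec$ on the feasible set, which proves the proposition.

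I expect the only subtlety — not really an obstacle — to be making explicit that the domain is handled correctly: Theorem~\ref{matrix_frac_theorem} requires the second argument to be strictly positive definite, and one must verify that $\Kmat_{\text{affine}}(\wvec)\succ\zerovec$ for all feasible $\wvec$ (including $\wvec=\zerovec$), which holds precisely because $h_{\text{R}}^{+}(\Lmat)$ is assumed positive definite and $\mu>0$; the rank-one measurement terms only add positive semidefinite contributions. Everything else is a routine application of standard convexity-preserving operations (affine precomposition and nonnegative summation), so no heavy computation is needed.
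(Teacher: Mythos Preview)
Your proposal is correct and follows essentially the same route as the paper: expand the trace as $\sum_k \evec_k^T \Kmat^{-1}_{\text{affine}}(\wvec)\evec_k$, invoke Theorem~\ref{matrix_frac_theorem} for each summand, and use that affine precomposition and nonnegative sums preserve convexity. Your additional remark on verifying $\Kmat_{\text{affine}}(\wvec)\succ\zerovec$ via the positive definiteness of $h_{\text{R}}^{+}(\Lmat)$ is a welcome clarification but not a departure from the paper's argument.
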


\begin{IEEEproof}
The trace operation can be expressed as a sum of matrix–fractional functions. Let $\{\mathbf e_i\}_{i=1}^N$ denote the standard basis vectors. Then $\mathrm{tr}\big(\Ymat^{-1}\big)
= \sum_{i=1}^N  \mathbf e_i^{\top}\Ymat^{-1} \mathbf e_i$,
where each term $\mathbf e_i^{\top}\Ymat^{-1} \mathbf e_i$ is a matrix–fractional function and is convex in $\Ymat\succ 0$ (Theorem~\ref{matrix_frac_theorem}). Therefore, $\mathrm{tr}(\Ymat^{-1})$ is convex on $\mathbb{S}_{++}^N$. Since $\Kmat_{\text{affine}}(\wvec)$ is affine in $\wvec$, and the composition of a convex function with an affine mapping preserves convexity, it follows that $\mathrm{tr}\big(\Kmat^{-1}_{\text{affine}}(\wvec)\big)$ is convex in nonnegative $\wvec$.
\end{IEEEproof}

\begin{proposition}
\label{prop:bmse-wc-convex}
The function $
\lambda_{\max}\big(\Kmat^{-1}_{\text{affine}}(\wvec)\big)$
is convex in nonnegative $\wvec$.
\end{proposition}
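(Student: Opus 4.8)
\begin{IEEEproof}[Proof plan]
The plan is to reuse the same two ingredients as in Proposition~\ref{prop:bmse-convex}: a variational (supremum) representation that exhibits convexity in the matrix argument, followed by composition with the affine map $\wvec\mapsto\Kmat_{\text{affine}}(\wvec)$. First I would note that, since $h_{\text{R}}^{+}(\Lmat)\succ\zerovec$ and $\mu>0$, the matrix $\Kmat_{\text{affine}}(\wvec)$ in \eqref{eq:K-affine} is positive definite for every nonnegative $\wvec$, so $\Kmat_{\text{affine}}^{-1}(\wvec)$ is well defined (no pseudo-inverse is needed) and the map $\wvec\mapsto\Kmat_{\text{affine}}(\wvec)$ is affine with image in $\mathbb{S}_{++}^{N}$.

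Next I would establish that the scalar function $g:\mathbb{S}_{++}^{N}\to\mathbb{R}$, $g(\Ymat)\triangleq\lambda_{\max}(\Ymat^{-1})$, is convex. By the Rayleigh--Ritz characterization, $\lambda_{\max}(\Ymat^{-1})=\sup_{\|\vvec\|_2=1}\vvec^{\top}\Ymat^{-1}\vvec$. For each fixed unit vector $\vvec$, the map $\Ymat\mapsto\vvec^{\top}\Ymat^{-1}\vvec$ is exactly the matrix--fractional function of Theorem~\ref{matrix_frac_theorem} (with $\xvec=\vvec$ held fixed), hence convex on $\mathbb{S}_{++}^{N}$. A pointwise supremum of a family of convex functions is convex, so $g$ is convex on $\mathbb{S}_{++}^{N}$. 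Equivalently, one could argue $g(\Ymat)=1/\lambda_{\min}(\Ymat)$, use that $\lambda_{\min}(\cdot)$ is concave and positive on $\mathbb{S}_{++}^{N}$, and compose with the convex nonincreasing function $t\mapsto 1/t$ on $(0,\infty)$; either route gives the same conclusion.

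Finally I would invoke the fact that the composition of a convex function with an affine mapping is convex: since $\lambda_{\max}\big(\Kmat_{\text{affine}}^{-1}(\wvec)\big)=g\big(\Kmat_{\text{affine}}(\wvec)\big)$ with $g$ convex and $\wvec\mapsto\Kmat_{\text{affine}}(\wvec)$ affine into $\mathbb{S}_{++}^{N}$, the composite is convex in nonnegative $\wvec$, which is the claim. I do not anticipate a real obstacle here; the only point requiring care is justifying that the argument stays in the open domain $\mathbb{S}_{++}^{N}$ so that $g$ (and the supremum representation) is valid, which is guaranteed by the positive definiteness of $h_{\text{R}}^{+}(\Lmat)$ and $\mu>0$.
\end{IEEEproof}
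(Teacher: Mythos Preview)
Your proposal is correct and follows essentially the same route as the paper: both use the Rayleigh--Ritz representation $\lambda_{\max}(\Ymat^{-1})=\sup_{\|\vvec\|=1}\vvec^{\top}\Ymat^{-1}\vvec$, invoke Theorem~\ref{matrix_frac_theorem} to get convexity of each term, and then combine the pointwise-supremum and affine-composition rules. The only cosmetic difference is ordering (you first prove $g(\Ymat)=\lambda_{\max}(\Ymat^{-1})$ is convex on $\mathbb{S}_{++}^{N}$ and then compose with the affine map, whereas the paper composes first and takes the supremum last), and your explicit check that $\Kmat_{\text{affine}}(\wvec)\succ\zerovec$ plus the alternative $1/\lambda_{\min}$ argument are welcome additions.
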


\begin{IEEEproof}
From the variational form $\lambda_{\max}(\Amat) = \max_{\|\xvec\|=1} \xvec^\top \Amat \xvec$, for any fixed $\xvec$ and $\Ymat \succ 0$, define $g_\xvec(\Ymat) \triangleq \xvec^\top \Ymat^{-1} \xvec$. By Theorem~\ref{matrix_frac_theorem}, $g_\xvec(\Ymat)$ is convex in $\Ymat$. Since $\Kmat_{\text{affine}}(\wvec)$ is affine in $\wvec$, $g_\xvec(\Kmat_{\text{affine}}(\wvec))$ is convex in $\wvec$. Hence, $\lambda_{\max}(\Kmat^{-1}_{\text{affine}}(\wvec))\hspace{-0.1cm} =\hspace{-0.1cm} \max_{\|\xvec\|=1} g_\xvec(\Kmat_{\text{affine}}(\wvec))
$ 
is the pointwise supremum of convex functions, i.e. convex.
\end{IEEEproof}

\end{document}